\pgfplotsset{compat=1.18} 
\newtheorem*{theorem*}{Theorem}
\newtheorem{theorem}{Theorem}
\newtheorem{lemma}[theorem]{Lemma}
\crefname{lemma}{Lemma}{Lemmas}
\crefname{definition}{Definition}{Definitions}
\crefname{theorem}{Theorem}{Theorems}
\crefname{conjecture}{Conjecture}{Conjectures}
\crefname{section}{Section}{Sections}
\crefname{claim}{Claim}{Claims}
\crefname{appendix}{Appendix}{Appendices}
\crefname{figure}{Fig.}{Figs.}
\crefname{table}{Table}{Tables}
\crefname{proposition}{Proposition}{Propositions}
\crefname{corollary}{Corollary}{Corollaries}
\crefname{example}{Example}{Examples}
\crefname{remark}{Remark}{Remarks}
\providecommand\given{}
\newcommand\SetSymbol[1][]{%
    \nonscript\:#1\vert
    \allowbreak
    \nonscript\:
    \mathopen{}}
\DeclarePairedDelimiterX\Set[1]\{\}{%
    \renewcommand\given{\SetSymbol[\delimsize]}
    #1
}
\DeclarePairedDelimiter{\set}{\lbrace}{\rbrace}
\DeclarePairedDelimiter{\norm}{\lVert}{\rVert}
\DeclarePairedDelimiter{\of}{\lparen}{\rparen}
\DeclarePairedDelimiter{\sof}{\lbrack}{\rbrack}
\newcommand{\defeq}{\vcentcolon=}
\renewcommand{\leq}{\leqslant}
\renewcommand{\geq}{\geqslant}
\renewcommand{\bra}[1]{\langle{#1}\rvert}
\renewcommand{\ket}[1]{\lvert{#1}\rangle}
\renewcommand{\braket}[2]{\langle{#1}|{#2}\rangle}
\newcommand{\ketbra}[2]{\ket{#1}\bra{#2}}
\renewcommand{\proj}[1]{\ketbra{#1}{#1}}
\newcommand{\ct}{^{\dagger}}
\newcommand{\tp}{^{\mathsf{T}}}
\newcommand{\x}{\otimes}
\newcommand{\xp}[1]{^{\otimes #1}}
\newcommand{\C}{\mathbb{C}} 
\newcommand{\cont}{\mathrm{cont}} 
\newcommand{\wcont}{\mathrm{wcont}} 
\newcommand{\Brat}{\mathscr{B}} 
\newcommand{\A}{\mathcal{A}} 
\newcommand{\0}{\varnothing}
\let\S\relax
\DeclareMathOperator{\S}{S} 
\DeclareMathOperator{\Tr}{Tr} 
\newcommand{\AC}{\mathrm{AC}} 
\newcommand{\RC}{\mathrm{RC}} 
\newcommand{\Irr}[1]{\widehat#1} 
\newcommand{\Paths}{\mathrm{Paths}} 
\newcommand{\U}[1]{\mathrm{U}_{#1}} 
\newcommand{\pt}{\mathbin{\vdash}} 
\newcommand{\w}{0.5cm}
\newcommand{\bx}[3]{
  \draw[fill = white] #3 (#1*\w-\w/2,-#2*\w-\w/2) rectangle (#1*\w+\w/2,-#2*\w+\w/2);
}
\newcommand{\yd}[2][0.4]{%
  \begin{tikzpicture}[scale = #1, baseline={([yshift=-0.6ex]current bounding box.center)}]
    \foreach \li [count = \y] in {#2} {
      \foreach \x in {1,...,\li} {
        \bx{\x}{\y}{}
      }
    }
  \end{tikzpicture}
}
\newcommand\restr[2]{{
  \left.\kern-\nulldelimiterspace 
  #1 
  \vphantom{\big|} 
  \right|_{#2} 
  }}
\newcommand{\ent}{\mathrm{ent}}
\title{Entanglement recycling in two-step port-based teleportation}
\begin{document}
\marginparwidth=1.5cm
\author{Piotr Kopszak\textsuperscript{1}}
\email{piotr.kopszak2@uwr.edu.pl}
\address{\textsuperscript{1} Institute for Theoretical Physics, University of Wroc{\l}aw, Poland}

\author{Dmitry Grinko\textsuperscript{2,3,4}\hspace{-.25em}}
\email{d.grinko@uva.nl}
\address{\textsuperscript{2} QuSoft, Amsterdam, The Netherlands}
\address{\textsuperscript{3} Institute for Logic, Language, and Computation, University of Amsterdam, The Netherlands}
\address{\textsuperscript{4} Korteweg-de Vries Institute for Mathematics, University of Amsterdam, The Netherlands}

\author{Adam Burchardt\textsuperscript{2,5}\hspace{-.25em}}
\email{adam.burchardt@cwi.nl}
\address{\textsuperscript{5} Centrum Wiskunde \& Informatica (CWI), Amsterdam, The Netherlands}

\author{Maris Ozols\textsuperscript{2,3,4}\hspace{-.25em}}
\email{marozols@gmail.com}

\author{Micha{\l} Studzi{\'n}ski\textsuperscript{6}}
\address{\textsuperscript{6} International Centre for Theory of Quantum Technologies (ICTQT), University of Gda{\'n}sk, Poland}
\email{michal.studzinski@ug.edu.pl}

\author{Marek Mozrzymas \textsuperscript{1}}
\address{}

\begin{abstract}
	A protocol involving the repetitive (twofold, to be precise) application of PBT protocol to the same resource is studied. The quantities characterizing the  resulting protocol, so-called \textit{two-step PBT}, namely \textit{enatnglement fidelity} and \textit{success probability} are provided for two scenarios, relying on application of pretty-good measurement, i.e. deterministic and probabilistic PBT with non-EPR resource. This results show that two-step PBT is an accurate protocol, provided the resource is sufficiently large. In particular, the deterministic two-step PBT obtains fidelity that is remarkably close to the optimal MPBT fidelity for teleportation of two quantum states. Additionally, the \textit{recycling fidelity}, i.e. the quantity characterizing the degradation of the resource state is calculated for repetitive application of probabilistic protocol, for both EPR and optimized resource, showing that entanglement recycling with two-step PBT is possible in the former case as well.

\end{abstract}

\maketitle
\tableofcontents

\section{Introduction}
Quantum teleportation is one of the most remarkable applications of quantum entanglement, enabling the transmission of an unknown quantum state between two parties by employing only shared entanglement and classical communication~\cite{bennett_teleporting_1993}. Despite its wide range of applications, quantum teleportation has several limitations, such as the unitary correction step on receiver's side at the end of the protocol. This raises a natural question of whether this step can be avoided.

Port-based teleportation (PBT) introduced in 2008 by Hiroshima and Ishizaka \cite{ishizaka_asymptotic_2008,ishizaka_quantum_2009} eliminates this problem.  In this variant of quantum teleportation, the sender and receiver share a resource state composed of $N$ maximally entangled EPR pairs called \emph{ports}. The sender performs a joint quantum measurement on the quantum message register and their half of the ports, and then uses classical communication to send the measurement outcome to the receiver who only needs to select one of their ports to retrieve the original quantum message. Since the overall procedure commutes with an arbitrary unitary on the receiver's port, it can be used to simulate a universal programmable quantum processor \cite{yang2020optimal,ishizaka_asymptotic_2008,ishizaka_quantum_2009}. However, due to the no-programming theorem \cite{Nielsen1997}, a PBT protocol with a finite resource state cannot be perfect and its efficiency depends on the resource state size $N$ and the port dimension $d$.

Thus, two types of protocols emerge: a \emph{deterministic inexact} one (dPBT), where the state is always transmitted with non-unit fidelity, and \emph{probabilistic exact} one (pPBT), where teleportation has unit fidelity, but the protocol has a non-zero probability of failure.  This difference between the two kinds of protocols manifests in the structure of the set of measurements available to the sender. In dPBT, the sender implements a joint $N$-outcome POVM $\{\Pi_i\}_{i=1}^N$ on the message system and their half of the resource state, getting an outcome $i \in \set{1,\dotsc,N}$ that is transmitted through a classical channel to the receiver. To recover the original quantum message, the receiver just picks the right port indicated by the classical message $i$. In pPBT, the sender has one additional measurement operator $\Pi_0$ that corresponds to failure which occurs with probability $p_{\mathrm{fail}}$.

Instead of using $N$ EPR pairs, the two parties can share an arbitrary entangled state on $N$ ports each. Such optimized resource states can lead to quadric improvement in teleportation efficiency~\cite{ishizaka_quantum_2009,Studzinski2017,StuNJP,majenz2}. Nevertheless, in the asymptotic limit $N \to \infty$, every PBT scheme achieves faithful teleportation but with different rates of convergence~\cite{majenz2}. In every version of PBT the corresponding measurements are in general different. However, except for non-optimized pPBT, the square-root measurement is known to be optimal~\cite{Leditzky2022,Grinko2024}.

Analyzing the efficiency of PBT protocols has been a challenging problem and, for a long time, a comprehensive description -- especially in higher dimensions and asymptotic regimes -- was lacking.
The main obstacle to this is the mathematical complexity of the problem and the need for advanced representation-theoretic tools. This led to significant efforts to refine and optimize different versions of the protocol and to develop appropriate mathematical frameworks~\cite{wang_higher-dimensional_2016,Studzinski2017,StuNJP,MozJPA,majenz2,Leditzky2022}, with recent advancements involving mixed Schur--Weyl duality~\cite{grinko2023gelfandtsetlin}.

Understanding the ingredients of optimal PBT protocols and their efficiency, along with the lack of the unitary correction,
has garnered significant interest. This has led to extensive studies of PBT and new advancements in quantum information theory. The PBT framework serves as a model for a universal programmable quantum processor~\cite{ishizaka_asymptotic_2008} and establishes connections with quantum cryptography and instantaneous non-local computation~\cite{beigi_konig}.
PBT protocols have played a crucial role in linking interaction complexity with entanglement in non-local computation and holography~\cite{may2022complexity}, as well as in demonstrating the relationship between quantum communication complexity advantages and Bell inequality violations~\cite{buhrman_quantum_2016}. They have also been essential in deriving fundamental limits for quantum channel discrimination through PBT stretching protocols~\cite{pirandola2019fundamental} and have contributed to various other key findings~\cite{pereira2021characterising,quintino2021quantum}. PBT serves also as a subroutine in important higher-order quantum operation methods, such as storing and retrieving quantum programs in quantum memory~\cite{PhysRevLett.122.170502}, quantum learning \cite{bisio2010optimal}, improving unitary programming methods~\cite{grosshans2024multicopyquantumstateteleportation}, the equivalence of PBT and the unitary estimation task~\cite{yoshida2024onetoonecorrespondencedeterministicportbased}, and many others~\cite{quintino2019probabilistic, quintino2019reversing, quintino2022deterministic, yoshida2023universal}. Recently, it was shown that all variants of the PBT protocol can be efficiently implemented as quantum circuits~\cite{Grinko2024,fei2023efficientquantumalgorithmportbased,PRXQuantum.5.030354}, paving a way for practical applications in quantum computing.

\begin{figure}[h]
	\centering
	\includegraphics[width=0.8\linewidth,page=1]{./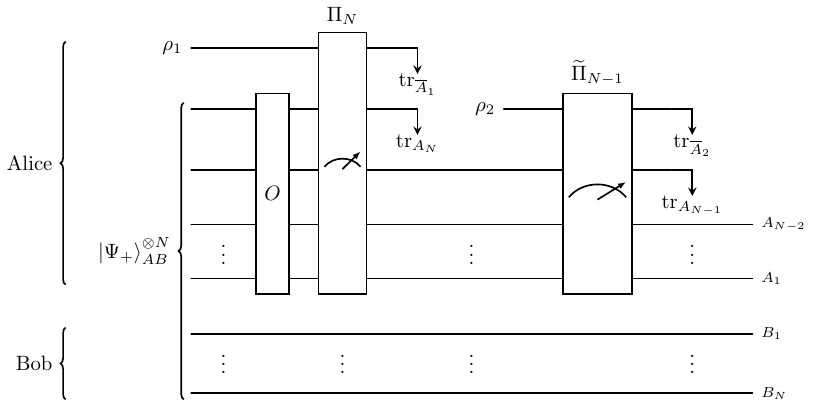}
	\caption{Two-step PBT protocol at a glance. Initially, Alice and Bob share an optimized resource state $\ket{\Psi}_{AB} = (O_A \otimes I_B) \ket{\Psi^+}_{AB}$ where registers $A = \set{A_1, \dotsc, A_N}$ and $B = \set{B_1, \dotsc, B_N}$ consist of $N$ qudits each.
		Alice's goal is to teleport qudit states $\rho_{1}=\Tr_{\bar A_2}\rho_{\bar A}$ and $\rho_{2}=\Tr_{\bar A_1}\rho_{\bar A}$ stored in her registers $\bar A_1$ and $\bar A_2$.
		First, Alice performs a POVM $\Pi$ on $\bar A_1 \cup A$.
		If the measurement outcome is $i \neq 0$(i.e. in the deterministic inexact scheme, or in the case of success in probabilistic exact), the first state ${\rho_1}$ is
		teleported to Bob's register $B_i$ (we depict the case when $i = N$). After discarding $\set{\bar A_1, A_i, B_i}$, Alice and Bob share a resource consisting of
		$N-1$ remaining registers on each side.
		In the second round, Alice performs a POVM $\widetilde{\Pi}$ on her registers $\bar A_2 \cup (\{A_1,\ldots,A_N\} \setminus A_i)$. Measuring the outcome $j \neq 0$ in this round indicates success (we depict the case when $j = N-1$). We call this protocol two-step PBT.
	}
	\label{fig:protocol}
\end{figure}

\section{Two-step PBT: description of the protocol}

Since the shared entangled resource employed in port-based teleportation is valuable, one is interested in the possibility to reuse it after one  teleportation, for another implementation of PBT protocol. In particular, a two successful executions of PBT protocol on the same resource yield a protocol enabling teleportation of two quantum states. We call this type of protocols \emph{two-step PBT}, and it is described as follows:

\begin{enumerate}
	\item Alice performs a POVM measurement $\set{\Pi_i^{A \bar{A}_1}}_{i=0}^N$ on $A \bar{A}_1$ which produces outcome $i \in \set{0,\dotsc,N}$. Let us assume success in the first round, meaning $i\neq 0$ (this holds trivially for inexact protocol, and  is reasonable for probabilistic exact one, since $p_{\mathrm{fail}}\rightarrow 0$ as $N \rightarrow \infty$).
	\item Alice sends the outcome $i$ by a classical channel to Bob who recovers the transmitted state $|\psi_1\rangle$ from his $i$-th port.
	\item Both parties apply a transposition (SWAP) between the 1st and the $i$-th port. They will not use the 1st port in subsequent round of the protocol anymore and use only the remaining $N-1$ ports; without the loss of generality one can assume that $i=N$.
	\item Alice performs the second POVM $\set{\widetilde\Pi_i^{A \bar{A}_1}}_{i=0}^{N-1}$ on $\set{A_1, \dotsc, A_{N-1}} \cup \bar{A}_2$ and communicates the outcome $i' \in \set{1,\dotsc,N-1}$ to Bob via the classical channel, implementing a PBT protocol on the remaining resource.

\end{enumerate}

This procedure can be formalized as a quantum subchannel\footnote{This map is completely positive but not necessarily trace-preserving since each step of the two-step PBT can fail with some probability. In case of failure the protocol does not produce a quantum output, which amounts to reducing the trace.}:
\begin{equation}
	\begin{split}
		\label{eq:channel_N0}
		\mathcal{N}(\rho_{\bar A}) \defeq &
		\sum_{i = 1}^N \sum_{\substack{j = 1 \\ j \neq i}}^{N}
		\Tr_{A B^c_{i,j} \bar A}
		\sof*{
			\sqrt{\widetilde{\Pi}_{j}} {\sqrt{\Pi_i}} ( \Psi_{AB} \otimes \rho_{\bar A} ) {\sqrt{\Pi_i}} \sqrt{\widetilde{\Pi}_{j}}
		}_{B_i,B_j \to \bar B_1, \bar B_2}
	\end{split}
\end{equation}
which is depicted in Figure~\ref{fig:protocol}. This protocol thus falls within the category of multiport-based teleportation protocols~\cite{stud2020A}
. We usually quantify the performance of such protocols via two figures of merit---\emph{entanglement fidelity} and \emph{average success probability}, which is usually called \textit{success probability}, although previously some particular cases of such protocols were studioed with respect to the degradation of resource state~\cite{strelchuk_generalized_2013, Studziński_2022}.

\subsection{Quantifying the performance of two-step PBT}

\textit{Entanglement fidelity}
\begin{equation}
	\label{eq:entanglement_fidelity0}
	F_{\mathrm{ent}}(\mathcal{N}) \defeq \Tr \sof*{
	\of[\big]{
	\psi^+_{\bar{B}_1R_1} \otimes \psi^+_{\bar{B}_2R_2}
	} \cdot
	\of[\big]{
	\mathcal{N}_{\bar{A}_1\bar{A}_2 \to \bar{B}_1\bar{B}_2} \otimes
	\mathcal{I}_{R_1R_2}
	}
	\sof[\big]{
	\psi^+_{\bar{A}_1R_1}\otimes \psi^+_{\bar{A}_2R_2}
	}
	},
\end{equation}
characterizes the average fidelity $f$ of the teleportation subchannel $\mathcal{N}$ as $f=\frac{dF+1}{d+1}$ where $d$ is the dimension of the teleported state.
The \textit{average probability of success} of the two-step PBT is given by
\begin{equation}
	\label{eq:p_succ_0}
	p_{\mathrm{succ}}(\mathcal{N}) \defeq \Tr
	\sof[\big]{
	\mathcal{N}_{\bar{A}_1\bar{A}_2 \to \bar{B}_1\bar{B}_2}[I/d^2]
	}.
\end{equation}

Our protocols in general are probabilistic
(i.e., $p_{\mathrm{succ}} < 1$) and inexact
(i.e., $F_{\mathrm{ent}}/p_{\mathrm{succ}} < 1$). However, there are exact probabilistic protocols
(i.e., $F_{\mathrm{ent}}/p_{\mathrm{succ}} = 1$ and $p_{\mathrm{succ}} < 1$) and inexact deterministic
(i.e., $F_{\mathrm{ent}}/p_{\mathrm{succ}} < 1$ and $p_{\mathrm{succ}} = 1$).

In this work, we concentrate on the repetitive implementation of particular POVM, the \textit{pretty good measurement} (PGM) which plays a central role in port-based teleportation schemes. In standard form, it consists of $N$ elements $\{\Pi_i\}_{i=1}^N$ and an additional element $\Pi_0$, which can either be added to the POVM yielding $\Pi=\{\Pi_i\}_{i=0}^N$ or distributed among each effect, thus introducing POMV  $\Pi^\star_i\coloneq\Pi_i+\frac{\Pi_0}{N}$. Depending on the POVM and the resource state, the resulting one-step protocol is exact probabilistic (POVM $\Pi$), inexact deterministic (POVM $\Pi^\star$). PGM is known to be optimal for one-step PBT~\cite{Leditzky2022}.

However, it is not obvious how the repetitive application of two PGMs $\Pi$ and $\widetilde\Pi$ affects entanglement fidelity and the probability of success of the two-step PBT protocol.

Note that the teleportation subchannel~\eqref{eq:channel_N0} is the same for both scenarios, since both employ PGM.
However, in the probabilistic scheme it does not preserve trace, so the entanglement fidelity corresponding to it has to be normalized by the probability of success.

\subsection{Degradation of the resource state in two-step probabilistic PBT}

The earlier approach to analyze the possibility of repetitive application of the PBT protocol \cite{strelchuk_generalized_2013, Studziński_2022} was based on the analysis of degradation of the resource state after each round of PBT teleportation, and with that approach it was possible to show that it is possible to reapply the deterministic protocol arbitrarily many times, provided the resource is large enough. However, the answer was indirect, since it does not tell how well the teleportation employing the recycled resource is performed, unlike the method described in the prevoious paragraph .

The idea of degradation is formalized via a quantity called \textit{recycling fidelity}\footnote{Note that the notion of recycling fidelity is different from entanglement fidelity.}, which is defined as the overlap between the post-teleportation resource state together with a half of the EPR pair that was teleported $\Psi_{\mathrm{out}}^{(i)}$, and an idealized state $\Psi_{\mathrm{id}}^{(i)}$, which would perfectly suit the successive application of the PBT protocol:
\begin{equation}
	\label{eq:recycling_fidelity_00}
	F_{\mathrm{rec}}(N, d) \defeq \sum_{i=0}^N p_i F(\Psi_{\mathrm{out}}^{(i)}, \Psi_{\mathrm{id}}^{(i)}),
\end{equation}
where $F(\Psi, \Phi)$ is the standard fidelity between density matrices $\Psi$ and $\Phi$. This quantity can be expressed in terms of conditional fidelities, depending on the success or failure in the preceding step
\begin{equation}
	F_{\mathrm{rec}}(N, d) = p_{\mathrm{fail}}F_{\mathrm{rec}}^{\mathrm{fail}}(N, d) + p_{\mathrm{succ}}	F_{\mathrm{rec}}^{\mathrm{succ}}(N, d)
\end{equation}
where $p_{\mathrm{fail}}=p_0, p_\mathrm{succ}=	\sum_{ i=1}^{N} p_i$ and
\begin{equation}
	F_{\mathrm{rec}}^{\mathrm{fail}}= \frac{ F(\Psi_{\mathrm{out}}^{(0)}, \Psi_{\mathrm{id}}^{(0)})} {p_{\mathrm{fail}}},	\quad F_{\mathrm{rec}}^{\mathrm{succ}}= \frac{  \sum_{i=1}^N p_i F(\Psi_{\mathrm{out}}^{(i)}, \Psi_{\mathrm{id}}^{(i)})}{p_{\mathrm{succ}}}.
\end{equation}
One has to observe that if the failure is encountered in the first round it is no longer a  two-step protocol, since the teleportation of the first subsystem failed. Nonetheless it is still valuable to study the conditional fidelity after failure, in order to know if the remaining resource is suitable for at least another round of the teleportation. On the other hand, this quantity doed not influence the total recycling fidelity since it is known \cite{ishizaka_quantum_2009, Studzinski2017, majenz2} that $p_0$ vanishes in the limit of large $N$.

Note, that two quantities $F_\mathrm{ent}$ and $F_\mathrm{rec}$ are related indirectly. The low discrepancy between the resource after the teleportation of half of an EPR state and the ideal resource suited for subsequent PBT implementation means that the faithful teleportation of two maximally entangled states is possible, in consequence meaning that such approach provides a faithful teleportation channel for any two-quantum states (thanks to the connection between the entanglement fidelity and average teleportation fidelity).

In this paper the above reasoning is employed to study the reapplication of another probabilistic protocol to the already used resource. It is especially important in the case of probabilistic scheme with EPR resource, since the measurement in this scenario is not a PGM so the results quantifying $F_{\mathrm{ent}}$ and $p_{\mathrm{succ}}$ do not apply there.

\section{Summary of results}

The study of entanglement fidelity $F_{\mathrm{ent}}(\mathcal{N})$ for two-step PBT channel is the main result of this paper, together with the probability of success in the case of repetitive application of probabilistic measurements. This is the novel result that quantifies explicitly the two measures describing the performance of the teleportation channel in regard to repetitive application of PBT protocol.

\subsection{The quality of PGM-based two-step PBT}

The main result of this paper is a full description of both figures of merit (entanglement fidelity and average probability of success) in the case of two step PBT employing PGM measurements. The results are summed up by the following Theorems.

\begin{theorem}
	\label{thm:ent_fid_inf}
	The unnormalised entanglement fidelity of PGM-based two-step PBT channel $\mathcal{N}$ is given by
	\begin{equation}
		\label{eq:ent_fid_inf}
		F_\ent(\mathcal{N}) = \frac{1}{d^4} v\tp Mv,
	\end{equation}
	where $M$ is an analogue of teleportation-matrix~\cite{Mozrzymas2021optimalmultiport}, and the vector $v$ depends on the preparation procedure on Alice's side.
\end{theorem}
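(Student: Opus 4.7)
The plan is to compute $F_\ent(\mathcal{N})$ directly from definitions~\eqref{eq:channel_N0} and~\eqref{eq:entanglement_fidelity0}, apply the ricochet identity for EPR states to eliminate the reference systems $R_1,R_2$, and then invoke the mixed Schur--Weyl machinery of~\cite{grinko2023gelfandtsetlin} to expose a bilinear form in the preparation data.

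First, I would substitute $\mathcal{N}$ into $F_\ent$ and repeatedly use the identity $(X\x I)\ket{\psi^+}=(I\x X\tp)\ket{\psi^+}$ on the four EPR pairs $\psi^+_{\bar A_k R_k}$ and $\psi^+_{\bar B_k R_k}$. This contracts the $R$-registers and collapses the fidelity to a scalar depending only on the resource $\Psi_{AB}$, the PGM factors $\sqrt{\Pi_i},\sqrt{\widetilde\Pi_j}$, and maximally entangled projectors on the used ports $B_i,B_j$. Each EPR pair contributes a factor of $1/d$, producing the claimed overall prefactor $1/d^4$.

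Second, the resource can be written $\ket{\Psi}_{AB}=(O_A\x I)\ket{\psi^+}_{AB}$ with $O_A$ in the commutant of the diagonal $\U{d}$-action on the $N$ ports (any optimal preparation respects this symmetry), and the PGM elements built from $\rho=\sum_i\rho_i$ lie in the same commutant. By Schur--Weyl duality, both decompose into isotypic blocks indexed by Young diagrams $\lambda\pt N$ (and, for the second round, by additional branching data recording the port removed between rounds). The coefficients of $O_A$ in this basis are precisely the components of the vector $v$ of the theorem.

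Third, combining these reductions, the double sum over measurement outcomes $(i,j)$ collapses -- using the $\S_N$-symmetry and Schur orthogonality -- into a sum over pairs of irreducible labels, each contributing a term $v_{\alpha}M_{\alpha\beta}v_{\beta}$. The matrix $M_{\alpha\beta}$ is assembled from matrix elements of $\rho^{-1/2}$ taken on the two-round port configuration, together with the $\U{d}$-equivariant isometries embedding $(N-1)$-port irreducibles into $N$-port ones; it reduces to the teleportation matrix of~\cite{Mozrzymas2021optimalmultiport} when the second round is trivialized. This yields $F_\ent(\mathcal{N})=\tfrac{1}{d^4}v\tp Mv$.

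The main technical obstacle lies in carrying out step two in the two-round setting. The one-round block diagonalization is classical, but here $\sqrt{\Pi_i}$ and $\sqrt{\widetilde\Pi_j}$ act on port systems of different sizes ($N$ versus $N-1$), so the branching of $\U{d}$-irreducibles under port removal -- and its interaction with the inner product between $\sqrt{\Pi_i}$ and $\sqrt{\widetilde\Pi_j}$ -- must be tracked carefully throughout. Verifying that the result is a genuine quadratic form in $v$, with no surviving cross-terms between distinct irrep labels, is where the bulk of the representation-theoretic work concentrates; the explicit combinatorial description of $M$ in terms of Clebsch--Gordan or branching coefficients will then fall out of this bookkeeping.
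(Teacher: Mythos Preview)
Your high-level plan is sound and lands on the same structure the paper obtains: the ricochet/tensor-network reduction to a single trace (the paper's \cref{eq:fidelity_formula}), followed by a Schur--Weyl-type block decomposition, followed by bookkeeping that organises everything into $v\tp Mv$. Where your proposal diverges from the paper is precisely at the point you flag as the ``main technical obstacle'', and the paper resolves it differently from what you sketch.

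You propose to treat $\sqrt{\Pi_i}$ and $\sqrt{\widetilde\Pi_j}$ as living on port systems of sizes $N$ and $N-1$ respectively, and to mediate between them via $\U{d}$-branching and Clebsch--Gordan data. The paper instead embeds \emph{both} measurements (and $O_A$, and the two EPR projectors $\tau_2$) into a single matrix algebra $\A_{N,2}^d$ acting on all $N+2$ qudits $A_1,\dots,A_N,\bar A_1,\bar A_2$. The second-round measurement is realised inside this algebra as $\widetilde\Pi_{N-1}=\sum_{\nu,S}\widetilde G_{\nu,S}\,U\sigma_N U\,\widetilde G_{\nu,S}$ with $U=\sigma_{N-1}\sigma_{N+1}$, i.e.\ as a conjugate of the same contraction $\sigma_N$ that appears in the first-round PGM. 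This puts every ingredient in one Gelfand--Tsetlin basis for $\A_{N,2}^d$, the relevant irreps are those of type $(\nu,\0)$ with $\nu\pt_d N-2$, and the required overlaps reduce to Young--Yamanouchi formulas for $\sigma_{N-1}$ and $\sigma_{N+1}$ acting on explicit path vectors. No separate branching calculus is needed; the ``cross-terms'' you worry about vanish because $\braket{w^\nu_{S,a,r}}{v^\nu_S}=0$ unless $a=r$.

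Your branching route is not wrong in principle, but it would force you to recompute exactly this embedding by hand, irrep by irrep, and the claim that the result is a clean quadratic form in $v$ with no off-block residue would remain to be checked. The $\A_{N,2}^d$ framework is what makes that verification mechanical. Also, two small corrections: the irrep labels that actually index the sum defining $M$ are $\nu\pt_d N-2$ (not $\lambda\pt N$), and the $1/d^4$ prefactor does not come simply from four EPR normalisations but emerges after the $N(N-1)/d^{N+4}$ in \cref{eq:fidelity_formula} is absorbed against $d_\nu m_\nu$ and the various $d_{\lambda\cup a}$, $m_{\lambda\cup a}$ factors in the overlaps.
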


Since in deterministic inexact scheme the channel is trace-preserving no normalisation is required and thus one can optimise the resource state in order to suit two-step scenario, meaning that the entanglement fidelity reduces to the maximal eigenvalue of $M$.

\begin{theorem}[\cref{thm:two_step_psucc}, informal]
	\label{thm:p_succ_inf}
	The average probability of success in the PGM-based two-step probabilistic PBT coincides with the average success probability of the optimal multi-port based probabilistic scheme.
\end{theorem}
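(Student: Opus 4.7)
The plan is to start from the definition \eqref{eq:p_succ_0}, substitute the maximally mixed input $\rho_{\bar A} = I/d^2$, and massage the resulting trace until it visibly matches the known success probability of probabilistic MPBT for two teleported states. Explicitly, plugging \eqref{eq:channel_N0} into \eqref{eq:p_succ_0} and using cyclicity of the trace together with the fact that the partial traces over $B^c_{i,j}$ and $\bar A$ disappear once we take the full trace, one obtains
\begin{equation}
	p_{\mathrm{succ}}(\mathcal{N}) \;=\; \frac{1}{d^2}\sum_{i=1}^N\sum_{\substack{j=1\\ j\neq i}}^N \Tr\sof*{\widetilde{\Pi}_j\,\Pi_i\,\of*{\Psi_{AB}\otimes \tfrac{I_{\bar A}}{d^2}}\,\Pi_i}.
\end{equation}
So the goal reduces to identifying $\sum_{i,j}\Pi_i^{A\bar A_1}\,\widetilde{\Pi}_j^{(A\setminus A_i)\bar A_2}\,\Pi_i^{A\bar A_1}$, evaluated on the resource, with the MPBT probabilistic POVM element.

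First I would write both PGMs in their canonical form $\Pi_i = \rho^{-1/2}\sigma_i\rho^{-1/2}$ and $\widetilde{\Pi}_j = \tilde\rho^{-1/2}\tilde\sigma_j\tilde\rho^{-1/2}$, where $\sigma_i, \tilde\sigma_j$ are the partial EPR projectors on the respective port pairs used in each round and $\rho,\tilde\rho$ are the associated sums. Using the fact that the operators $\sigma_i$ mutually commute with each other on different ports and that $\rho$ lies in the commutant of the diagonal $U^{\otimes N}$ action (mixed Schur--Weyl structure), the inner $\Pi_i$ can be combined with $\widetilde{\Pi}_j$ into a single square-root style expression. The core identity to verify is then that, once summed over $i,j$ with $i\neq j$, the product collapses to the multiport PGM effect
\begin{equation}
	\Theta \;=\; \Sigma^{-1/2}\of*{\sum_{i\neq j}\sigma_i\otimes \tilde\sigma_j}\Sigma^{-1/2},\qquad \Sigma = \sum_{i\neq j}\sigma_i\otimes\tilde\sigma_j,
\end{equation}
acting on the appropriate pair of message registers, which is precisely the MPBT square-root measurement for teleporting two states simultaneously.

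The main obstacle lies in justifying that the post-first-round conditioning (tracing out the used port $i$ and applying the SWAP so that the remaining resource looks like a fresh $(N-1)$-port state) does not spoil the factorization: intuitively, the second PGM $\widetilde\Pi$ is constructed relative to the reduced resource, but on the $I/d^2$ input it acts identically to the $j$-th block of the two-port MPBT POVM. This step is essentially a representation-theoretic symmetry argument. Having identified $\sum_{i\neq j}\Pi_i\widetilde\Pi_j\Pi_i$ with the MPBT element $\Theta$, I would conclude by invoking the known evaluation of $\Tr[\Theta\,\Psi_{AB}\otimes I/d^2]$ as the optimal probabilistic MPBT success probability for $k=2$ messages from, e.g., \cite{stud2020A,Mozrzymas2021optimalmultiport}, yielding the claimed coincidence. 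The formal statement and a complete derivation are deferred to \cref{thm:two_step_psucc}.
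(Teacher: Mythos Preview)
Your approach diverges substantially from the paper's, and the central step you rely on is not justified and is most likely false as stated.

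You aim to show an operator-level identity: that $\sum_{i\neq j}\Pi_i\,\widetilde{\Pi}_j\,\Pi_i$ collapses to the multiport PGM effect $\Theta = \Sigma^{-1/2}\bigl(\sum_{i\neq j}\sigma_i\otimes\tilde\sigma_j\bigr)\Sigma^{-1/2}$. You acknowledge this as ``the main obstacle'' but then dismiss it as ``essentially a representation-theoretic symmetry argument.'' There is no reason to expect the sequential composition of two one-port PGMs to equal the joint two-port MPBT PGM as operators; the two are genuinely different measurements (they do not even act on the same tensor factors in the same way, since $\widetilde{\Pi}_j$ is built after discarding $A_i$). The paper never establishes or uses such an identity. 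What is actually shown is only that the two protocols happen to yield the same \emph{number} when traced against the optimized resource with maximally mixed input, and this coincidence only emerges after the specific choice $f_\mu = m_\mu^2/\sum_\nu m_\nu^2$ is plugged in; for a generic resource $\set{f_\mu}$ the general formula in \cref{thm:two_step_psucc} is derived first and there is no claim it matches MPBT.

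The paper's actual argument is a direct Gelfand--Tsetlin computation that never touches MPBT. The key structural input is that for the probabilistic PGM one has, in every irrep $\Lambda$ of $\A^d_{N,1}$ (respectively $\A^d_{N-1,1}$),
\[
\sum_{j\geq 1}\Pi_j^{\Lambda} = \begin{cases} I & \Lambda=(\lambda,\0),\\ 0 & \Lambda=(\mu,\square),\end{cases}
\qquad
\sum_{j\geq 1}\widetilde{\Pi}_j^{\tilde\Lambda} = \begin{cases} I & \tilde\Lambda=(\nu,\0),\\ 0 & \tilde\Lambda=(\lambda,\square).\end{cases}
\]
This turns $\sum_{j}\widetilde{\Pi}_j$ into a projector onto a union of irreps, whose partial trace over $\bar A_2$ is computed via \cref{lem:partial_trace_E}. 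The first-round piece $\Tr_{\bar A_1,A_N}[\sqrt{\Pi_N}\,OO^\dagger\sqrt{\Pi_N}]$ is evaluated explicitly using the vectors $\ket{w^\lambda_S}$ from \cref{eq:sub1}. Combining these gives the closed formula for $p_{\mathrm{succ}}$ in terms of $\set{f_\mu}$; only then, substituting the optimal $f_\mu$ and using $d\,m_\lambda=\sum_a m_{\lambda\cup a}$ together with $\sum_\mu m_\mu^2=\binom{N+d^2-1}{d^2-1}$, does one arrive at $N(N-1)/\bigl((N+d^2-1)(N+d^2-2)\bigr)$, which is recognized \emph{a posteriori} as the MPBT value. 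Your proposal skips this computation entirely and replaces it with an unproven operator identification.
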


\subsection{Resource degradation results}

In this paper we also employ this approach to study the probabilistic regime. It enabled the study of two-step PBT with EPR resource, which is not quantified by the main result of this paper, since the POVM employed is not a PGM.

\begin{theorem}[$F_{\mathrm{rec}}$ for EPR resource]
	\label{thm:rec_fid_inf_std}
	After one round of probabilistic PBT with EPR resource, the degradation of the resource is small for big $N$, enabling the implementation of second round of probabilistic protocol, implying that two-step PBT is valid for other POVMs than a PGM.
\end{theorem}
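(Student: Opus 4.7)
The plan is to compute $F_{\mathrm{rec}}(N,d)$ directly from its definition in \eqref{eq:recycling_fidelity_00} and show that it tends to $1$ as $N\to\infty$. Because both the EPR resource $\ket{\Psi^+}^{\x N}$ and the Hiroshima--Ishizaka probabilistic POVM are invariant under a simultaneous permutation of the $N$ ports on Alice's and Bob's sides, every conditional fidelity $F(\Psi_{\mathrm{out}}^{(i)},\Psi_{\mathrm{id}}^{(i)})$ for $i\in\set{1,\dotsc,N}$ equals a common value. The computation therefore reduces to analyzing a single success outcome, say $i=N$, and then combining it with the failure contribution, whose weight $p_0$ is already known to vanish as $N\to\infty$.

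The first step is to write down the post-measurement resource state on the surviving $A\setminus A_N$, $B\setminus B_N$ ports together with Bob's retrieved EPR half. Using the explicit form $\sqrt{\Pi_i}\propto \rho_A^{-1/2}\,P^+_{\bar A A_i}\,\rho_A^{-1/2}$ with $\rho_A=\Tr_B (\Psi^+_{AB})^{\x N}$, applying it to $(\Psi^+_{AB})^{\x N}\x \psi^+_{\bar A R_1}$ and tracing out the teleported systems produces a density operator whose non-trivial part decomposes naturally in the mixed Schur--Weyl basis for $\U{d}\x \S_{N-1}$; the branching rule $\S_N\downarrow \S_{N-1}$ is what links the $N$-port projectors appearing inside $\sqrt{\Pi_i}$ to the $(N-1)$-port structure relevant for the next round.

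The overlap $F(\Psi_{\mathrm{out}}^{(N)},\Psi_{\mathrm{id}}^{(N)})$ is then obtained by expanding the ideal state $\ket{\Psi^+}_{\bar B_1 R_1}\x\ket{\Psi^+}^{\x(N-1)}$ in the same basis: a tensor product of EPR pairs carries weights proportional to $m_\lambda d_\lambda/d^{N-1}$ on each $\S_{N-1}$-irrep $\lambda$, which is a Plancherel-type distribution. The overlap then reduces to a positive sum over Young diagrams of ratios of $\S_N$- and $\S_{N-1}$-irrep dimensions coming from the branching rule, and these are exactly the combinatorial quantities that have already been evaluated in the degradation analyses of \cite{strelchuk_generalized_2013,Studziński_2022} and in the success-probability calculations of probabilistic PBT \cite{ishizaka_quantum_2009,Studzinski2017}.

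The main obstacle is controlling the tail of this sum coming from $\S_N$-irreps whose shape is atypical under the Plancherel measure on partitions of $N$. These are precisely the diagrams responsible for the $O(d^2/N)$ failure probability of probabilistic PBT, so their total Plancherel weight is already small; it remains to verify that they also carry a correspondingly small deficit in the conditional overlap, which can be done using standard hook-length estimates for dimension ratios together with the asymptotic bounds familiar from Schur--Weyl duality. Combined with $p_0\to 0$, which absorbs any ill-behaved failure branch in \eqref{eq:recycling_fidelity_00}, this yields $F_{\mathrm{rec}}(N,d)\to 1$ and shows that the recycled resource still supports a high-fidelity second round of probabilistic PBT, establishing the claim.
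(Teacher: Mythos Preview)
Your proposal starts from the wrong POVM. The Hiroshima--Ishizaka probabilistic measurement for the EPR resource is \emph{not} a PGM of the form $\rho^{-1/2}\rho_i\rho^{-1/2}$; it is the factorized operator
\[
\Pi_i \;=\; \psi^+_{A_i\bar A_1}\otimes \Theta_{i^c},
\qquad
\Theta_{i^c}=\sum_{\lambda\vdash_d N-1}\frac{d}{\gamma(\lambda)}\,P^{i^c}_\lambda,
\]
see \eqref{eq:standard_pPBT_measurement}. (Your expression $\rho_A^{-1/2}P^+_{\bar A A_i}\rho_A^{-1/2}$ with $\rho_A=\Tr_B(\Psi^+)^{\x N}=I/d^{N}$ would be proportional to $P^+_{\bar A A_i}$ alone and is not a POVM.) This factorized structure is exactly what the paper exploits and what your outline misses: because $\psi^+$ is rank one and the $P_\lambda$ are mutually orthogonal projectors on the \emph{remaining} $N-1$ systems, one has immediately
\[
\sqrt{\Pi_N}=\psi^+_{A_N\bar A_1}\otimes\sum_{\lambda\vdash_d N-1}\sqrt{\tfrac{d}{\gamma(\lambda)}}\,P^{N^c}_\lambda,
\]
and the overlap $F_{\mathrm{rec}}^{\mathrm{succ}}=\Tr(\sqrt{\Pi_N}\sigma_N)/\sqrt{d^{N+1}\Tr\Pi_N}$ reduces to two one-line traces over $(N-1)$ qudits, giving the closed form \eqref{eq:F_succ_std_pPBT_explicit}. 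No $\S_N\downarrow\S_{N-1}$ branching, no Plancherel tail estimates, and no hook-length asymptotics are needed, because the $N$th port decouples through $\psi^+$ from the outset and all Young projectors already live on the $N-1$ surviving systems.

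A second, smaller point: the paper does not actually carry out an analytic $N\to\infty$ argument of the kind you sketch. It derives the explicit formulas for $F_{\mathrm{rec}}^{\mathrm{succ}}$ and $F_{\mathrm{rec}}^{\mathrm{fail}}$ (Theorems~\ref{thm:F_succ_std_pPBT} and \ref{thm:rec_fidelity_fail_std_pPBT}), evaluates them numerically, and combines the observation $F_{\mathrm{rec}}^{\mathrm{succ}}\to 1$ with the known fact $p_{\mathrm{fail}}\to 0$ to conclude $F_{\mathrm{rec}}\to 1$. Your Plancherel-measure tail analysis is therefore both unnecessary for this argument and aimed at the wrong operator.
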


\begin{theorem}[$F_{\mathrm{rec}}$ for optimised resource]
	\label{thm:rec_fid_inf_opt}
	After one round of probabilistic PBT with optimised resource, the degradation of the resource does not decrease for large N, meaning that the resource is significantly different from the original one.
\end{theorem}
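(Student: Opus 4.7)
The plan is to compute $F_{\mathrm{rec}}^{\mathrm{succ}}(N,d)$ for the optimized resource in closed form using mixed Schur--Weyl duality, and then show that it stays bounded strictly away from $1$ as $N\to\infty$. First I would recall the explicit form of the optimized resource $\ket{\Psi}_{AB} = (O_A \otimes I_B)\ket{\Psi^+}_{AB}$, where $O_A$ is block-diagonal in the Schur basis of $A$ with coefficients $c_\lambda^{(N)}$ that maximize probabilistic PBT success~\cite{Studzinski2017,StuNJP}, together with the associated optimal POVM elements $\Pi_i$ (which are themselves expressible via the same $c_\lambda^{(N)}$ in the mixed Schur basis on $A\bar A_1$). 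By the $S_N$-symmetry across ports, it is enough to analyze the conditional post-measurement state $\Psi_{\mathrm{out}}^{(N)}$ on $\set{A_1,\dotsc,A_{N-1}} \cup B \cup \bar B_1$ given outcome $i=N$, and to average at the end.

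Next I would identify $\Psi_{\mathrm{id}}^{(N)}$ as $\Psi_{\mathrm{opt}}^{(N-1)} \otimes \psi^+_{B_N\bar B_1}$, i.e.\ the optimized resource on the remaining $N-1$ port pairs tensored with a perfect EPR pair between the teleported port and its reference. The fidelity then reduces to an overlap between two Schur-basis expansions whose amplitudes are controlled by $c_\lambda^{(N)}$ on one side and $c_\mu^{(N-1)}$ on the other, with $\mu$ ranging over Young diagrams obtained from $\lambda$ by removing one box (the branching rule $S_N\downarrow S_{N-1}$). This expresses $F_{\mathrm{rec}}^{\mathrm{succ}}$ as a double sum over pairs $(\lambda,\mu)$ with $\mu\nearrow\lambda$, weighted by products of irrep dimensions, optimizer coefficients, and branching multiplicities.

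The main obstacle is evaluating this double sum asymptotically. In contrast to the EPR case of \cref{thm:rec_fid_inf_std}, where the resource has trivial optimizer ($c_\lambda^{(N)}\equiv 1$ up to normalization) and the branching structure closes up to yield $F_{\mathrm{rec}}^{\mathrm{succ}}\to 1$, here the $N$- and $(N-1)$-port optimizers differ genuinely: $c_\lambda^{(N)}$ concentrates the Plancherel-like measure on Young diagrams of a specific shape~\cite{ishizaka_quantum_2009,majenz2}, while $c_\mu^{(N-1)}$ does the same but for a shifted value of $N$. Adding or removing a single box moves between the two concentration regimes by a $d$-dependent rescaling, which forces the ratio $c_\mu^{(N-1)}/c_\lambda^{(N)}$ to differ from the value that would make the overlap saturate to $1$.

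Finally, using the known asymptotic profile of the optimizer coefficients together with the vanishing of $p_{\mathrm{fail}}$ (so that $F_{\mathrm{rec}}\approx F_{\mathrm{rec}}^{\mathrm{succ}}$ for large $N$), I would extract an explicit lower bound on $1 - F_{\mathrm{rec}}(N,d)$ depending only on $d$. The hard step is making this bound quantitative: since the optimizers are defined through a constrained maximization rather than a clean character-theoretic formula, capturing the precise asymptotic mismatch requires careful representation-theoretic bookkeeping and a Laplace-type analysis of the dominant Young diagrams, rather than a single closed-form identity.
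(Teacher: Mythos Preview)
Your proposal goes well beyond what the paper actually does. The paper does not give an analytical proof of this theorem: it (i) writes down the closed-form expression for $F_{\mathrm{rec}}^{\mathrm{succ}}$ by quoting Theorem~14 of \cite{Studziński_2022} and substituting the optimal-pPBT coefficients $c_\mu$ from \cref{eq:c_mu_pPBT}, (ii) computes $F_{\mathrm{rec}}^{\mathrm{fail}}$ directly from the fact that $\Pi_0$ is the identity on the $(\mu,\square)$ irreps, and then (iii) simply plots the results (\cref{fig:recycling_fidelity_opt_pPBT}) and observes numerically that $F_{\mathrm{rec}}^{\mathrm{succ}}$ does not tend to $1$. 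No asymptotic lower bound on $1-F_{\mathrm{rec}}$ is derived.

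Your plan---rederive the overlap via mixed Schur--Weyl duality and then carry out a genuine asymptotic analysis of the mismatch between the $N$- and $(N{-}1)$-port optimizers---is therefore a genuinely different and more ambitious route. If completed, it would upgrade a numerical observation into a theorem. Two caveats: first, the optimal-pPBT POVM is the PGM of \cref{def:PGM_PBT}, which is \emph{not} built from the resource coefficients $c_\lambda^{(N)}$; the measurement and the optimizer $O_A$ are independent objects, so your framing of $\Pi_i$ as ``expressible via the same $c_\lambda^{(N)}$'' is off and will slightly change the overlap formula you land on (compare \cref{eq:F_succ_opt_pPBT_unnormalised}). Second, your proposal is honest that the quantitative asymptotic step is the hard part, but as written it is only a heuristic about concentration of the optimizer profile---you have not yet identified the limit of $F_{\mathrm{rec}}^{\mathrm{succ}}(N,d)$ or shown it is strictly below $1$. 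That is precisely the content the paper leaves to numerics, so at present your proposal restates the difficulty rather than resolving it.
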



%
%

\section{Mathematical preliminaries}

\subsection{Representation theory of the partially transposed permutation algebra}

The setting of our problem is naturally suited for using tools from mixed Schur--Weyl duality for the matrix algebra of partially transposed permutations.
Here we present a summary of the necessary results, and refer the reader to \cite{grinko2023gelfandtsetlin} for more details.
Some aspects of the representation theory of partially transposed permutation matrix algebras were also studied before in \cite{Stu1, Moz1, MozJPA}.

The matrix algebra $\A^d_{N,2}$ of \emph{partially transposed permutations} acts on $N+2$ qudits, each of local dimension $d$.
Its generators $\sigma_1, \dotsc, \sigma_{N+1}$ act on $(\C^d)\xp{N+2}$ in the following way:
for all $x_1, \dotsc, x_{N+2} \in [d] \defeq \set{1,\dotsc,d}$,
\begin{equation}
	\label{eq:Brauer action}
	\sigma_i \, \ket{x_1,\dotsc,x_{N+2}}
	\defeq
	\begin{cases}
		\ket{x_1,\dotsc,x_{i+1},x_i,\dotsc,x_{N+2}}, &
		\text{$i \neq N$},                             \\
		\ket{x_1,\dotsc,x_{N-1}}
		\x
		(\delta_{x_N,x_{N+1}}
		\sum_{k=1}^d \ket{k,k})
		\x
		\ket{x_{N+2}}
		,                                            &
		\text{$i = N$}.
	\end{cases}
\end{equation}
In other words, $\sigma_i$ with $i \neq N$ are \emph{transpositions} that exchange qudits $i$ and $i+1$, while $\sigma_N$ is a \emph{contraction} that projects qudits $N$ and $N+1$ on the un-normalized maximally entangled state.
The irreducible representations or \emph{irreps} of $\A_{N,2}^d$ are labeled by pairs of Young diagrams of the following four types:
\begin{equation}
	\label{eq:IrrA}
	\Irr{\A_{N,2}^{d}} \defeq
	\Set[\Big]{
		(\nu,\0) \given
		\nu \pt_d N - 2} \sqcup
	\Set[\Big]{
		(\lambda,\yd[0.5]{1}) \given
		\lambda \pt_{d-1} N - 1} \sqcup
	\Set[\Big]{
		(\mu,\yd[0.35]{1,1}) \given
		\mu \pt_{d-2} N}\sqcup
	\Set[\Big]{
		(\mu,\yd[0.35]{2}) \given
		\mu \pt_{d-1} N},
\end{equation}
where $\nu \pt_d N - 2$ means that $\nu$ is a partition of $N-2$ with at most $d$ non-zero parts, $\0$ denotes the empty diagram and $\square = (1)$.

More generally, the irreducible representations of $\A_{N,M}^d$ for arbitrary $N,M \geq 0$ are labeled by pairs of partitions $\Lambda = (\lambda,\lambda')$ such that $\ell(\lambda) + \ell(\lambda') \leq d$ and $\lambda \pt N-k$ and $\lambda' \pt M-k$ for some $0 \leq k \leq \min(N,M)$ \cite{grinko2023gelfandtsetlin} (in our case $M=2$ and $k=0,1,2$).
We slightly abuse notation by not including $d$ as part of $\Lambda$ since $d$ is assumed to be fixed throughout.
However, knowing $d$ is necessary to unambiguously convert $\Lambda$ into a \emph{staircase} of length $d$, which is another convenient way of labeling the irreps of $\A_{N,M}^d$ \cite{stembridge1987rational,grinko2023gelfandtsetlin}.
When $M=0$, $\A_{N,0}^d$ coincides with the usual tensor representation of the \emph{symmetric group} $\S_N$ on $(\C^d)^{\otimes N}$. Therefore, up to level $N$ when $M=0$, we can omit for brevity the second diagram $\lambda'$ from $\Lambda$ and only refer to $\Lambda$ by its first Young diagram $\lambda$.

Mixed Schur--Weyl duality says that the tensor representation $(\C^d)^{\otimes N + M}$ of $\A_{N,M}^d$ decomposes as follows into irreducible representations:
\begin{equation}
	\label{eq:mixed_schur_weyl}
	(\C^d)^{\otimes N + M} \cong \bigoplus_{\Lambda \in \Irr{\A_{N,M}^{d}}} \mathcal H_\Lambda \otimes \mathcal W_\Lambda,
\end{equation}
where $\mathcal{H}_\Lambda$ is a irrep of $\A_{N,M}^d$, and $\mathcal{W}_\Lambda$ is an irrep of the unitary group $\U{d}$ with highest weight $\Lambda$ (the second Young diagram $\lambda'$ of $\Lambda$ describes the negative part of the highest weight $\Lambda$).

It is convenient to describe the representation theory of $\A_{N,2}^d$ using the \emph{Bratteli diagram} $\Brat$
for the sequence of algebras
$\A_{0,0}^d \hookrightarrow \A_{1,0}^d \hookrightarrow \cdots \hookrightarrow \A_{N,0}^d \hookrightarrow \A_{N,1}^d \hookrightarrow \A_{N,2}^d$,
which is a certain directed acyclic simple graph \cite{grinko2023gelfandtsetlin}.
The vertices of $\Brat$ are divided into $N+3$ \emph{levels} denoted by $i = 0,\dotsc,N+2$. These levels correspond to sets of irreducible representations $\Irr{\A_{0,0}^d},\Irr{\A_{1,0}^d},\dotsc,\Irr{\A_{N,2}^d}$ of the corresponding algebras $\A_{0,0}^d,\A_{1,0}^d,\dotsc,\A_{N,2}^d$.
The vertices at level $i \in \set{0,\dotsc,N}$ are labeled by $\Lambda = (\lambda, \0)$ or simply $\lambda$, where $\lambda \pt_d i$ is a Young diagram with $i$ cells, while the vertices at the last two levels $N+1$ and $N+2$ are labeled by pairs of Young diagrams $\Lambda$ corresponding to the irreps of $\Irr{\A_{N,1}^{d}}$ and $\Irr{\A_{N,2}^{d}}$, see \cref{eq:IrrA}.

When $i \in \set{0,\dotsc,N}$, the vertices $\mu \, \pt \, i-1$ and $\lambda \, \pt \, i$ are connected, denoted as $\mu \rightarrow \lambda$, if $\lambda$ can be obtained from $\mu$ by adding a cell, i.e., $\lambda = \mu \cup a$ for some $a \in \AC_d(\mu)$ where $\AC_d(\mu)$ denotes the set of addable cells of $\mu$.
In other words, $\mu \cup a$ must be a valid Young diagram with at most $d$ rows.
Furthermore, for levels $\geq N$ vertices at consecutive levels are connected by either adding a cell to the right Young diagram $\lambda'$ of $\Lambda = (\lambda,\lambda')$ or by removing a cell from the left diagram $\lambda$.
The Bratteli diagram $\Brat$ consists of all vertices from all levels and the directed edges between them. 
We denote the only vertex at level $0$ by $\0$ and call it \emph{root}, while the vertices at the last level of $\Brat$ we call \emph{leaves} (they correspond bijectively to the irreps $\Irr{\A_{N,2}^{d}}$).
For any leaf $\Lambda \in \Irr{\A_{N,2}^{d}}$, we denote by
\begin{equation}
	\Paths(\Lambda,\Brat) \defeq
	\Set[\Big]{T = (T^0 \to T^1 \to \dotsc \to T^{N+1} \to T^{N+2}) = (T^0,T^1,\dotsc,T^{N+1},T^{N+2})}
	\label{eq:paths}
\end{equation}
the set of all paths in $\Brat$ starting at the root $T^0 = \0$ and terminating at $\Lambda$.
Similar to \cref{eq:paths}, for any intermediate vertex $\lambda \pt k$ at level $k \leq N$ in $\Brat$, we use $\Paths_{k}(\lambda,\Brat)$ to denote the set of all paths in $\Brat$ terminating at $\lambda$.\footnote{Sometimes, it will be also convenient to abuse the notation by dropping $\Brat$ and simply writing $\Paths_{k}(\lambda)$ if it is clear from the context what is the underlying Bratteli diagram.}
Furthermore, we denote by $\Paths(\Brat)$ the set of all paths in $\Brat$, i.e.,
$\Paths(\Brat) \defeq \bigsqcup_{\Lambda \in \Irr{\A_{N,2}^{d}}} \Paths(\Lambda,\Brat)$.

For any path $T = (T^0,\dotsc,T^{N+2}) \in \Paths(\Lambda,\Brat)$ and $i \in [N+2]$, we define the \emph{walled content} of $i$ in $T$ as
\begin{equation}
	\wcont_i(T) \defeq
	\begin{cases}
		\cont(T^{i}_l \setminus T^{i-1}_l)   & \text{if } i \leq N,                             \\
		\cont(T^{i}_r \setminus T^{i-1}_r)+d & \text{if } i > N \text{ and } T^{i}_l=T^{i-1}_l, \\
		-\cont(T^{i-1}_l \setminus T^{i}_l)  & \text{if } i > N \text{ and } T^{i}_r=T^{i-1}_r,
	\end{cases}
	\label{eq:conti}
\end{equation}
where \emph{content} of a cell $c=(i,j)$ is given by $\cont(c) \defeq j - i$.
The \emph{axial distance} between $i$ and $i+1$ in $T$ is
\begin{equation}
	r_i(T) \defeq \wcont_{i+1}(T)-\wcont_{i}(T).
	\label{eq:ri}
\end{equation}

For a given $\Lambda \in \Irr{\A_{N,2}^{d}}$, the corresponding irrep $\psi_\Lambda$ of $\A_{N,2}^d$ has a convenient explicit description in the so-called \emph{Gelfand--Tsetlin basis}
$\set{\ket{T} \mid T \in \Paths(\Lambda,\Brat)}$.
Recall from \cref{eq:Brauer action} that $\A_{N,2}^{d}$ is generated by $N$ transpositions $\sigma_1,\dotsc,\sigma_{N-1},\sigma_{N+1}$ and one contraction $\sigma_N$.
For any irrep $\Lambda \in \Irr{\A_{N,2}^{d}}$, the transposition $\sigma_i$ acts on a given path $T \in \Paths(\Lambda,\Brat)$ according to the so-called Young--Yamanouchi formula:
\begin{align}
	\psi_\Lambda(\sigma_i) \, \ket{T}
	 & \defeq \frac{1}{r_i(T)} \, \ket{T} + \sqrt{1 - \frac{1}{r_i(T)^2}} \, \ket{\sigma_i T}
	\qquad
	\text{for $i \neq N$},
	\label{gtbasis:transpositions}
\end{align}
where $\sigma_i T$ denotes the path $T$ with vertex $T^i$ at level $i$ replaced by $T^{i-1} \cup (T^{i+1} \setminus T^i)$, which corresponds to adding cells $i$ and $i+1$ in the opposite order.
The action of $\sigma_N$ is given by (see \cite[Theorem 3.2]{grinko2023gelfandtsetlin}) for every irrep $\Lambda$ and a valid path $S \in \Paths_{N-1}((\lambda,\0),\Brat)$:
\begin{align}
	\psi_{\Lambda}(\sigma_N) \, \ket{S\to \lambda\cup a \to (\lambda,\0) \to \Lambda}
	 & \defeq \sum_{a'\in \AC_d(\lambda)} \frac{ \sqrt{ m_{\lambda\cup a} m_{\lambda\cup a'}}}{ m_{\lambda} } \ket{S\to \lambda\cup a' \to (\lambda,\0) \to \Lambda},
	\label{gtbasis:contraction1}
\end{align}
where $m_\lambda$ denotes the dimension of the $\lambda$-irrep of the unitary group $\U{d}$.
According to the well-known \emph{Weyl dimension} formula, the dimension of unitary group irrep labeled by a general staircase $\Lambda$ is given by
\begin{equation}
	m_{\Lambda} \defeq \prod_{1 \leq i < j \leq d}
	\frac{\Lambda_i - \Lambda_j + j - i}{j-i}.
\end{equation}

Finally, we need the notion of \emph{matrix units} of the partially transposed permutation matrix algebra $\A_{N,M}^d$. For a given irrep $\Lambda \in \Irr{\A_{N,M}^d}$ and a pair of paths $T,S \in \Paths(\Lambda,\Brat)$ we define the matrix unit $E_{T,S}$ as a matrix in the standard basis, which looks like a matrix unit $\ket{T}\bra{S}$ in the Gelfand--Tsetlin basis, i.e.
\begin{equation}
	U_{\mathrm{Sch}(N,M)} E_{T,S} U_{\mathrm{Sch}(N,M)}^{\dagger} \defeq  \ket{T}\bra{S} \otimes I_\Lambda,
\end{equation}
where $U_{\mathrm{Sch}(N,M)}$ denotes mixed Schur transform, i.e. an unitary transformation between computational and the mixed Schur basis, see~\cite{grinko2023gelfandtsetlin}. In the context of mixed Schur--Weyl decomposition in \cref{eq:mixed_schur_weyl}, $\ket{T}\bra{S}$ acts on the irrep $\mathcal H_\Lambda$, and $I_\Lambda$ acts on the unitary group irrep $\mathcal W_\Lambda$.

Finally, there is a very useful formula for the partial trace over the matrix units for the Gelfand--Tsetlin basis:
\begin{lemma}[\cite{ram1992matrix}]
	\label{lem:partial_trace_E}
	Consider matrix units $E_{T,S}$ on $N$ qudits in the Gelfand--Tsetlin basis, where $T,S \in \Paths_N(\Lambda)$ and $T_{N-1} = \nu$ and $S_{N-1} = \nu'$. Then
	\begin{equation}
		\Tr_N E_{T,S} = \delta_{\nu,\nu'} \frac{m_\Lambda}{m_\nu} E_{\bar{T},\bar{S}},
	\end{equation}
	where $\bar{T}$ denotes the truncated path $T$ by its last vertex $\Lambda$.
\end{lemma}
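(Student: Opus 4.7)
The plan is to combine Schur--Weyl duality at levels $N$ and $N-1$ with Schur's lemma for $\U{d}$ to reduce the partial trace to a single scalar computation. The first step is to write out the two Schur--Weyl decompositions
\[
(\C^d)^{\otimes N} \cong \bigoplus_\Lambda \mathcal{H}_\Lambda \otimes \mathcal{W}_\Lambda
\qquad\text{and}\qquad
(\C^d)^{\otimes N} \cong \bigoplus_\mu \mathcal{H}_\mu \otimes \mathcal{W}_\mu \otimes \C^d,
\]
and to identify them using the Pieri decomposition $\mathcal{W}_\mu \otimes \C^d \cong \bigoplus_{\Lambda = \mu \cup a} \mathcal{W}_\Lambda$ together with the $\S_N \downarrow \S_{N-1}$ branching $\mathcal{H}_\Lambda \cong \bigoplus_{\mu \to \Lambda} \mathcal{H}_\mu$, which is precisely the compatibility encoded in the Bratteli diagram underlying the Gelfand--Tsetlin basis. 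Let $V_\mu^\Lambda : \mathcal{W}_\Lambda \hookrightarrow \mathcal{W}_\mu \otimes \C^d$ denote the resulting $\U{d}$-equivariant isometry (unique up to phase by Pieri's multiplicity-freeness), and for $T \in \Paths_N(\Lambda)$ with $T^{N-1} = \nu$ identify $\ket{T}$ with $\ket{\bar T} \in \mathcal{H}_\nu$ sitting in the $\nu$-summand of $\mathcal{H}_\Lambda$.

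Under these identifications, $E_{T,S} = \ketbra{T}{S} \otimes I_{\mathcal{W}_\Lambda}$ becomes a single block sending $\mathcal{H}_{\nu'} \otimes \mathcal{W}_{\nu'} \otimes \C^d$ to $\mathcal{H}_\nu \otimes \mathcal{W}_\nu \otimes \C^d$, given by $\ketbra{\bar T}{\bar S} \otimes V_\nu^\Lambda (V_{\nu'}^\Lambda)^\dagger$. Tracing out the last qudit then reduces to applying $\Tr_{\C^d}$ to the second tensor factor, which produces a $\U{d}$-equivariant map $\mathcal{W}_{\nu'} \to \mathcal{W}_\nu$. By Schur's lemma this vanishes unless $\nu = \nu'$, in which case it must equal $c \cdot I_{\mathcal{W}_\nu}$ for some scalar $c$. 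Taking the full trace on both sides pins down $c \cdot m_\nu = \Tr\bigl(V_\nu^\Lambda (V_\nu^\Lambda)^\dagger\bigr) = \Tr I_{\mathcal{W}_\Lambda} = m_\Lambda$, so $c = m_\Lambda / m_\nu$.

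Reassembling the two tensor factors gives $\Tr_N E_{T,S} = \delta_{\nu,\nu'} (m_\Lambda/m_\nu)\, \ketbra{\bar T}{\bar S} \otimes I_{\mathcal{W}_\nu}$, and the right-hand side is exactly $(m_\Lambda/m_\nu)\, E_{\bar T, \bar S}$ in the Gelfand--Tsetlin basis of $(\C^d)^{\otimes (N-1)}$. The main obstacle is the first step: obtaining the block form of $E_{T,S}$ in the alternative decomposition requires knowing how the level-$N$ Schur transform factors through the level-$(N-1)$ one, which is exactly where the Bratteli-diagram machinery does the work. Once this naturality statement is accepted, everything else collapses to a single Schur-lemma invocation and a one-line trace computation.
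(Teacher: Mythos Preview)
The paper does not prove this lemma; it is stated with attribution to \cite{ram1992matrix} and used as a black box. Your argument is correct and is essentially the standard one: the Gelfand--Tsetlin construction is precisely the iterated seesaw between Schur--Weyl at levels $N$ and $N-1$, so by design $E_{T,S}$ factors as $\ketbra{\bar T}{\bar S}$ on the $\S_{N-1}$ side times the Pieri isometry block $V_\nu^\Lambda(V_{\nu'}^\Lambda)^\dagger$ on the $\U{d}$ side, after which Schur's lemma for $\U{d}$ and a one-line trace identity finish the job. The only caveat is your closing remark: the ``obstacle'' you flag---that the level-$N$ Schur transform factors through the level-$(N-1)$ one compatibly with the Bratteli edges---is not really an obstacle but the defining property of the Gelfand--Tsetlin basis (this is how \cite{grinko2023gelfandtsetlin} builds the mixed Schur transform recursively), so you may simply invoke it rather than hedge.
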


\subsection{Optimal measurements and resource states for pPBT}
\label{ssec:PBT}
Since the building blocks of two-step PBT are measurements and resources employed in single round PBT, they are described below.

Depending on the shared resource state and the measurements applied by Alice, one can distinguish two types of pPBT protocols: standard and optimized.

\subsubsection{Standard pPBT}
In the standard scenario, the shared state consists of $N$ \emph{EPR pairs} $\ket{\psi^+} \defeq \frac{1}{ \sqrt{d} }\sum_{k=1}^{d} \ket{k,k}$:
\begin{equation}
	\label{eq:standard_pPBR_resource}
	\ket{\Psi^+}_{AB} \defeq	\bigotimes_{i=1}^{N} \ket{\psi^+}_{A_i B_i}.
\end{equation}
Due to problem's symmetries~\cite{ishizaka_asymptotic_2008}, Alice's measurement $\Pi = \{\Pi_i\}_{i=0}^N$ on $\bar{A}_1 A$ without loss of generality takes the form
\begin{equation}
	\label{eq:standard_pPBT_measurement}
	\Pi_i = \psi^+_{A_i \bar A_1}\otimes\Theta_{i^c}, \qquad
	\Theta_{i^c} = \sum_{\lambda \vdash_d N-1} \frac{d}{\gamma(\lambda)} P^{i^c}_{\lambda}, \quad i=1,\dotsc,N, \qquad
	\Pi_0 = I - \sum_{ i=1}^{N} \Pi_i
\end{equation}
where notation $\Theta_{i^c}$
means that the operator $\Theta$ acts everywhere but $A_i$, $\psi^+_{A_i \bar A_1}$ is the density matrix of the maximally entangled state $\ket{\psi^+}$ on subsystems $A_i$ and $\bar A_1$, and
\begin{align}
	\label{eq:gamma_star}
	\gamma(\lambda) & \defeq \max_{\substack{a\in \AC(\lambda) }}\gamma_a(\lambda) = d + \lambda_1, \qquad \gamma_a(\lambda) \defeq d+\cont{(a)},
\end{align}
and
\begin{equation}
	\label{eq:P_alpha}
	P_{\lambda} \defeq \sum_{S\in\Paths_{N-1}{(\lambda)}} U_{\textrm{Sch}(N-1)}\ct \ketbra{S}{S} U_{\textrm{Sch}(N-1)}
\end{equation}
is the projector onto irrep $\lambda$ in $\left(\mathbb{C}^d\right)^{\otimes {N-1}}$, while the superscript $i^c$ in $P_{\lambda}^{i^c}$ means it is supported on systems $A_1,\dots,A_{i-1},A_{i+1},\dots,A_N$.
The effects $\Pi_i$ for $i=1,\dots,N$ correspond to successful teleportation to the $i$-th port, while the remaining effect $\Pi_0$ corresponds to failure.

\subsubsection{Optimal pPBT}
When Alice optimizes the resource state, it is of the form \cite{ishizaka_quantum_2009, StuNJP}
\begin{equation}
	\label{eq:opt_pPBT_resource}
	\ket{\Psi}_{AB} \defeq \bigl( O_A\otimes I_{B} \bigr) \ket{\Psi^+}_{AB}, \qquad
	O_A \defeq \sum_{\mu \vdash_d N} \sqrt{c_{\mu}} P_{\mu},
\end{equation}
where we introduce the following definitions for a given $\mu \pt_d N$:
\begin{equation}
	\label{eq:c_mu_pPBT}
	c_\mu \defeq \frac{d^N g(N) m_{\mu}}{d_\mu}, \qquad
	g(N) \defeq \frac{1}{\sum_{\mu \pt_d N} m_{\mu}^2} = \frac{1}{\binom{N+d^2-1}{d^2-1}}, \qquad
	f_\mu \defeq \frac{c_\mu d_\mu m_\mu}{d^N} = \frac{ m_\mu^2}{\sum_{\nu \pt_d N} m_\nu^2}.
\end{equation}
The optimal pPBT measurement on $A \bar A_1 = (\C^d)^{\otimes (N+1)}$ for the optimized resource state is a \emph{Pretty Good Measurement} (PGM) $\Pi$ given by \cite{Studzinski2017}
\begin{align}
	\label{def:PGM_PBT}
	\Pi_i \defeq \rho^{-1/2} \rho_i \rho^{-1/2}, \qquad
	\rho \defeq \sum_{i=1}^N \rho_i, \qquad
	\rho_i \defeq (i,N) \sigma_N (i,N), \qquad
	\Pi_0 \defeq I - \sum_{i=1}^N \Pi_i,
\end{align}
where $\sigma_N$ denotes the contraction between systems $A_N$ and $\bar A_1$, see \cref{eq:Brauer action}, and $(i,N)$ is the transposition that swaps systems $i$ and $N$, which acts on $(\C^d)^{\otimes (N+1)}$ as the tensor representation \eqref{eq:Brauer action} of $\S_N$.
Notice that $\rho$ commutes with the action of $\S_N$, hence the POVM elements $\Pi_i$ for $i \in [N]$ can be written as
\begin{equation}
	\label{eq:group_cov}
	\Pi_i = (i,N) \Pi_N (i,N),
\end{equation}
meaning that $\Pi$ is group-covariant \cite{Decker2004} under the cyclic group on $N$ elements.

We now present a useful description of the square root measurement \eqref{def:PGM_PBT} in the Gelfand--Tsetlin basis \cite{grinko2023gelfandtsetlin}. For each irrep $\Lambda$, we denote by $\Pi_k^\Lambda$ an operator $\Pi_k$ restricted to the irrep $\Lambda$:
\begin{equation}
	U_{\mathrm{Sch}(N,1)} \Pi_k U_{\mathrm{Sch}(N,1)}^{\dagger} = \bigoplus_{\Lambda \in \Irr{\A_{N,1}^{d}}} \Pi_k^\Lambda \otimes I_{m_{\Lambda}},
\end{equation}
where $I_{m_{\Lambda}}$ denotes the identity matrix on the corresponding unitary group irrep register. In fact, we are interested mostly in $ \Lambda = (\lambda,\0)$ where $\lambda \pt_{d} N-1$ type, since the expressions describing our figures of merit are only supported on that types of irreps.

The representations of $\sigma_N$ and $\rho^{-1/2}$ as elements of $\A_{N,1}^d$ are known explicitly \cite{Grinko2024}:
\begin{equation}
	\label{eq:sigma_n_Lambda}
	\sigma_N^\Lambda =
	\begin{cases}
		\displaystyle \sum_{S\in \Paths (\lambda )} 
		\ketbra{v_{S,\lambda}}{v_{S,\lambda}} &
		\text{if $\Lambda = (\lambda,\0)$},     \\[15pt]
		\displaystyle 0                       &
		\text{if $\Lambda = (\mu,\square)$},
	\end{cases}
\end{equation}
where $\ket{v_{S,\lambda}} = \sum_{ a\in \AC(\lambda)}^{} \sqrt\frac{ { m_{\lambda \cup a}} }{  m_\lambda} \ket{S\rightarrow\lambda\cup a\rightarrow (\lambda, \0) }$
and
\begin{equation}
	\label{eq:rho_Lambda}
	(\rho^{-1/2})^\Lambda =
	\begin{cases}
		\displaystyle \sum_{a \in \AC_d(\lambda )} \sum_{T\in \Paths_N (\lambda \cup a)}  \frac{1}{\sqrt{d+\cont{(a)}}} \proj{T\to(\lambda,\0)} &
		\text{if $\Lambda = (\lambda,\0)$},                                                                                                       \\[15pt]
		\displaystyle 0                                                                                                                         &
		\text{if $\Lambda = (\mu,\square)$}.
	\end{cases}
\end{equation}

\begin{lemma}
	\label{lem:PGM_new_form}
	The POVM element $\Pi_N$ can be expressed as
	\begin{equation}
		\label{eq:PI_N}
		\displaystyle
		\Pi_N =
		\sum_{\lambda\pt_d N-1}
		\sum_{S\in \Paths_{N-1} (\lambda )}
		G_{\lambda,S} \,\sigma_N \, G_{\lambda,S}
	\end{equation}
	where
	\begin{equation}
		U_{\mathrm{Sch}(N,1)}^{\dagger} G_{\lambda,S} U_{\mathrm{Sch}(N,1)} = \sum_{a \in \AC_d (\lambda )} \frac{1}{\sqrt{d+\cont{(a)}}} E_{S \to \lambda\cup a,S \to \lambda\cup a} \otimes I_d.
	\end{equation}
\end{lemma}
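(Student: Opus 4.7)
The plan is to compute $\Pi_N = \rho^{-1/2}\sigma_N\rho^{-1/2}$ block-by-block in the Gelfand--Tsetlin basis of $\A_{N,1}^d$, using the explicit formulas \eqref{eq:sigma_n_Lambda} and \eqref{eq:rho_Lambda} already recorded for the two factors. Both operators vanish on blocks $\Lambda=(\mu,\square)$, so only the blocks $\Lambda=(\lambda,\0)$ with $\lambda\pt_d N-1$ contribute, and within each such block every path in the support of $\sigma_N^\Lambda$ has the form $S\to\lambda\cup a\to(\lambda,\0)$ with $S\in\Paths_{N-1}(\lambda)$ and $a\in\AC_d(\lambda)$. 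On such vectors $(\rho^{-1/2})^\Lambda$ is diagonal with eigenvalue $1/\sqrt{d+\cont(a)}$, \emph{independent of the prefix $S$}, so
\begin{equation*}
\ket{u_{S,\lambda}} \defeq (\rho^{-1/2})^\Lambda\ket{v_{S,\lambda}} = \sum_{a\in\AC_d(\lambda)} \sqrt{\frac{m_{\lambda\cup a}}{m_\lambda}}\,\frac{1}{\sqrt{d+\cont(a)}}\,\ket{S\to\lambda\cup a\to(\lambda,\0)},
\end{equation*}
and substituting \eqref{eq:sigma_n_Lambda} into the product yields $\Pi_N^\Lambda = \sum_{S\in\Paths_{N-1}(\lambda)}\proj{u_{S,\lambda}}$.

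Next I unpack the definition of $G_{\lambda,S}$. In the GT picture of $\A_{N,1}^d$ it is block-diagonal with a single nonzero block on $\Lambda=(\lambda,\0)$, acting there as the diagonal operator $\sum_{a}(d+\cont(a))^{-1/2}\,\proj{S\to\lambda\cup a\to(\lambda,\0)}$ supported on paths whose prefix to level $N-1$ is exactly $S$; equivalently, $G_{\lambda,S}$ is the restriction of $(\rho^{-1/2})^{(\lambda,\0)}$ to the $S$-prefix subspace. Because $G_{\lambda,S}$ is self-adjoint and satisfies $G_{\lambda,S}\ket{v_{S',\lambda'}} = \delta_{\lambda,\lambda'}\delta_{S,S'}\ket{u_{S,\lambda}}$, inserting \eqref{eq:sigma_n_Lambda} between the two copies of $G_{\lambda,S}$ gives $G_{\lambda,S}\,\sigma_N\,G_{\lambda,S} = \proj{u_{S,\lambda}}$, and summing over $\lambda\pt_d N-1$ and $S\in\Paths_{N-1}(\lambda)$ reassembles $\bigoplus_\Lambda\Pi_N^\Lambda = \Pi_N$, which is the claimed identity.

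The whole argument is essentially bookkeeping in the GT basis; the key structural inputs --- the diagonal, $S$-independent form of $(\rho^{-1/2})^\Lambda$ and the rank-one decomposition of $\sigma_N^\Lambda$ indexed precisely by $S$ --- are already in place via \eqref{eq:sigma_n_Lambda} and \eqref{eq:rho_Lambda}. The only point that requires a little care is verifying that the defining formula $U_{\mathrm{Sch}(N,1)}^\dagger G_{\lambda,S}\,U_{\mathrm{Sch}(N,1)} = \sum_{a}(d+\cont(a))^{-1/2}\,E_{S\to\lambda\cup a,\,S\to\lambda\cup a}\otimes I_d$ really implements the GT-diagonal operator described above; this follows from the convention for the $\A_{N,0}^d$-matrix units $E_{T,S}$ together with the compatibility of the $\U{d}$-multiplicity factor $I_d$ with the $(\lambda\cup a)$-isotypic component that arises after appending one additional qudit.
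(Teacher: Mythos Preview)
Your approach is essentially the same as the paper's: both reduce to a direct calculation in the Gelfand--Tsetlin basis using \eqref{eq:sigma_n_Lambda} and \eqref{eq:rho_Lambda}, and both arrive at $G_{\lambda,S}\,\sigma_N\,G_{\lambda,S}=\proj{u_{S,\lambda}}$ followed by summation.

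There is one inaccuracy worth flagging. You assert that in the GT picture of $\A_{N,1}^d$ the operator $G_{\lambda,S}$ ``is block-diagonal with a single nonzero block on $\Lambda=(\lambda,\0)$''. This is not true: $G_{\lambda,S}$ is built from $E_{S\to\lambda\cup a,\,S\to\lambda\cup a}\otimes I_d$, matrix units of $\A_{N,0}^d$ tensored with the identity on the extra qudit, and when embedded in $\A_{N,1}^d$ the path $S\to\lambda\cup a$ extends both to $(\lambda\cup a\setminus r,\0)$ for every $r\in\RC(\lambda\cup a)$ and to $(\lambda\cup a,\square)$. So $G_{\lambda,S}$ has nontrivial support on several $(\lambda',\0)$ blocks and on $(\mu,\square)$ blocks. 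The paper's proof explicitly notes the $(\mu,\square)$ support and observes that it is harmless because $\sigma_N^{(\mu,\square)}=0$. The extra $(\lambda',\0)$ blocks with $\lambda'\neq\lambda$ are likewise harmless: the vectors $\ket{v_{S',\lambda'}}$ appearing in $\sigma_N^{(\lambda',\0)}$ have prefix $S'\in\Paths_{N-1}(\lambda')$, and $G_{\lambda,S}$ annihilates any GT vector whose level-$N$ truncation is not $S\to\lambda\cup a$, forcing $S'=S$ and hence $\lambda'=\lambda$. Your final computation of the sandwich is therefore correct, but the intermediate description of $G_{\lambda,S}$ should be amended.
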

\begin{proof}
	This follows from a direct calculation using \cref{eq:sigma_n_Lambda,eq:rho_Lambda}. Notice that although matrices $G_{\lambda,S}$ have non-trivial support on irreps $\Gamma = (\mu,\yd[0.5]{1})$ at level $N+1$, $\sigma_N^\Lambda$ is supported only on irreps $ \Lambda = (\lambda,\0)$ at level $N+1$ of the Bratteli diagram for $\A_{N,1}^d$.
\end{proof}

Finally, the following easy lemma \cite{grinko2023gelfandtsetlin} is useful for our calculations:
\begin{lemma}
	For any partition $\lambda = (\lambda_1,\dotsc,\lambda_d) \pt N-1$ and $a \in \AC(\lambda)$,
	\begin{equation}
		\label{eq:content}
		N \frac{d_\lambda}{m_\lambda} \frac{m_{\lambda\cup a}}{d_{\lambda\cup a}} = d +\cont{(a)}.
	\end{equation}
\end{lemma}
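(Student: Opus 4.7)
The plan is to reduce the identity to the classical hook-length and hook-content formulas. Recall that for a partition $\lambda$ of $n$, the dimension of the symmetric group irrep is given by the hook-length formula
\begin{equation}
    d_\lambda = \frac{n!}{\prod_{c \in \lambda} h(c)},
\end{equation}
while the dimension of the $\U{d}$ irrep labeled by $\lambda$ is given by the hook-content formula
\begin{equation}
    m_\lambda = \frac{\prod_{c \in \lambda} \bigl( d + \cont(c) \bigr)}{\prod_{c \in \lambda} h(c)},
\end{equation}
(which agrees with the Weyl dimension formula stated earlier once one specializes to partitions of length at most $d$).

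Taking the ratio eliminates the hook-length product, giving
\begin{equation}
    \frac{m_\lambda}{d_\lambda} = \frac{1}{n!} \prod_{c \in \lambda} \bigl( d + \cont(c) \bigr).
\end{equation}
I would apply this with $n = N-1$ for $\lambda$ and with $n = N$ for $\lambda \cup a$. Since the cells of $\lambda \cup a$ are exactly those of $\lambda$ together with the single added cell $a$, the product over $\lambda \cup a$ equals the product over $\lambda$ times $d + \cont(a)$.

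Substituting these two expressions and simplifying, the product $\prod_{c \in \lambda}(d+\cont(c))$ cancels between numerator and denominator, leaving
\begin{equation}
    \frac{d_\lambda}{m_\lambda} \cdot \frac{m_{\lambda \cup a}}{d_{\lambda \cup a}} = \frac{(N-1)!}{N!} \bigl( d + \cont(a) \bigr) = \frac{d + \cont(a)}{N},
\end{equation}
which is the claimed identity after multiplying both sides by $N$. No subtle step is involved; the only thing to be careful about is that both formulas apply to $\lambda$ and to $\lambda \cup a$, which is guaranteed since $a \in \AC(\lambda)$ (with at most $d$ rows, using $\ell(\lambda) \leq d$), so that $m_{\lambda \cup a}$ is nonzero and the hook-content formula is valid.
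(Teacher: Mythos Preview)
Your proof is correct. The paper does not actually supply a proof of this lemma; it merely cites \cite{grinko2023gelfandtsetlin} and calls it an ``easy lemma''. Your argument via the hook-length and hook-content formulas is the standard and most direct route: the key observation that $m_\lambda/d_\lambda = \frac{1}{n!}\prod_{c\in\lambda}(d+\cont(c))$ makes the ratio telescope to a single factor $d+\cont(a)$, and the factorials contribute the $1/N$. Nothing is missing.
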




\section{Performance of the two-step PBT}
\label{sec:two-step_pbt}

The succeeding performance of successful executions of the protocol (i.e.~such that neither the first, nor the second measurement outcome indicates failure) can be viewed as a protocol that teleports two quantum states.
Thus, the examination of the quantities describing its performance give information about potential usefulness of the resource state that was employed to succesfully teleport one quantum state, to teleport another one.

Recall that $A = A_1 \dots A_N$ and $B = B_1 \dots B_N$ denote Alice's and Bob's registers that store the shared optimized resource state
\begin{equation}
	\ket{\Psi}_{AB} \defeq (O_A \otimes I_B) \ket{\Psi^+}_{AB},
\end{equation}
where $\ket{\Psi^+}_{AB}$ consists of $N$ qudit EPR pairs, see \cref{eq:standard_pPBR_resource}, and the operator $O_A$ is used to optimize the state by adjusting its Schmidt decomposition.
We will denote the density matrix of $\ket{\Psi}_{AB}$ by $\Psi_{AB}$.
In addition to $A$, Alice also has a register $\bar A = \bar A_1 \bar A_2$ that stores the two-qudit state $\rho_{\bar A}$ that she wants to teleport to Bob in two consecutive PBT rounds, where $\bar A_1$ and $\bar A_2$ must be teleported in rounds 1 and 2, respectively.

As indicated in \cref{fig:protocol}, Alice first performs a POVM $\Pi = \set{\Pi_0, \dotsc, \Pi_N}$ on her registers $A \bar A_1$. If her outcome is $i \neq 0$, the first round of PBT succeeds and their joint post-measurement state becomes
\begin{equation}
	\Psi^{(i)}_{\mathrm{out}} \defeq \frac{\Tr_{A_i B_i \bar A_1} \sof[\big]{\sqrt{\Pi_i} (\Psi_{AB} \otimes \rho_{\bar A}) \sqrt{\Pi_i}}}{\Tr\sof[\big]{\Pi_i (\Psi_{AB} \otimes \rho_{\bar{A}})}}.
\end{equation}
Alice then performs a second POVM $\widetilde{\Pi} = \set{\widetilde{\Pi}_0, \dotsc, \widetilde{\Pi}_{N-1}}$ on her remaining registers $A_i^c \bar A_2$, where $A_i^c \defeq A \setminus A_i$ denotes the complement of $A_i$ in $A$.
The second round succeeds if she obtains an outcome $j \neq 0$.

This two-step process can be described by the following completely positive map $\mathcal{N}$ from $\bar A_1 \bar A_2$ to $\bar B_1 \bar B_2$:
\begin{equation}
	\begin{split}
		\label{eq:channel_N}
		\mathcal{N}(\rho_{\bar A}) \defeq &
		\sum_{i = 1}^N \sum_{\substack{j = 1 \\ j \neq i}}^{N}
		\Tr_{A B^c_{i,j} \bar A}
		\sof*{
			\sqrt{\widetilde{\Pi}_{j}} {\sqrt{\Pi_i}} ( \Psi_{AB} \otimes \rho_{\bar A} ) {\sqrt{\Pi_i}} \sqrt{\widetilde{\Pi}_{j}}
		}_{B_i,B_j \to \bar B_1, \bar B_2}
	\end{split}
\end{equation}
where $B^c_{i,j} \defeq B \setminus \set{B_i, B_j}$, which describes the post-measurement state in case of success ($i,j \neq 0$) in both rounds of the protocol.
Note that $\mathcal{N}$ is generally trace-decreasing (i.e., it is a subchannel) since we have omitted the POVM elements $\Pi_0$ and $\widetilde{\Pi}_0$ which indicate failure in the first and second round, respectively.

The success probability of the two-step protocol is
\begin{equation}
	\label{def:prob_suc}
	p_{\mathrm{succ}} \defeq \Tr \sof*{\mathcal{N}(I/d^2)},
\end{equation}
and the corresponding entanglement fidelity is
\begin{equation}
	\label{def:entanglement_fidelity}
	F_e(\mathcal{N}) \defeq \Tr
	\sof*{
	\psi^+_{\bar{B}_1R_1} \otimes
	\psi^+_{\bar{B}_2R_2}
	\of*{\mathcal{N}_{\bar{A}_1\bar{A}_2\to \bar{B}_1\bar{B}_2} \otimes I_{R_1R_2}}
	\of*{\psi^+_{\bar A_1R_1}\otimes \psi^+_{\bar A_2R_2}}
	}
\end{equation}
where $R_1,R_2$ are auxiliary registers, and $\psi^+_{XY}$ are maximally entangled states shared between $X$ and $Y$ registers.

Since the measurements $\Pi,\widetilde\Pi$ are covariant under the action of $\S_N$, $\S_{N-1}$, respectively,
\begin{align}
	R(\sigma) \Pi_i R(\sigma)^\dagger              & = \Pi_{\sigma(i)},\quad \sigma \in \S_N,                  \\
	R(\sigma') \widetilde \Pi_j R(\sigma')^\dagger & = \widetilde \Pi_{\sigma'(j)},\quad \sigma' \in \S_{N-1}.
\end{align}

Calculating $F_e(\mathcal{N})$ can be interpreted as contracting a tensor network obtained from the circuit in \cref{fig:protocol}.
In particular, it is easy to see that by bending wires of \cref{fig:protocol}, $F_e(\mathcal{N})$ can be transformed into taking trace of the operator as in \cref{fig:fidelity}.
Therefore, the entanglement fidelity of the channel \eqref{eq:channel_N} reads
\begin{equation}
	\label{eq:fidelity_formula}
	F_e(\mathcal{N})=  \frac{ N(N-1)}{d^{N+4}} \Tr \sof*{ O_A \sqrt{\Pi_N} \widetilde{\Pi}_{N-1} \; \sqrt{\Pi_N} \; O_A \; \tau_2 }
\end{equation}
where $\Pi_N, \widetilde{\Pi}_{N-1}$ are POVMs elements, $O_A$ corresponds to the preparation of the optimal state, and
\begin{equation}
	\label{eq:tau_2}
	\tau_2 \defeq d^2(I_{A_1,\dots,A_{N-2}} \otimes {\psi^+}_{A_{N} \bar A_1}  \otimes{\psi^+}_{A_{N-1} \bar A_2})
\end{equation}
is an unnormalized projector onto two EPR pairs.
In the following, it makes sense to call registers $\bar A_1, \bar A_2$ as $A_{N+1}, A_{N+2}$, see \cref{fig:fidelity}.

\begin{figure}[ht!]
	\centering
	$F_e(\mathcal{N}) = \frac{ N(N-1)}{d^{N+4}} \Tr \sof*{ \adjincludegraphics[valign=c,width=0.6\linewidth,page=4]{./figs/protocol_figures.pdf} }$
	\caption{Entanglement fidelity of the channel $\mathcal{N}$ can written as a tensor network contraction by bending and rearranging the wires and tensors from \cref{fig:protocol}.}
	\label{fig:fidelity}
\end{figure}

\begin{theorem}\label{thm:two_step_fidelity}
	The entanglement fidelity of two-step PBT with $N$ ports of local dimension $d$ can be expressed as
	\begin{align}
		\label{eq:F_e_answer}
		F_e(\mathcal{N})
		= \frac{1}{d^4} v\tp M v,
	\end{align}
	where the components of vector $v$
	are labeled by partitions $\mu \pt_d N$:
	\begin{align}
		v_\mu & \defeq \sqrt{f_\mu},
	\end{align}
	and the matrix  $M\in \mathbb{R}^{p(N,d)\times p(N,d)}$, where $p(N,d)$ denotes the number of partitions of $N$ with length at most $d$, is defined as
	$M \defeq \sum_{\nu \pt_d N-2} M^\nu$
	where
	$M^\nu \defeq X^\nu{\tp} X^\nu$, and
	$X^\nu \defeq \frac{1}{d^2} H^\nu S^\nu$ where $S^\nu \in \mathbb{R}^{|\AC_d(\nu)| \times p(N,d)}$ and $H^\nu \in \mathbb{R}^{|\AC_d(\nu)| \times |\AC_d(\nu)|}$ are defined as:
	\begin{align}
		H^\nu_{a,a'}    & \defeq
		\begin{dcases}
			\sum_{b\in \AC_d(\nu \cup a)} \sqrt{\frac{d + \cont(b)}{d + \cont(a)}} \frac{q(b|\nu \cup a)}{(\cont(a)-\cont(b))^2} &
			\text{if $a=a'\in \AC_d(\nu)$},                                                                                        \\
			\sqrt{q(a|\nu \cup a')q(a'|\nu \cup a)} \of*{1- \frac{1}{(\cont(a)-\cont(a'))^2}}                                    &
			\text{if $a\neq a'\in \AC_d(\nu)$},
		\end{dcases}                              \\
		S^{\nu}_{a,\mu} & \defeq \delta_{\mu,\AC_d(\nu \cup a)} \sqrt{\frac{1}{q(a|\nu)}} \sqrt{\frac{1}{\sum_{b \in \AC_d(\nu \cup a)} q(b|\nu \cup a)}} , \\
		q(a | \lambda)  & \defeq \frac{d_{\lambda\cup a}}{N \, d_\lambda},
	\end{align}
	where the delta function is defined as $\delta_{\mu,\AC_d(\nu \cup a)} \defeq 1$ iff there exists $b \in \AC_d(\nu \cup a)$ such that $\mu = \nu \cup a \cup b$, and $\delta_{\mu,\AC_d(\nu \cup a)} \defeq 0$ otherwise. Note that for every $\lambda$, $q(a | \lambda)$ is a measure over all addable cells $\AC(\lambda)$, i.e.\ $\sum_{a \in \AC(\lambda)} q(a | \lambda) = 1$.
\end{theorem}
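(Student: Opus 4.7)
The plan is first to establish the key simplification that both PGM elements are already orthogonal projectors, so the square roots in \eqref{eq:fidelity_formula} can be dropped. Using \cref{lem:PGM_new_form} together with the explicit Gelfand--Tsetlin formulas \eqref{gtbasis:contraction1} and \eqref{eq:sigma_n_Lambda}, I would compute the restriction $\Pi_N^{(\lambda,\0)}$ in each irrep and show it equals $\sum_{S \in \Paths_{N-1}(\lambda)} \proj{w_{S,\lambda}}$, where $\ket{w_{S,\lambda}} \defeq \sum_{a \in \AC_d(\lambda)} \sqrt{q(a\mid\lambda)}\,\ket{S \to \lambda\cup a \to (\lambda,\0)}$. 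The identity \eqref{eq:content} combined with the Pieri branching $\sum_{a\in\AC_d(\lambda)} d_{\lambda\cup a} = N d_\lambda$ shows that $\ket{w_{S,\lambda}}$ is a unit vector, and orthogonality across different $S$ is built into the Gelfand--Tsetlin structure. Hence $\Pi_N = \Pi_N^2$, and by the same argument on $N-1$ ports, $\widetilde\Pi_{N-1} = \widetilde\Pi_{N-1}^2$. By cyclicity, \eqref{eq:fidelity_formula} then rewrites as $F_e(\mathcal N) = \frac{N(N-1)}{d^{N+4}}\Tr\sof{X X^\dagger \tau_2}$ with $X \defeq \widetilde\Pi_{N-1}\Pi_N O_A$, which already foreshadows the $M^\nu = X^{\nu\top}X^\nu$ structure of the answer.

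Next, I would expand $X$ in the mixed Schur basis. Since $O_A = \sum_{\mu\pt_d N}\sqrt{c_\mu}\,P_\mu$ is diagonal on the Schur isotypic components at level $N$, each amplitude carried by $X$ is proportional to $\sqrt{c_\mu}$. The $\Pi_N$-factor then projects each amplitude onto $\ket{w_{S,\lambda}}$ with $\lambda\pt_d N-1$ obtained by removing a cell $a$ from $\mu$; the content weights from \eqref{eq:rho_Lambda} introduce the $(d+\cont(a))^{-1/2}$ factors, while the partial trace over $A_N$ via \cref{lem:partial_trace_E} yields the ratio $m_\mu/m_\lambda$, which combines with $c_\mu d_\mu/d^N$ into $f_\mu = v_\mu^2$. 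The $\widetilde\Pi_{N-1}$-factor, however, acts naturally on the ordering $A_1,\dotsc,A_{N-1},\bar A_2$ rather than $A_1,\dotsc,A_N,\bar A_1$; bridging the two Gelfand--Tsetlin orderings costs a transposition whose Young--Yamanouchi matrix elements \eqref{gtbasis:transpositions} are $1/r$ and $\sqrt{1-1/r^2}$, and squared they produce the characteristic $1-1/(\cont(a)-\cont(a'))^2$ entries of $H^\nu_{a,a'}$ for $a\neq a'$.

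The trace $\Tr\sof{X X^\dagger \tau_2}$ then reduces to a sum over paths in the combined Bratteli diagram. The two contractions in $\tau_2$ (on $A_N\bar A_1$ and on $A_{N-1}\bar A_2$) force the paths on the bra- and ket-side of the sandwich to share a common intermediate partition $\nu \pt_d N-2$ at level $N-2$, producing the outer sum $M = \sum_\nu M^\nu$. Within a fixed $\nu$, the branching $\nu \to \nu\cup a' \to \mu = \nu \cup a' \cup a$ is packaged by $S^\nu$, while the coefficient collecting the $q$-weights and the Young--Yamanouchi corrections coming from the basis change assembles $H^\nu$; the diagonal case $a = a'$ collapses to the single sum over $b \in \AC_d(\nu\cup a)$ with the complementary $1/r^2$-piece. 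The overall prefactor $1/d^4$ comes from combining $N(N-1)/d^{N+4}$ with the $m_\mu/m_\nu$ ratios from the partial traces and the $c_\mu d_\mu /d^N$ from $O_A$.

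The main obstacle is the commutation between the two PGMs, which live on overlapping but distinct tensor subsystems. The $1-1/(\cont(a)-\cont(a'))^2$ correction in $H^\nu$ is the fingerprint of the Young--Yamanouchi transposition used to bridge the two Gelfand--Tsetlin bases, and identifying exactly which transposition contributes (and with what coefficient) requires careful tracking through the paths at levels $N-1$, $N$, and $N+1$. Once this bookkeeping is complete and the sums over paths are telescoped using \cref{lem:partial_trace_E} and \eqref{eq:content}, the bilinear form collapses to $F_e(\mathcal N) = \frac{1}{d^4}v\tp M v$ with the claimed definitions of $M^\nu$, $X^\nu$, $H^\nu$, and $S^\nu$.
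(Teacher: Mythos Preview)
Your overall strategy coincides with the paper's: reduce \eqref{eq:fidelity_formula} to the irreps $\Lambda=(\nu,\0)$ with $\nu\pt_d N-2$, compute the relevant Gelfand--Tsetlin overlaps, and recognize the Young--Yamanouchi coefficients as producing the entries of $H^\nu$. Your observation that $\Pi_N$ and $\widetilde\Pi_{N-1}$ are genuine projectors is correct and is a pleasant simplification (in the paper's notation, $\|w^\nu_{S,r,a}\|=1$ because $\sum_b d_{\nu\cup a\cup b}=N d_{\nu\cup a}$, so $\sqrt{\Pi_N}=\Pi_N$ anyway). However, a few of your attributions are off and would cause trouble if you tried to execute the plan literally.

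First, there are no partial traces in this computation: \cref{lem:partial_trace_E} plays no role here. The factor $m_\nu$ in front of each $(\nu,\0)$ block comes straight from the multiplicity in mixed Schur--Weyl duality, and the ratios $m_{\nu\cup a\cup b}/m_\nu$ enter through $\tau_2$, which is a product of two contractions and therefore contributes $\sqrt{m_{\nu\cup a\cup b}/m_\nu}$ amplitudes in the Gelfand--Tsetlin basis (cf.\ \eqref{gtbasis:contraction1}). Combining these with $c_{\nu\cup a\cup b}$ from $O_A$ and with $d_{\nu\cup a\cup b}$ from the $\Pi_N$ projector is what produces $f_\mu=v_\mu^2$, not a partial trace over $A_N$. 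Second, the ``bridge'' between the two Gelfand--Tsetlin orderings is not a single transposition but the product $U=\sigma_{N-1}\sigma_{N+1}$; both factors have the same squared axial distance $(\cont(a)-\cont(a'))^2$, which is why the off-diagonal entry of $H^\nu$ picks up $1-1/(\cont(a)-\cont(a'))^2$ rather than its square root, and why the diagonal entry carries $1/(\cont(a)-\cont(b))^2$. Third, with your definition $X=\widetilde\Pi_{N-1}\Pi_N O_A$ the correct identity is $O_A\Pi_N\widetilde\Pi_{N-1}\Pi_N O_A = X^\dagger X$, so the trace is $\Tr[X^\dagger X\,\tau_2]$, not $\Tr[XX^\dagger\tau_2]$; this matters for matching onto $M^\nu=X^{\nu\top}X^\nu$.
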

The proof of this theorem is located in Appendix~\ref{apx:proof}. Now, we compute $p_{\mathrm{succ}}$. We get the following general formula, which depends on the squared amplitudes $\set{f_\mu}_{\mu \pt_d N}$ of the resource state:
\begin{theorem}
	\label{thm:two_step_psucc}
	The average probability of succes in the two-step PBT protocol with arbitrary resource state specified by $\set{f_\mu}_{\mu \pt_d N}$ is given by
	\begin{align}
		p_{\mathrm{succ}} = \frac{1}{d^2} \sum_{\lambda \pt_d N-1} \of*{\sum_{a \in \AC_d(\lambda)} \frac{f_{\lambda \cup a}}{m_{\lambda \cup a}}} \of*{\sum_{r \in \RC(\lambda)} m_{\lambda \setminus r} }.
	\end{align}
	In particular, for our two-step pPBT protocol with specific $f_\mu = \frac{ m_\mu^2}{\sum_{\nu \pt_d N} m_\nu^2}$ the above formula gives the following simple expression:
	\begin{equation}
		p_{\mathrm{succ}} = \frac{N (N-1)}{(N+d^2-1)(N+d^2-2)}.
	\end{equation}
\end{theorem}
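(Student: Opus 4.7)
The plan is to mimic the derivation of $F_e$ in \cref{thm:two_step_fidelity}, but starting from \eqref{def:prob_suc} applied to \eqref{eq:channel_N} with $\rho_{\bar A} = I/d^2$ rather than from the EPR-projected fidelity. First, by cyclicity of the trace and $\Tr_B \Psi_{AB} = O_A^2/d^N$, we can eliminate the $B$ register. Then, invoking the $\S_N$-covariance \eqref{eq:group_cov} of $\Pi$ and the analogous $\S_{N-1}$-covariance of $\widetilde\Pi$, together with the fact that $O_A^2$ commutes with the tensor action of $\S_N$ on $A$, the double sum over $i \neq j$ collapses to $N(N-1)$ identical contributions coming from $i=N$, $j=N-1$:
\[
p_{\mathrm{succ}} = \frac{N(N-1)}{d^{N+2}} \Tr \sof[\big]{\sqrt{\Pi_N}\, \widetilde\Pi_{N-1}\, \sqrt{\Pi_N}\, (O_A^2 \otimes I_{\bar A})},
\]
which is the analogue of \eqref{eq:fidelity_formula} with $\tau_2$ replaced by the identity on $\bar A$ and no external projections on $B$.

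Next, I would substitute \cref{lem:PGM_new_form} for $\Pi_N$, and the corresponding expression for $\widetilde\Pi_{N-1}$ obtained by applying \cref{lem:PGM_new_form} to the $N-1$ remaining ports and $\bar A_2$ (so $\widetilde\Pi_{N-1}$ is a sum over $\nu \pt_d N-2$ and $T \in \Paths_{N-2}(\nu)$). Expanding $O_A^2 = \sum_{\mu \pt_d N} c_\mu P_\mu$ with $c_\mu$ from \eqref{eq:c_mu_pPBT}, everything becomes diagonal in the Gelfand--Tsetlin basis adapted to the nested chain $\A_{0,0}^d \hookrightarrow \dotsb \hookrightarrow \A_{N,2}^d$. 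Iteratively applying \cref{lem:partial_trace_E}, first to trace out $\bar A_1$ and $A_N$ against the contraction $\sigma_N$ inside $\Pi_N$, then $\bar A_2$ and $A_{N-1}$ against the contraction inside $\widetilde\Pi_{N-1}$, and using orthogonality of Gelfand--Tsetlin matrix units together with the content identity \eqref{eq:content} to absorb the $1/(d+\cont(a))$ weights produced by the $G_{\lambda,S}$ factors, the trace reduces to a sum over chains $\nu \subset \lambda \subset \mu$ with $\mu = \lambda \cup a$ for $a \in \AC_d(\lambda)$ and $\nu = \lambda \setminus r$ for $r \in \RC(\lambda)$. Converting $c_\mu$ into $f_\mu$ through \eqref{eq:c_mu_pPBT}, the outer partial trace against $P_\mu$ produces the factor $f_{\lambda \cup a}/m_{\lambda \cup a}$, while the inner partial trace over $\widetilde\Pi_{N-1}$ against $I_{\bar A}$ contributes precisely $m_{\lambda \setminus r}$ for each $r \in \RC(\lambda)$ by \cref{lem:partial_trace_E}. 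Collecting these gives the claimed general formula.

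For the specialization $f_\mu = m_\mu^2 / \sum_\nu m_\nu^2 = m_\mu^2 / \binom{N+d^2-1}{d^2-1}$, substitute to get $f_{\lambda \cup a}/m_{\lambda \cup a} = m_{\lambda \cup a}/\binom{N+d^2-1}{d^2-1}$, apply Pieri's rule $\sum_{a \in \AC_d(\lambda)} m_{\lambda \cup a} = d\, m_\lambda$, and the dual dimension identity (derivable from \eqref{eq:content} and the branching of $\U d$-irreps) to evaluate $\sum_{r \in \RC(\lambda)} m_{\lambda \setminus r}$. The doubly-reduced sum then becomes a multiple of $\sum_{\lambda \pt_d N-1} m_\lambda^2 = \binom{N+d^2-2}{d^2-1}$, and a short manipulation of ratios of binomials produces $N(N-1)/((N+d^2-1)(N+d^2-2))$.

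The main obstacle is the second paragraph: carefully tracking how the Gelfand--Tsetlin path indices of $\Pi_N$ (truncated at level $N-1$) and of $\widetilde\Pi_{N-1}$ (truncated at level $N-2$) combine with the level-$N$ projector $P_\mu$ under two nested partial traces, and verifying that the content-dependent factors from the $G_{\lambda,S}$'s cancel with those from $c_\mu = d^N g(N) m_\mu / d_\mu$ via \eqref{eq:content} to leave precisely the advertised combination $f_{\lambda \cup a}/m_{\lambda \cup a}$. Once this bookkeeping is done, both the general formula and the closed-form specialization drop out.
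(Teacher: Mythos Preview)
Your approach is correct but takes a harder road than the paper. You fully break the permutation symmetry and reduce to a single term with the explicit effect $\widetilde\Pi_{N-1}$, which forces you to expand it via \cref{lem:PGM_new_form} and then track content-weights and path indices through two nested Gelfand--Tsetlin partial traces. The paper instead only uses the $\S_N$-covariance of the first measurement to pull out a factor of $N$, and \emph{keeps the sum} $\sum_{j=1}^{N-1}\widetilde\Pi_j$ intact. The point is that this sum equals $I - \widetilde\Pi_0$, which by the PGM structure is simply the identity on the irreps $(\nu,\0)$ of $\A_{N-1,1}^d$ and zero on $(\lambda,\square)$. Hence $\Tr_{\bar A_2}\!\bigl[\sum_j \widetilde\Pi_j\bigr]$ is computed in one line from \cref{lem:partial_trace_E}, and the remaining factor $\Tr_{\bar A_1,A_N}\!\bigl[\sqrt{\Pi_N}\,OO^\dagger\sqrt{\Pi_N}\bigr]$ is a rank-one calculation in each $(\lambda,\0)$ block. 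No content cancellations need to be chased: the $G_{\lambda,S}$ weights never appear because the projector $\proj{w^\lambda_S}/\|w^\lambda_S\|^2$ is already normalised. Your route works, but the paper's shortcut removes precisely the ``main obstacle'' you identify.

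One slip in your specialization: after Pieri on the $a$-sum you are left with $\sum_{\lambda\pt_d N-1} m_\lambda \sum_{r\in\RC(\lambda)} m_{\lambda\setminus r}$. Reindexing by $\nu=\lambda\setminus r$ and applying Pieri once more gives $d\sum_{\nu\pt_d N-2} m_\nu^2 = d\binom{N+d^2-3}{d^2-1}$, not $\sum_{\lambda\pt_d N-1} m_\lambda^2$. There is no closed ``dual Pieri'' formula for $\sum_r m_{\lambda\setminus r}$ on its own; the reindexing is the correct move, and it lands you at $\nu\pt_d N-2$, which is what produces the factor $(N+d^2-2)$ in the denominator.
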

Notice that $p_{\mathrm{succ}}$ coincides with the success probability of the probabilistic multi-PBT (see Theorem 3 in \cite{Mozrzymas2021optimalmultiport}).

\begin{figure}[!ht]
	\centering
	\begin{subfigure}{0.45\textwidth}
		\begin{tikzpicture}
\begin{axis}[
    width=\linewidth,
    height=0.7\linewidth,
    xmin=0, xmax=200,
    ymin=0, ymax=1.001,
    xlabel={$N$},
    ylabel={$p_{\mathrm{succ}}$},
    grid=both,
    legend style={at={(1.02,0.5)},anchor=west},
    legend cell align=left,
    tick label style={/pgf/number format/fixed}
]

  \addplot[
    only marks,
    mark=*,
    mark size=1.2pt,
    color=blue,
    mark options={fill=blue}
  ] coordinates {
    (5, 5/14) (10, 15/26) (15, 35/51) (20, 190/253) (25, 50/63)
    (30, 145/176) (35, 595/703) (40, 260/301) (45, 165/188)
    (50, 1225/1378) (55, 495/551) (60, 590/651) (65, 1040/1139)
    (70, 805/876) (75, 925/1001) (80, 3160/3403) (85, 595/638)
    (90, 1335/1426) (95, 4465/4753) (100, 1650/1751) (105, 910/963)
    (110, 5995/6328) (115, 2185/2301) (120, 2380/2501)
    (125, 3875/4064) (130, 2795/2926) (135, 3015/3151)
    (140, 9730/10153) (145, 1740/1813) (150, 3725/3876)
    (155, 11935/12403) (160, 4240/4401) (165, 2255/2338)
    (170, 14365/14878) (175, 5075/5251) (180, 5370/5551)
    (185, 8510/8789) (190, 5985/6176) (195, 6305/6501)
    (200, 19900/20503)
  };

  \addplot[
    only marks,
    mark=*,
    mark size=1.2pt,
    color=orange,
    mark options={fill=orange}
  ] coordinates {
    (5, 5/39) (10, 5/17) (15, 105/253) (20, 95/189) (25, 25/44)
    (30, 435/703) (35, 85/129) (40, 65/94) (45, 495/689)
    (50, 1225/1653) (55, 165/217) (60, 885/1139) (65, 520/657)
    (70, 115/143) (75, 2775/3403) (80, 790/957) (85, 595/713)
    (90, 4005/4753) (95, 4465/5253) (100, 275/321) (105, 195/226)
    (110, 5995/6903) (115, 2185/2501) (120, 1785/2032)
    (125, 3875/4389) (130, 2795/3151) (135, 9045/10153)
    (140, 695/777) (145, 290/323) (150, 11175/12403)
    (155, 11935/13203) (160, 1060/1169) (165, 6765/7439)
    (170, 14365/15753) (175, 725/793) (180, 8055/8789)
    (185, 4255/4632) (190, 1995/2167) (195, 18915/20503)
    (200, 4975/5382)
  };

  \addplot[
    only marks,
    mark=*,
    mark size=1.2pt,
    color=green!60!black,
    mark options={fill=green!60!black}
  ] coordinates {
    (5, 1/19) (10, 3/20) (15, 7/29) (20, 38/119) (25, 5/13)
    (30, 29/66) (35, 17/35) (40, 52/99) (45, 33/59)
    (50, 245/416) (55, 99/161) (60, 118/185) (65, 52/79)
    (70, 23/34) (75, 185/267) (80, 632/893) (85, 119/165)
    (90, 267/364) (95, 893/1199) (100, 330/437) (105, 13/17)
    (110, 1199/1550) (115, 437/559) (120, 476/603)
    (125, 775/973) (130, 559/696) (135, 603/745) (140, 278/341)
    (145, 87/106) (150, 745/902) (155, 2387/2873) (160, 848/1015)
    (165, 451/537) (170, 2873/3404) (175, 145/171)
    (180, 1074/1261) (185, 851/995) (190, 1197/1394)
    (195, 1261/1463) (200, 3980/4601)
  };

\end{axis}
\end{tikzpicture}
	\end{subfigure}
	\begin{subfigure}{0.45\textwidth}
		\begin{tikzpicture}
\begin{axis}[
    width=\linewidth,
    height=0.7\linewidth,
    xmin=0, xmax=205,
    ymin=0.97, ymax=1.000,
    xlabel={$N$},
    ylabel={$F_e/p_{\mathrm{succ}}$},
    grid=both,
    legend style={at={(1.02,0.5)},anchor=west},
    legend cell align=left,
    tick label style={/pgf/number format/fixed},
]

  \addplot[
    only marks,
    mark=*,
    mark size=1.2pt,
    color=blue,
    mark options={fill=blue}
  ] coordinates {
    (2, 0.933013) (7, 0.98647) (12, 0.994039) (17, 0.996587)
    (22, 0.997766) (27, 0.998414) (32, 0.99881) (37, 0.999071)
    (42, 0.999253) (47, 0.999385) (52, 0.999484) (57, 0.99956)
    (62, 0.99962) (67, 0.999668) (72, 0.999708) (77, 0.99974)
    (82, 0.999768) (87, 0.999791) (92, 0.99981) (97, 0.999827)
    (102, 0.999842) (107, 0.999855) (112, 0.999866) (117, 0.999876)
    (122, 0.999885) (127, 0.999893) (132, 0.9999) (137, 0.999907)
    (142, 0.999913) (147, 0.999918) (152, 0.999923) (157, 0.999927)
    (162, 0.999931) (167, 0.999935) (172, 0.999938) (177, 0.999941)
    (182, 0.999944) (187, 0.999947) (192, 0.999949) (197, 0.999951)
    (202, 0.999954)
  };
  \addlegendentry{$d = 2$}

  \addplot[
    only marks,
    mark=*,
    mark size=1.2pt,
    color=orange,
    mark options={fill=orange}
  ] coordinates {
    (2, 0.971405) (12, 0.979059) (22, 0.9877) (32, 0.991892)
    (42, 0.994211) (52, 0.995633) (62, 0.996573) (72, 0.997229)
    (82, 0.997706) (92, 0.998066) (102, 0.998344) (112, 0.998564)
    (122, 0.998741) (132, 0.998886) (142, 0.999007) (152, 0.999108)
    (162, 0.999194) (172, 0.999267) (182, 0.999331) (192, 0.999386)
    (202, 0.999434)
  };
  \addlegendentry{$d = 3$}

  \addplot[
    only marks,
    mark=*,
    mark size=1.2pt,
    color=green!60!black,
    mark options={fill=green!60!black}
  ] coordinates {
    (2, 0.984123) (22, 0.982687) (42, 0.988192) (62, 0.991638)
    (82, 0.993765) (102, 0.995154) (122, 0.996112) (142, 0.996802)
    (162, 0.997316) (182, 0.997711) (202, 0.998021)
  };
  \addlegendentry{$d = 4$}

\end{axis}
\end{tikzpicture}
	\end{subfigure}
	\caption{Numerical values of $p_{\mathrm{succ}}$ of two consecutive successful pPBT teleportations employing a recycled resource, and the conditional entanglement fidelity $F_e(\mathcal{N})/p_{\mathrm{succ}}$ of the corresponding teleportation channel $\mathcal{N}$ for $d \in \set{2,3,4}$.}
	\label{fig:ent_fidelity_plot}
\end{figure}

\subsection{Comparison with deterministic multi-PBT scheme}
\label{ssec:multiPBT_comparison}
As has been said, two-step PBT falls within the regime of multi-port based teleportation schemes. Since the descrption of both optimal deterministic inexact ones as well as optimal probabilistic exact is known the performance of two-step pbt can be compared against them.

\begin{theorem}[Theorem~7 from~\cite{Mozrzymas2021optimalmultiport}]
	\label{thm:opt_mpbt}
	The entanglement fidelity of optimal multi-port based deterministic protocol is given by the maximal eigenvalue of so called teleportation matrix $M^{d,k}$
	\begin{equation}
		F_{\mathrm{ent}}= \frac{ 1}{ d^{2k}} \lambda_{\mathrm{max}}(M^{d,k}).
	\end{equation}

	The teleportation matrix is given by
	\begin{equation}
		M^{d,k}(N) = \left(R^d(N)\right)^T\dots\left(R^d(N-k+1)\right)^TR^d(N-k+1)\dots R^d(N)
	\end{equation}
	where $R^d(N) \in \mathbb{Z}^{p(n-1,d)\times p(n,d)}$. The matrix elements of $R^d(N)$ are defined as
	\begin{equation}
		[R^d(N)]_{\lambda\mu} = \begin{cases} 1 & \textrm{when } \exists
              \; a\in \AC_d(\lambda) : \lambda\cup a=\mu, \\
              0 & \textrm{otherwise.}\end{cases}
	\end{equation}
\end{theorem}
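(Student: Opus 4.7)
The plan is to parallel the derivation of \cref{thm:two_step_fidelity} for the two-step case, but with $k$ message registers handled in a single joint POVM. I would begin by writing the entanglement fidelity of deterministic MPBT as a tensor-network contraction in the spirit of \cref{fig:fidelity}. Because the optimal POVM is the PGM, which is covariant under the symmetric group $\S_N$ permuting the $N$ ports, the sum over all $\frac{N!}{(N-k)!}$ ordered destination tuples collapses to a single canonical term (say $\{N-k+1,\dotsc,N\}$) with an explicit combinatorial multiplicity. This yields a trace involving $O_A$, one representative POVM element, the unnormalised projection $\tau_k$ onto $k$ EPR pairs (generalising \cref{eq:tau_2}), and a second copy of $O_A$ and the POVM element.

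The second step is to move all operators into the mixed Schur--Weyl basis for the algebra $\A^d_{N,k}$ from \cref{eq:mixed_schur_weyl}. The optimised preparation $O_A = \sum_{\mu \pt_d N} \sqrt{c_\mu} P_\mu$ is block-diagonal in this basis, and its spectrum defines a vector $v$ indexed by $\mu \pt_d N$ (normalised so that $\|v\|^2 = 1$ corresponds to $\Tr \Psi_{AB} = 1$). The PGM blocks on $\A^d_{N,k}$ are determined by natural $k$-fold generalisations of \cref{eq:sigma_n_Lambda,eq:rho_Lambda,eq:PI_N}. The $k$ EPR projections appearing in $\tau_k$, together with the $k$ contractions $\sigma_{N},\sigma_{N-1},\dotsc,\sigma_{N-k+1}$, impose a branching constraint: each $\mu \pt_d N$ in the ket must be connected to a common $\nu \pt_d N-k$ by \emph{removing} $k$ cells, and then built back up to some $\mu' \pt_d N$ in the bra by \emph{adding} $k$ cells.

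Applying \cref{lem:partial_trace_E} to the $k$ resulting partial traces collapses the Young--Yamanouchi factors for the transpositions (which cancel between bra and ket once PGM covariance is imposed), leaving each cell-removal and cell-addition step as a $0/1$ contribution: valid branching steps contribute $1$, and invalid ones $0$. This is precisely what the entries of $R^d(n)$ record, namely $[R^d(n)]_{\lambda\mu} = 1$ iff $\mu = \lambda \cup a$ for some $a \in \AC_d(\lambda)$. Summing over the intermediate vertex $\nu \pt_d N-k$ turns the fidelity into
\begin{equation}
F_{\ent} = \frac{1}{d^{2k}}\, v\tp \bigl(R^d(N)\bigr)\tp \cdots \bigl(R^d(N-k+1)\bigr)\tp R^d(N-k+1) \cdots R^d(N)\, v = \frac{1}{d^{2k}}\, v\tp M^{d,k}(N)\, v.
\end{equation}
Since normalisation of the resource state amounts to $\|v\|^2=1$, optimisation of this Rayleigh quotient over unit vectors yields $F_{\ent} = \lambda_{\max}(M^{d,k})/d^{2k}$, attained when $v$ is the corresponding Perron eigenvector.

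The main obstacle is Step 3: verifying that the $k$-fold composition of contractions really does reduce to the clean $0/1$ branching matrices $R^d(n)$, rather than the more intricate weighted matrices $H^\nu$ and $S^\nu$ encountered in \cref{thm:two_step_fidelity}, where non-trivial content factors and off-diagonal Young--Yamanouchi terms survive. The cleanness here is a structural consequence of the MPBT POVM being \emph{jointly} optimal for all $k$ destination ports at once, so no sequential measurement introduces interference between different cell-removal orderings; spelling this out carefully via \cref{lem:partial_trace_E} and the identity \cref{eq:content} is where the bulk of the representation-theoretic work resides.
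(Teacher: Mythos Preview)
This theorem is not proved in the present paper. It is quoted verbatim as Theorem~7 of \cite{Mozrzymas2021optimalmultiport} and used only as a benchmark in \cref{ssec:multiPBT_comparison} to compare the two-step scheme against the optimal multi-PBT scheme. There is therefore no ``paper's own proof'' to compare your proposal against; the actual argument lives in the cited reference, and it is not reproduced or paraphrased here.

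As for the content of your sketch: the strategy of writing $F_{\ent}$ as a quadratic form $v\tp M v$ in the Schmidt weights of the resource state and then optimising the Rayleigh quotient is indeed the mechanism behind the result, and your identification of the branching picture (remove $k$ cells from $\mu$ down to $\nu$, then add $k$ cells back) is morally correct. However, your Step~3 is not just an obstacle but the entire content of the theorem, and the hand-wave that ``Young--Yamanouchi factors cancel between bra and ket once PGM covariance is imposed'' is not justified by anything you wrote. In the two-step case treated in \cref{thm:two_step_fidelity} those factors do \emph{not} cancel --- they produce the content-dependent weights in $H^\nu$ --- precisely because two sequential PGMs interfere. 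To get the clean $0/1$ matrices $R^d(n)$ for MPBT you must actually work out the single joint PGM on $\A^d_{N,k}$ and show that the relevant overlaps collapse; invoking \cref{lem:partial_trace_E} and \cref{eq:content} alone will not deliver this. If you want to supply a proof, you should consult \cite{Mozrzymas2021optimalmultiport} directly rather than trying to extrapolate from the two-step machinery of this paper.
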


Observe that the entanglement fidelity in \cref{eq:F_e_answer} can also be used to get the performance of deterministic two-step recycling protocol.
In particular, the matrix $M$ depends only on the choice of measurements in a two-step PBT scheme and since no normalization of entanglement fidelity is required ($\mathcal N$ is trace-preserving, i.e. $p_\mathrm{succ}=1$), one can optimize the coefficients $\set{f_\mu}_\mu$ which define optimized resource state, so that they give the eigenvector corresponding to the maximal eigenvalue of $M$.
It gives the entanglement fidelity of the deterministic protocol, where the measurement effects $\Pi_0$ and $\widetilde{\Pi}_0$ are evenly distributed among the nonfailure effects $\Pi_i$ and $\widetilde{\Pi}_i$.

The matrix $M$ can be viewed as an analogue of the \emph{teleportation matrix} of the multi-dPBT teleportation from \cref{thm:opt_mpbt}.
Obtaining the exact formula for this eigenvalue as well as for the eigenvector seems beyond reach, so we provide numerical results in \cref{fig:ent_fidelity_mpbt_comparison} along with the performance of the corresponding multiport-based scheme.

\begin{figure}[H]
	\begin{tikzpicture}
  \begin{axis}[
    width=0.8\textwidth,
    height=0.4\textwidth,
    xmin=0, xmax=100,
    ymin=0.4, ymax=1,
    xlabel={$N$},
    ylabel={$F_e$},
    legend style={at={(1.02,0.5)},anchor=west},
    grid=both,
    tick label style={/pgf/number format/fixed},
  ]

    \addplot[
      only marks,
      mark=*,
      mark size=1.5pt,
      color = red,
    ] coordinates {
      (2,0.125)
      (4,0.475347)
      (6,0.682499)
      (8,0.792379)
      (10,0.854969)
      (12,0.893411)
      (14,0.918536)
      (16,0.935795)
      (18,0.948134)
      (20,0.95725)
      (22,0.964169)
      (24,0.969541)
      (26,0.973793)
      (28,0.977216)
      (30,0.980012)
      (32,0.982324)
      (34,0.984257)
      (36,0.985891)
      (38,0.987283)
      (40,0.988479)
      (42,0.989515)
      (44,0.990417)
      (46,0.991207)
      (48,0.991904)
      (50,0.992521)
      (52,0.99307)
      (54,0.993561)
      (56,0.994002)
      (58,0.994398)
      (60,0.994757)
      (62,0.995083)
      (64,0.995379)
      (66,0.995649)
      (68,0.995896)
      (70,0.996123)
      (72,0.996331)
      (74,0.996523)
      (76,0.9967)
      (78,0.996865)
      (80,0.997017)
      (82,0.997158)
      (84,0.99729)
      (86,0.997412)
      (88,0.997527)
      (90,0.997634)
      (92,0.997734)
      (94,0.997828)
      (96,0.997917)
      (98,0.998)
      (100,0.998078)
    };
    \addlegendentry{Multi PBT, $d=2 \quad$}

    \addplot[
      only marks,
      mark=square*,
      mark size=0.8pt,
      color = blue,
    ] coordinates {
      (2,0.116627)
      (4,0.462253)
      (6,0.672142)
      (8,0.784749)
      (10,0.84928)
      (12,0.889058)
      (14,0.915118)
      (16,0.933048)
      (18,0.945884)
      (20,0.955374)
      (22,0.962583)
      (24,0.968183)
      (26,0.972619)
      (28,0.97619)
      (30,0.979108)
      (32,0.981522)
      (34,0.983541)
      (36,0.985247)
      (38,0.986701)
      (40,0.987951)
      (42,0.989033)
      (44,0.989976)
      (46,0.990802)
      (48,0.99153)
      (50,0.992175)
      (52,0.99275)
      (54,0.993263)
      (56,0.993724)
      (58,0.994139)
      (60,0.994514)
      (62,0.994854)
      (64,0.995164)
      (66,0.995447)
      (68,0.995705)
      (70,0.995942)
      (72,0.99616)
      (74,0.996361)
      (76,0.996547)
      (78,0.996718)
      (80,0.996878)
      (82,0.997025)
      (84,0.997163)
      (86,0.997291)
      (88,0.997411)
      (90,0.997523)
      (92,0.997628)
      (94,0.997727)
      (96,0.997819)
      (98,0.997906)
      (100,0.997988)
    };
    \addlegendentry{Two-step PBT, $d=2$}

    \addplot[
      only marks,
      mark=triangle*,
      mark size = 2pt,
      color = red,
    ] coordinates {
      (5,0.218495)
      (10,0.540299)
      (15,0.7164)
      (20,0.811214)
      (25,0.86631)
      (30,0.900718)
      (35,0.923506)
      (40,0.939326)
      (45,0.950733)
      (50,0.959218)
      (55,0.965696)
      (60,0.970751)
      (65,0.974769)
      (70,0.978015)
      (75,0.980675)
      (80,0.98288)
      (85,0.98473)
      (90,0.986295)
      (95,0.987632)
      (100,0.988783)
    };
    \addlegendentry{Multi PBT, $d=3 \quad$}

    \addplot[
      only marks,
      mark=diamond*,
      mark size=1.5pt,
      color = blue, 
    ] coordinates {
      (5,0.21111)
      (10,0.530505)
      (15,0.708743)
      (20,0.805554)
      (25,0.862068)
      (30,0.897458)
      (35,0.920937)
      (40,0.937255)
      (45,0.949031)
      (50,0.957798)
      (55,0.964493)
      (60,0.96972)
      (65,0.973876)
      (70,0.977234)
      (75,0.979986)
      (80,0.982268)
      (85,0.984183)
      (90,0.985803)
      (95,0.987188)
      (100,0.988379)
    };
    \addlegendentry{Two-step PBT, $d=3$}

  \end{axis}
\end{tikzpicture}
	\caption{Comparison of deterministic two-step recycling PBT scheme with optimal deterministic multi PBT for two systems for $d \in \set{2,3}$.}
	\label{fig:ent_fidelity_mpbt_comparison}
\end{figure}

The low discrepancy between the optimal mPBT fidelity and two-step one is remarkable since in the latter case the fixed measurements are used (i.e. two optimal one-step pPBT PGMs) and the only optimization is done over the resource  (via coefficients $f_\mu$).


\section{Degradation of the resource state in probabilistic port-based teleportation}
\label{sec:degradation}
The result in the previous section decribes fully the quality of two-step protocol, however it is restricted to the repetitive application of PGM. However, in the probabilistic PBT with EPR resource, the POVM used is of different form~\cite{ishizaka_quantum_2009}. In order to study the possibility of reusing the resource one has to resort to indirect approach.
The quantity of interest now is how well the resource state is preserved after one (and possibly more) round of teleportation scheme. More specifically, we are interested in \emph{recycling fidelity}, i.e.~the overlap between the actual state after the measurement $\Psi_{\mathrm{out}}^{(i)}$ and the ideal state $\Psi_{\mathrm{id}}^{(i)}$ that corresponds to no resource degradation
\begin{equation}
	\label{eq:recycling_fidelity}
	F_{\mathrm{rec}}(N, d) = \sum_{ i=0}^{N} p_i F(\Psi_{\mathrm{out}}^{(i)}, \Psi_{\mathrm{id}}^{(i)}) = \sum_{ i=0}^{N} p_i \frac{ \braket{\Psi_{\mathrm{out}}^{(i)}}{\Psi_{\mathrm{id}}^{(i)}}}{ \norm{\ket{\Psi_{\mathrm{out}}^{(i)}}}_2 } ,
\end{equation}
where $i$ ranges over possible measurement outcomes and their corresponding post-measurement states. One can in particular discuss \emph{conditional recycling fidelity}
\begin{equation}
	\label{eq:cond_recycling_fidelity_succ}
	F_\mathrm{rec}^\mathrm{succ}(N, d) = \frac{ \sum_{ i=1}^{N} p_i F(\Psi_{\mathrm{out}}^{(i)}, \Psi_{\mathrm{id}}^{(i)})}{ \sum_{ i=1}^{N} p_i } = F(\Psi_{\mathrm{out}}^{(N)}, \Psi_{\mathrm{id}}^{(N)})
\end{equation}
and
\begin{equation}
	\label{eq:cond_recycling_fidelity_fail}
	F_\mathrm{rec}^\mathrm{fail}(N, d) = \frac{  p_0 F(\Psi_{\mathrm{out}}^{(0)}, \Psi_{\mathrm{id}}^{(0)})}{  p_0 } =F(\Psi_{\mathrm{out}}^{(0)}, \Psi_{\mathrm{id}}^{(0)}),
\end{equation}
where the last equality in~\eqref{eq:cond_recycling_fidelity_succ} follows from the covariance of the protocol with respect to the action of the symmetric group $\S_N$, which simply means that the teleportation characteristics cannot depend on the port where the teleported state arrives.
Thus, the quantities of interest are the overlaps of the post-measurement states and the corresponding ideal states.

\subsection{Degradation of the resource state in standard pPBT}
\label{ssec:degr_standard}
In the standard scheme, the parties share $N$ EPR pairs, so the input state is
\begin{equation}
	\label{eq:input_state_standard_pPBT}
	\ket{\Psi_\mathrm{in}} = \ket{\Psi^+}_{AB} \otimes \ket{\psi^+}_{\bar A_1R_1}.
\end{equation}
where $\ket{\psi^+}_{\bar A_1 R_1}$ is a maximally entangled state shared between Alice and a reference system, and the goal is to teleport the $\bar A_1$ system from Alice to Bob. The unnormalized outcome state after the measurement is
\begin{equation}
	\label{eq:post_measurement_a_pPBT}
	\ket{\Psi_{\mathrm{out}}^{(i)}} = \sqrt{\Pi_i}  \bigl(  \ket{\Psi^+}_{AB} \otimes \ket{\psi^+}_{\bar A_1R_1}\bigr).
\end{equation}
The ideal state corresponding to successful teleportation to the $i$-\emph{th} port ($i=1,\dots,N$) reads
\begin{equation}
	\label{eq:id_ith_state_pPBT}
	\ket{\Psi_{\mathrm{id}}^{(i)}} \defeq \bigotimes_{\substack{j = 1\\j \neq i}}^N	\ket{\psi^+}_{{A_j}{B_j}}\otimes\ket{\psi^+}_{{\bar A_1}{A_i}}\otimes \ket{\psi^+}_{{R_1}{B_i}}.
\end{equation}
Whenever the teleportation ends with failure, we postulate that the idealized state is the same as the initial one, since no teleportation takes place at all.
\begin{equation}
	\label{eq:id_0_state_pPBT}
	\ket{\Psi_{\mathrm{id}}^{(0)}} \defeq \ket{\Psi_\mathrm{in}} =  \ket{\Psi^+}_{AB} \otimes \ket{\psi^+}_{\bar A_1R_1}.
\end{equation}
Unrealistic this choice may seem, we are looking for the sufficient condition, so any state suitable for the teleportation of another state is worth examination.
%

\subsubsection{Conditional fidelity after success}
From \cite{Studziński_2022} we have the following relation:
\begin{equation}
	\of[\Big]{\of[\big]{(i,N) \sigma_N (i,N)} \otimes I_{B{R_1}}}
	\ket{\Psi_\mathrm{in}}=\ket{\Psi_{\mathrm{id}}^{(i)}}.
\end{equation}
For the maximally entangled resource state we get the following overlap
\begin{equation}
	\braket
	{\Psi_{\mathrm{out}}^{(i)}}
	{\Psi_{\mathrm{id}}^{(i)}}
	= \frac{1}{d^{N+1}} \Tr \left(\sqrt{\Pi_i} (i,N) \sigma_N (i,N) \right).
\end{equation}
The normalization is
\begin{align}
	\left|\left|(\sqrt{\Pi_i}_{\bar A_1 A}\otimes I_{R_1B})\ket{\Psi_\mathrm{in}}\right|\right|_2
	= \frac{1}{\sqrt{d^{N+1}}}
	\left(\Tr{\Pi_i}\right)^{1/2}.
\end{align}
Due to covariance, see \Cref{eq:cond_recycling_fidelity_succ}, it is enough to only consider $i=N$. The expression for recycling fidelity thus becomes
\begin{equation}
	\label{eq:F_succ_std_pPBT}
	F_\mathrm{rec}^\mathrm{succ}(N,d)=	\frac{ 1}{ \sqrt{d^{N+1}}}\frac{\Tr(\sqrt{\Pi_N}\sigma_N)}{\sqrt{\Tr{\Pi_N}}}.
\end{equation}

\begin{theorem}
	\label{thm:F_succ_std_pPBT}
	The conditional recycling fidelity in the case of success after one round of pPBT protocol is
	\begin{equation}
		\label{eq:F_succ_std_pPBT_explicit}
		F_{\mathrm{rec}}^{\mathrm{succ}} = \frac{1}{ \sqrt {d^{N-1}}}\frac{\sum_{\lambda\pt_d N-1} \sqrt{ \frac{ 1}{ \gamma(\lambda)} } m_\lambda d_\lambda}{\sqrt{\sum_{\lambda\pt_d N-1} \frac{ 1}{ \gamma(\lambda)} m_\lambda d_\lambda}}.
	\end{equation}
\end{theorem}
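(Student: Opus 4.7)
The plan is to start directly from the closed expression \eqref{eq:F_succ_std_pPBT}, so the entire argument reduces to evaluating the two traces $\Tr(\Pi_N)$ and $\Tr(\sqrt{\Pi_N}\,\sigma_N)$. Using \eqref{eq:standard_pPBT_measurement}, I would write $\Pi_N = \psi^+_{A_N \bar A_1} \otimes \Theta_{N^c}$ with $\Theta = \sum_{\lambda \pt_d N-1} \tfrac{d}{\gamma(\lambda)}\,P_\lambda$, and then exploit the fact that $\psi^+$ is a rank-one projector while the isotypic projectors $\{P_\lambda\}$ are pairwise orthogonal and sum to the identity on $(\C^d)^{\otimes N-1}$. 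This yields the clean spectral form $\sqrt{\Pi_N} = \psi^+_{A_N \bar A_1} \otimes \sum_{\lambda \pt_d N-1} \sqrt{d/\gamma(\lambda)}\,P^{N^c}_\lambda$.

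The next step is to recognise that, restricted to the two sites $A_N$ and $\bar A_1$, the contraction $\sigma_N$ defined by \eqref{eq:Brauer action} coincides with $d\,\psi^+_{A_N \bar A_1}$, which is immediate from its action on computational basis vectors. Multiplying, one gets $\sqrt{\Pi_N}\,\sigma_N = d\,\psi^+_{A_N \bar A_1} \otimes \sqrt{\Theta_{N^c}}$, because $\psi^+$ squared is again $\psi^+$. The trace then factorises across the two commuting subsystems; using $\Tr(\psi^+) = 1$ on $A_N\bar A_1$ together with the standard identity $\Tr(P_\lambda) = m_\lambda d_\lambda$ on the remaining $N-1$ qudits, I obtain
\begin{align}
\Tr\of*{\sqrt{\Pi_N}\,\sigma_N} &= d^{3/2} \sum_{\lambda \pt_d N-1} \frac{m_\lambda d_\lambda}{\sqrt{\gamma(\lambda)}}, \\
\Tr(\Pi_N) &= d \sum_{\lambda \pt_d N-1} \frac{m_\lambda d_\lambda}{\gamma(\lambda)}.
\end{align}

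Substituting both expressions into \eqref{eq:F_succ_std_pPBT} and collecting powers of $d$ gives the overall prefactor $d^{3/2}/\of*{\sqrt{d}\cdot d^{(N+1)/2}} = 1/d^{(N-1)/2}$, which is precisely the prefactor in \eqref{eq:F_succ_std_pPBT_explicit}; the remaining ratio of sums over $\lambda \pt_d N-1$ matches verbatim. There is no conceptual obstacle in the argument: the only non-trivial ingredients are the identification $\sigma_N = d\,\psi^+$ on the contracted pair of sites and the Schur--Weyl trace $\Tr P_\lambda = m_\lambda d_\lambda$, both of which are standard. The one place where it is easy to slip is the accounting of $d$-factors between the numerator, denominator, and outer prefactor, so I would carry out that bookkeeping step explicitly.
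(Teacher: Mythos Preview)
Your proof is correct and follows essentially the same route as the paper: both compute $\sqrt{\Pi_N}$ from the tensor-product pseudo-projector form \eqref{eq:standard_pPBT_measurement}, identify $\sigma_N = d\,\psi^+$ on the contracted pair, and evaluate the two traces using $\Tr P_\lambda = m_\lambda d_\lambda$. Your bookkeeping of the $d$-factors is in fact cleaner than the paper's (which contains the typo ``$\sigma_N = \sqrt{d}\,\psi^+$'' even though its subsequent computation uses the correct factor $d$).
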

\begin{proof}

	The measurement in question is given by~\eqref{eq:standard_pPBT_measurement} and is a pseudo-projector and thus its square root is given by
	\begin{equation}
		\label{eq:pi_a_sq_std_pPBT}
		\sqrt{ \Pi_{N}} =	\sum_{\lambda\vdash_d N-1}\psi_{\bar A_1, A_N}^+\otimes \sqrt{ \frac{ d}{ \gamma(\lambda)} } P_{\lambda}^{{N^c}}.
	\end{equation}
	Thus, using $\sigma_N = \sqrt{d} \psi^+$, the expression for $\Tr (\sqrt{ \Pi_N} \sigma_N)$ becomes
	\begin{align}
		\Tr(\sqrt{\Pi_N}\sigma_N) & = d \sum_{\lambda\vdash_d N-1}\sqrt{\frac{ d}{ \gamma(\lambda)}} \Tr (\psi^+_{\bar A_1,A_N}\otimes P_\lambda) = d \sum_{\lambda\vdash_d N-1}\sqrt{\frac{ d}{ \gamma(\lambda)}} m_\lambda d_\lambda
	\end{align}
	and similarly
	\begin{equation}
		\Tr\Pi_N = \sum_{\lambda \vdash_d N-1} \frac{ d}{ \gamma(\lambda)} m_\lambda d_\lambda
	\end{equation}
	which substituted in~\eqref{eq:F_succ_std_pPBT} gives the desired result.
\end{proof}

\subsubsection{Conditional fidelity after failure}
The ideal state in the event of failure is the initial resource state, because one assumes that no teleportation takes place $\ket{\Psi^{(0)}_{\mathrm{id}}} = \ket{\Psi^+}_{AB}$. Thus, the overlap between actual post-failure resource state and the ideal state is
\begin{equation}
	F( \Psi^{(0)}_{\mathrm{id}}, \Psi^{(0)}_{\mathrm{out}})= \frac{ 1}{ \sqrt{ d^{N+1}} } \frac{ \Tr \sqrt{  \Pi_0}}{ \sqrt{ \Tr \Pi_0}  } .
\end{equation}
\begin{theorem}
	\label{thm:rec_fidelity_fail_std_pPBT}
	The conditional recycling fidelity in the case of failure in the standard pPBT protocol is
	\begin{equation}
		F^\mathrm{rec}_{\mathrm{fail}} = \frac{1}{\sqrt{d^{N+1}}}\frac{\Tr\sqrt{\Pi_0}}{\sqrt{\Tr\Pi_0}} =
		\frac{1 - \frac{1}{d^{N+1}} \sum_{\lambda \pt_{d} N-1} \sum_{a\in \AC_d{(\lambda)}} (1- \sqrt{1-
				g_{\lambda,a}}) m_\lambda d_{\lambda\cup{a}}}
		{\sqrt{ 1 - \frac{1}{d^{N+1}}\sum_{\lambda \pt_{d} N-1} \sum_{a\in \AC_d{(\lambda)}} g_{\lambda,a} m_\lambda d_{\lambda\cup{a}} }}.
	\end{equation}
	where
	\begin{equation}
		g_{\lambda, a} \defeq \frac{\gamma_{a}(\lambda) }{\gamma(\lambda)} = \frac{d+\cont(a)}{d + \lambda_1}
	\end{equation}
\end{theorem}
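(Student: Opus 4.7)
\emph{Proof proposal.} The first equality is a direct unpacking of definitions, exactly as in the proof of \cref{thm:F_succ_std_pPBT}: since $\ket{\Psi_\mathrm{id}^{(0)}} = \ket{\Psi_\mathrm{in}}$ and $\ket{\Psi_\mathrm{in}}$ is maximally entangled across $A\bar A_1$ with its purifying registers $BR_1$, one gets $\braket{\Psi_\mathrm{out}^{(0)}}{\Psi_\mathrm{id}^{(0)}} = \Tr\sqrt{\Pi_0}/d^{N+1}$ and $\|\ket{\Psi_\mathrm{out}^{(0)}}\|_2^2 = \Tr\Pi_0/d^{N+1}$, so all the work lies in evaluating the two traces of $\Pi_0 = I - \sum_{i=1}^N \Pi_i$. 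My plan is to obtain a full spectral decomposition of $\Pi_0$ using the mixed Schur--Weyl decomposition of $A\bar A_1 = (\C^d)^{\otimes N+1}$ under $\A_{N,1}^d \times \U{d}$, exploiting that $\sum_i \Pi_i$ lies in $\A_{N,1}^d$ (hence commutes with $\U{d}$) and is $\S_N$-invariant by covariance \cref{eq:group_cov}.

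On every $(\mu,\square)$-isotypic, the contraction $\sigma_N$ vanishes by \cref{eq:sigma_n_Lambda}; since $\psi^+_{A_N \bar A_1} = \sigma_N/d$ and since conjugation by $(i,N) \in \S_N$ preserves each $\A_{N,1}^d$-isotypic, every $\Pi_i = \psi^+_{A_i \bar A_1} \otimes \Theta_{i^c}$ annihilates these sectors and $\Pi_0$ acts on the $(\mu,\square)$-sector as the identity. On a $(\lambda,\0)$-isotypic $\mathcal H_{(\lambda,\0)}^\A \otimes \mathcal W_\lambda^{\U{d}}$ with $\lambda \pt_d N-1$, the Bratteli-diagram branching rule for $\A_{N,1}^d \supset \S_N$ decomposes $\mathcal H_{(\lambda,\0)}^\A$ as $\bigoplus_{a \in \AC_d(\lambda)} \mathcal H_{\lambda \cup a}^{\S_N}$ (paths are graded by their level-$N$ vertex), and Schur's lemma then forces $\sum_i \Pi_i$ to act as a scalar $c_{\lambda,a}$ on each $\mathcal H_{\lambda \cup a}^{\S_N} \otimes \mathcal W_\lambda^{\U{d}}$, a block of dimension $m_\lambda d_{\lambda \cup a}$.

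The main step is computing $c_{\lambda,a}$. Let $Q_a^\Lambda$ be the $\S_N$-invariant projector onto $\mathcal H_{\lambda \cup a}^{\S_N} \subset \mathcal H_{(\lambda,\0)}^\A$. The invariance $(i,N) Q_a^\Lambda (i,N) = Q_a^\Lambda$ together with covariance collapses $\Tr[(\sum_i \Pi_i^\Lambda) Q_a^\Lambda]$ to $N \Tr[\Pi_N^\Lambda Q_a^\Lambda]$. Working in the Gelfand--Tsetlin basis with $\Pi_N = \tfrac{1}{d}\sigma_N \Theta_{N^c}$, only paths $\ket{S \to \lambda \cup a \to (\lambda,\0)}$ with $S_{N-1}=\lambda$ contribute (the other paths terminate inside $\ker \sigma_N^\Lambda$), each with diagonal entry $m_{\lambda\cup a}/(\gamma(\lambda) m_\lambda)$ read off from \cref{gtbasis:contraction1} combined with the $\Theta_{N^c}$-eigenvalue $d/\gamma(\lambda)$. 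Summing over the $d_\lambda$ choices of $S$ gives $c_{\lambda,a} d_{\lambda\cup a} m_\lambda = N d_\lambda m_{\lambda \cup a}/\gamma(\lambda)$, and applying the content formula \cref{eq:content} to eliminate $N d_\lambda m_{\lambda \cup a}$ yields $c_{\lambda,a} = (d+\cont(a))/(d+\lambda_1) = g_{\lambda,a}$.

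With the spectrum of $\Pi_0$ identified -- eigenvalue $1$ on the full $(\mu,\square)$-sector together with eigenvalue $1 - g_{\lambda,a}$ of multiplicity $m_\lambda d_{\lambda\cup a}$ on each $(\lambda,a)$-block -- both traces follow by a dimension count. The branching identity $\sum_a d_{\lambda\cup a} = N d_\lambda$ gives $\sum_{\lambda,a} m_\lambda d_{\lambda\cup a} = N d^{N-1}$, so the $(\mu,\square)$-sector accounts for dimension $d^{N+1} - N d^{N-1}$; hence $\Tr \sqrt{\Pi_0} = d^{N+1} - \sum_{\lambda,a}(1 - \sqrt{1-g_{\lambda,a}}) m_\lambda d_{\lambda\cup a}$ and $\Tr \Pi_0 = d^{N+1} - \sum_{\lambda,a} g_{\lambda,a} m_\lambda d_{\lambda\cup a}$, and substituting into the first equality gives the claimed formula. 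The main obstacle I expect is the eigenvalue computation in paragraph three: one has to carefully track which Gelfand--Tsetlin paths contribute to $\Tr[\Pi_N^\Lambda Q_a^\Lambda]$ and invoke \cref{eq:content} at the right moment to convert between $\S_N$- and $\U{d}$-dimensions.
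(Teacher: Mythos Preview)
Your proof is correct and follows essentially the same route as the paper: both obtain the block-diagonal form $\hat G^{(\lambda,\0)} = \sum_a g_{\lambda,a} \Pi_{\lambda,a}$ with $\hat G^{(\mu,\square)} = 0$ and then read off the two traces. The only difference is that the paper imports this spectral formula directly from \cite{Grinko2024}, whereas you derive it from scratch via Schur's lemma and the Gelfand--Tsetlin trace computation culminating in \cref{eq:content}; this is a pleasant self-contained alternative. One small slip: the branching identity $\sum_{a\in\AC_d(\lambda)} d_{\lambda\cup a} = N d_\lambda$ fails when $\ell(\lambda)=d$ (the addable cell in row $d{+}1$ is excluded from $\AC_d$), so your claimed dimension $d^{N+1}-N d^{N-1}$ for the $(\mu,\square)$-sector is not quite right. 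Fortunately this specific value is never actually used---only the complementarity $\dim\of*{(\mu,\square)\text{-sector}} = d^{N+1} - \sum_{\lambda,a} m_\lambda d_{\lambda\cup a}$ enters the trace computation, and your final formulas for $\Tr\Pi_0$ and $\Tr\sqrt{\Pi_0}$ are correct as stated.
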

\begin{proof}
	The effect $\Pi_0$ is an element of the algebra of partially transposed permutation operators $\A_{N,1}^d$. It is given by $\Pi_0 = I - \hat{G}$  and in the Schur basis it is given by \cite{Grinko2024}
	\begin{equation}
		\label{eq:POVM_element_failure}
		\hat{G}^{(\lambda,\0)} = \sum_{\lambda \pt_d N-1}
		\sum_{a \in \AC_d(\lambda)}
		g_{\lambda,a} \Pi_{\lambda,a}, \qquad
		\Pi_{\lambda,a}
		= \sum_{S\in\Paths_N{(\lambda\cup{a})}}
		\proj{S \to (\lambda,\0)}.
	\end{equation}
	Since $\Tr \Pi_{\lambda,a} = m_\lambda d_{\lambda\cup{a}}$,
	\begin{equation}
		\label{eq:tr_pi_0}
		\Tr { \Pi_0}= d^{N+1} - \sum_{\lambda \pt_{d} N-1} \sum_{a\in \AC_d{(\lambda)}} g_{\lambda,a} m_\lambda d_{\lambda\cup{a}}
	\end{equation}
	as well as
	\begin{equation}
		\label{eq:tr_sqrt_pi_0}
		\Tr \sqrt{ \Pi_0}={d^{N+1} - \sum_{\lambda \pt_{d} N-1} \sum_{a\in \AC_d{(\lambda)}} (1- \sqrt{1-
			g_{\lambda,a}}) m_\lambda d_{\lambda\cup{a}}}.
	\end{equation}
	The second expression comes from the fact that $\Pi_{\lambda,a}$ is a projector so in every irrep $(\lambda,\0)$ and $ \sqrt{  \Pi_0}^{(\lambda,\0)}$ is a diagonal matrix with the entries given by $ \sqrt{  1-g_{\lambda,a}}$ and one has to substract the excessive 1's from those irreps from the full trace of $I$.
	Thus \cref{eq:tr_pi_0,eq:tr_sqrt_pi_0} lead to the statement of the Theorem.
\end{proof}
%

\subsubsection{Summary for the standard pPBT}
The numerical values of the conditional recycling fidelity both in case of success and failure are depicted in \cref{fig:recycling_fidelity_std_pPBT}.

\begin{figure}[ht!]
	\centering
	\begin{subfigure}{.45\textwidth}
		\begin{tikzpicture}
\begin{axis}[
    width=\linewidth,
    height=0.7\linewidth,
    xmin=0, xmax=100,
    ymin=0.9979, ymax=0.9997,
    xlabel={$N$},
    ylabel={$F_{\mathrm{succ}}^{\mathrm{rec}}$},
    grid=both,
    tick label style={/pgf/number format/fixed},
    legend style={at={(1.02,0.5)},anchor=west},
    legend cell align=left,
    scaled y ticks = false, 
    yticklabel style={
        /pgf/number format/fixed,
        /pgf/number format/precision=4
    },
]

  \addplot[
    only marks,
    mark=*,
    mark size=1.5pt,
    color=blue,
    mark options={fill=blue}
  ] coordinates {
    (5,0.997977) (10,0.998375) (15,0.998591) (20,0.998798)
    (25,0.998938) (30,0.999056) (35,0.999145) (40,0.999221)
    (45,0.999282) (50,0.999335) (55,0.999379) (60,0.999419)
    (65,0.999452) (70,0.999483) (75,0.99951)  (80,0.999534)
    (85,0.999556) (90,0.999576) (95,0.999593) (100,0.99961)
  };

  \addplot[
    only marks,
    mark=*,
    mark size=1.5pt,
    color=orange,
    mark options={fill=orange}
  ] coordinates {
    (5,0.998409) (10,0.998345) (15,0.998466) (20,0.998597)
    (25,0.998715) (30,0.998815) (35,0.998901) (40,0.998975)
    (45,0.999039) (50,0.999096) (55,0.999146) (60,0.99919)
    (65,0.999229) (70,0.999265) (75,0.999298) (80,0.999327)
    (85,0.999354) (90,0.999379) (95,0.999401) (100,0.999423)
  };

\end{axis}
\end{tikzpicture}
	\end{subfigure}
	\hspace{1mm}
	\begin{subfigure}{.45\textwidth}
		\begin{tikzpicture}
\begin{axis}[
    width=\linewidth,
    height=0.7\linewidth,
    xmin=0, xmax=100,
    ymin=0.70, ymax=0.92,
    xlabel={$N$},
    ylabel={$F_{\mathrm{fail}}^{\mathrm{rec}}$},
    grid=both,
    tick label style={/pgf/number format/fixed},
    legend style={at={(1.02,0.5)},anchor=west},
    legend cell align=left,
]

  \addplot[
    only marks,
    mark=*,
    mark size=1.5pt,
    color=blue,
    mark options={fill=blue}
  ] coordinates {
    (5,0.806503) (10,0.782474) (15,0.766308) (20,0.758314)
    (25,0.751239) (30,0.746897) (35,0.742707) (40,0.739874)
    (45,0.737024) (50,0.734985) (55,0.732886) (60,0.731326)
    (65,0.729696) (70,0.728452) (75,0.72714)  (80,0.726117)
    (85,0.72503)  (90,0.724169) (95,0.72325)  (100,0.722512)
  };
  \addlegendentry{$d = 2$}

  \addplot[
    only marks,
    mark=*,
    mark size=1.5pt,
    color=orange,
    mark options={fill=orange}
  ] coordinates {
    (5,0.90547)  (10,0.884583) (15,0.872636) (20,0.86466)
    (25,0.858814) (30,0.854273) (35,0.850605) (40,0.847558)
    (45,0.844971) (50,0.842736) (55,0.84078)  (60,0.839047)
    (65,0.837497) (70,0.836101) (75,0.834833) (80,0.833675)
    (85,0.832611) (90,0.83163)  (95,0.83072)  (100,0.829874)
  };
  \addlegendentry{$d = 3$}

\end{axis}
\end{tikzpicture}
	\end{subfigure}
	\caption{Conditional recycling fidelity for success (left) and failure (right) in standard pPBT scheme, for $d=2$ (blue dots) and $d=3$ (orange dots).\label{fig:recycling_fidelity_std_pPBT}}
\end{figure}

Provided that the first round of the teleportation succeeds, allowing for arbitrarily large resources, the resulting resource state converges to the ideal state. In that case, the remaining ports can be reused for another round of pPBT. In case of failure, the problem is unresolved -- one cannot infer whether the remaining resource is sufficient for a further application of the standard pPBT protocol. However, it is known~\cite{ishizaka_quantum_2009, Studzinski2017, majenz2} that $p_{\mathrm{fail}}=1-p_{\mathrm{succ}} \to 0$ when $N\to\infty$. This means that $\lim_{N\to \infty}F_{\mathrm{rec}}(N, d)= \lim_{N\to \infty}F_{\mathrm{succ}}^{\mathrm{rec}}(N, d)$, meaning that an arbitrarily large resource state can be reused after performing one round of the standard pPBT protocol.

\subsection{Degradation of the resource state for pPBT with optimised resource}
Although the numerical analysis of the \Cref{thm:two_step_fidelity,thm:two_step_psucc} show that two step pPBT with optimised resource is a faithful protocol for large $N$, for the completeness of the paper the analysis of the degradation of the resource state after the first round of the protocol is presented in the following subsections.
\subsubsection{Conditional fidelity after success}

If the teleportation procedure succeeds, the ideal state
\begin{equation}
	\label{eq:id_state_a_opt_pPBT}
	\ket{\Psi_{\mathrm{id}}^{(i)}} =
	\ket{\psi^+}_{{\bar A_1}{A_i}} \otimes
	\ket{\psi^+}_{{R_1}{B_i}} \otimes
	\bigl(O_{\widetilde A}\otimes I_{}\bigr) \bigotimes_{j\neq i}
	\ket{\psi^+}_{{A_j}{B_j}}
\end{equation}
can be written as
\begin{equation}
	\label{eq:id_state_a_opt_pPBT_sigma}
	\ket{\Psi_{\mathrm{id}}^{(i)}}
	= \sof[\big]{(i,N) \sigma_N (i,N)}
	\ket{\psi^+}_{\bar A_1R_1}
	\ket{\psi^+}_{A_iB_i} \otimes
	\bigl(O_{\widetilde A} \otimes I\bigr)
	\bigotimes_{j\neq i} \ket{\psi^+}_{{A_j}{B_j}},
\end{equation}
whereas the actual unnormalized outcome state is
\begin{equation}
	\label{eq:post_measurement_a_opt_pPBT}
	\ket{\Psi_{\mathrm{out}}^{(i)}} = \sqrt{\Pi_i}  \bigl(  (O_A \otimes I_{B} )\ket{\Psi^+}_{AB} \otimes \ket{\psi^+}_{\bar A_1R_1}\bigr).
\end{equation}
where $O_{\widetilde{A}}$ is an optimization operator that acts on $\widetilde{A}=1,\dots, N-1$, that is on the new set of indices on Alice's side, after performing the first teleportation (under the assumption that the first state was teleported to $N$th port).
Due to the covariance of the protocol, see~\eqref{eq:cond_recycling_fidelity_succ}, one can restrict oneself to the study of one effect, namely $\Pi_N$. The overlap between the ideal and the unnormalized outcome resource state (for $i=N$) is then given by
\begin{align}
	\label{eq:F_succ_opt_pPBT_unnormalised}
	F_\mathrm{rec}^\mathrm{succ} = F(\Psi_{\mathrm{out}}^{(N)}, \Psi_{\mathrm{id}}^{(N)}) = \frac{ 1}{ \sqrt{ d^{N+1}} } \frac{ \Tr (\sigma_N \sqrt{ \Pi_N} O_AO_{\widetilde{A}})}{\sqrt{\Tr(\Pi_N O_A O_A)}}.
\end{align}

\begin{theorem}[Theorem 14 from~\cite{Studziński_2022}, rephrased]
	The recycling fidelity after success in one round of optimal pPBT is given by
	\begin{equation}
		F_\mathrm{rec}^\mathrm{succ} = \frac{ 1}{ d^{ \frac{ N+1}{2 } }} \Bigl(  \sum_{\substack{ \lambda \vdash_d N-1}} \sum_{a \in \AC_d(\lambda)} c_\lambda^{1/2} c_{\lambda\cup{a}}^{1/2}  \frac{ d_\lambda \sum_{ a'\in \AC_d(\lambda)}^{} \sqrt{ m_{\lambda\cup{a'}} d_{\lambda\cup{a'}}}} { \sqrt{ \sum_{ a\in \AC_d(\lambda)}^{} d_{\lambda\cup {a}}} }  \frac{ \sqrt{ m_{\lambda\cup{a}} d_{\lambda\cup{a}}} }{ \sqrt{  N d_\lambda }}  \Bigr) \Bigl/ \sqrt{\sum_{ \lambda \vdash_d N-1}^{} u_\lambda d_\lambda m_\lambda},
	\end{equation}
	where we defined
	\begin{equation}
		u_\lambda \defeq \frac{ d^{N+1} g(N) m_\lambda}{N d_\lambda } = c_\lambda \frac{d^2 \, g(N)}{N \, g(N-1)}.
	\end{equation}
\end{theorem}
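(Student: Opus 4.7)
\emph{Proof plan.} The strategy is to evaluate both the numerator $\Tr(\sigma_N \sqrt{\Pi_N}\, O_A O_{\widetilde A})$ and the denominator $\Tr(\Pi_N O_A^2)$ directly in the Gelfand--Tsetlin basis of $\A_{N,1}^d$, using the explicit formulas in \cref{eq:sigma_n_Lambda,eq:rho_Lambda,lem:PGM_new_form}. Since $\sigma_N$ is supported only on irreps $\Lambda=(\lambda,\0)$ with $\lambda\vdash_d N-1$, both traces reduce to a sum over such $\lambda$, over paths $S\in\Paths_{N-1}(\lambda)$, and over addable cells $a\in\AC_d(\lambda)$. Inside each $(\lambda,S)$-block, $\sigma_N=\ketbra{v_{S,\lambda}}{v_{S,\lambda}}$ by \cref{eq:sigma_n_Lambda}, and \cref{lem:PGM_new_form} gives $\Pi_N=\ketbra{w_{S,\lambda}}{w_{S,\lambda}}$ with $\ket{w_{S,\lambda}}\defeq G_{\lambda,S}\ket{v_{S,\lambda}}$, whose $a$-component equals $\sqrt{m_{\lambda\cup a}/(m_\lambda(d+\cont(a)))}$. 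Combining \eqref{eq:content} with the branching identity $\sum_{a\in\AC_d(\lambda)}d_{\lambda\cup a}=Nd_\lambda$ gives $\braket{w_{S,\lambda}}{w_{S,\lambda}}=1$, so $\Pi_N$ is actually an orthogonal projector and $\sqrt{\Pi_N}=\Pi_N$; this removes the square root from the trace.

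Next I would use that $O_A$ and $O_{\widetilde A}$ are diagonal in this basis: on $\ket{S\to\lambda\cup a\to(\lambda,\0)}$, $O_A$ multiplies by $\sqrt{c_{\lambda\cup a}}$ (reading the Young diagram at level $N$), while $O_{\widetilde A}$ multiplies by $\sqrt{c_\lambda}$ (reading the diagram at level $N-1$, encoded in the prefix $S$, since $O_{\widetilde A}$ acts only on $A_1,\dots,A_{N-1}$). Using the rank-one form $\sigma_N\Pi_N=\braket{v_{S,\lambda}}{w_{S,\lambda}}\ketbra{v_{S,\lambda}}{w_{S,\lambda}}$, the block contribution to the numerator collapses to the scalar $\braket{v_{S,\lambda}}{w_{S,\lambda}}\cdot\bra{w_{S,\lambda}}O_AO_{\widetilde A}\ket{v_{S,\lambda}}$, and rewriting $m_{\lambda\cup a}/(m_\lambda\sqrt{d+\cont(a)})=\sqrt{m_{\lambda\cup a}d_{\lambda\cup a}/(Nm_\lambda d_\lambda)}$ via \eqref{eq:content} expresses each inner product as a clean sum over $\AC_d(\lambda)$. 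Summing over the $d_\lambda$ choices of $S$ and over the $m_\lambda$-dimensional $\U{d}$-irrep register from \eqref{eq:mixed_schur_weyl}, and finally inserting $\sqrt{Nd_\lambda}=\sqrt{\sum_a d_{\lambda\cup a}}$ to match notation, reproduces the numerator stated in the theorem.

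For the denominator, the same diagonal argument gives $\bra{w_{S,\lambda}}O_A^2\ket{w_{S,\lambda}}=\sum_a c_{\lambda\cup a}d_{\lambda\cup a}/(Nd_\lambda)$; substituting $c_\mu=d^Ng(N)m_\mu/d_\mu$ and applying the Pieri identity $\sum_{a\in\AC_d(\lambda)}m_{\lambda\cup a}=d\,m_\lambda$ simplifies the inner sum, after which the multiplicity factors $d_\lambda m_\lambda$ combined with the definition $u_\lambda\defeq d^{N+1}g(N)m_\lambda/(Nd_\lambda)$ yield $\sum_\lambda u_\lambda d_\lambda m_\lambda$. The only delicate bookkeeping step is identifying the level of the Bratteli diagram at which $O_{\widetilde A}$ acts---it depends only on the $S$-prefix, not on the cell $a$---which is precisely what makes the numerator factorise as $\sqrt{c_\lambda}$ times the product of two independent sums over $\AC_d(\lambda)$; once this is in place the rest is routine manipulation with \eqref{eq:content} and the two branching identities above.
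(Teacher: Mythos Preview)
Your computation is correct. The paper, however, does not re-derive the expression from scratch: it simply observes that \cref{eq:F_succ_opt_pPBT_unnormalised} has exactly the same form as the corresponding overlap in the optimal \emph{deterministic} PBT scheme already computed in~\cite{Studziński_2022}, since the pPBT measurement $\Pi_N$ coincides with the dPBT PGM on the support of $\sigma_N$ (the extra $\Pi_0/N$ piece in dPBT is orthogonal to $\sigma_N$). The paper then just substitutes the pPBT optimisation coefficients $c_\mu$ (and the $(N-1)$-port version $c_\lambda$) into the known dPBT formula.

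Your route is genuinely different and more self-contained: you exploit the Gelfand--Tsetlin machinery of \cref{eq:sigma_n_Lambda,eq:rho_Lambda,lem:PGM_new_form} directly, identify the rank-one structure $\sigma_N^{(\lambda,\0)}=\ketbra{v_{S,\lambda}}{v_{S,\lambda}}$, $\Pi_N^{(\lambda,\0)}=\ketbra{w_{S,\lambda}}{w_{S,\lambda}}$ with $\norm{w_{S,\lambda}}=1$, and reduce both traces to explicit inner products. Two points are worth flagging as the substantive content of your argument: (i) the observation that $\Pi_N$ is actually a projector on its support (so the square root disappears), and (ii) the identification that $O_{\widetilde A}$, acting on $A_1,\dots,A_{N-1}$ only, is read off at level $N-1$ of the Bratteli diagram and hence contributes the $S$-independent scalar $\sqrt{c_\lambda}$. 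Both are correct and are exactly what makes the numerator factorise. The paper's approach is shorter because it outsources the calculation; yours has the advantage of not requiring the reader to consult~\cite{Studziński_2022} and of demonstrating that the GT-basis framework set up in this paper handles the case uniformly.
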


\begin{proof}
	Observe that \cref{eq:F_succ_opt_pPBT_unnormalised} is of the same form as the expression for the overlap in optimal dPBT scheme \cite{Studziński_2022}.
	The measurement in the optimal pPBT scheme is closely related to POVMs in standard and optimized dPBT schemes, up to the additional factor $\Pi_0/N$, which does not contribute to the quantity since it is orthogonal to $\sigma_N$. The quantity is thus the same, up to different coefficients of the optimization procedure $O_A,O_{\widetilde A}$, which were substituted in the final expression.
\end{proof}

\subsubsection{Conditional fidelity after failure}

In case of optimal protocol, the ideal state is
\begin{equation}
	\label{eq:id_0_state_opt_pPBT}
	\ket{\Psi_{\mathrm{id}}} = (O_A\otimes I_{B} )
	\ket{\Psi^+}_{AB}
	\ket{\psi^+}_{\bar A_1R_1}
\end{equation}
and the (unnormalised) post-measurement state is
\begin{equation}
	\label{eq:post_measurement_0_opt_pPBT}
	\ket{\Psi_{\mathrm{out}}^{(0)}} =
	\sqrt{\Pi_0}
	\bigl(
	(O_A \otimes I_{B})
	\ket{\Psi^+}_{AB} \otimes
	\ket{\psi^+}_{\bar A_1R_1}
	\bigr).
\end{equation}
Since \cite{Grinko2024} the effect $\Pi_0$ is identity in irreps $(\mu, \square)$ and zero in other ones, one has the following expression for the overlap
\begin{align}
	\Tr( \ket{\Psi_{\mathrm{id}}}\bra{\Psi_{\mathrm{out}}}) = & \Tr( O_A\ket{\psi^+_{AB}}\ket{\psi^+_{\bar A_1R_1}} \bra{\psi^+}_{AB}\bra{\psi^{+}}_{\bar A_1R_1}O_A^\dagger\sqrt{\Pi_0} )                                                            \\
	=                                                         & \frac{ 1}{ d^{N+1}} \Tr (O_AO_A^\dagger \sqrt{\Pi_0}) 			=                                        \sum_{ \mu \vdash_{d-1} N}^{} \frac{ c_\mu}{ d^{N+1}} \Tr (P_\mu M_{(\mu,\square)})
\end{align}
where $M_{(\mu,\square)}$ is a projection onto irrep $(\mu,\square)$. Since $\Tr (P_\mu M_{(\mu,\square)}) = m_{(\mu,\square)} d_\mu$, and the normalisation factor is

\begin{equation}
	\begin{split}
		||\ket{\Psi_{\mathrm{out}}}|| = \frac{ 1}{ \sqrt { d^{N+1}} } \Bigl( \sum_{ \mu\vdash_{d-1} N } c_\mu\Tr (P_\mu M_{(\mu, \square)})\Bigr) ^{1/2}
	\end{split}
\end{equation}
one obtains fhe following
\begin{theorem}
	\label{thm:rec_fidelity_fail_opt_pPBT}
	The recycling fidelity in case of failure after one round of optimal pPBT protocol is given by
	\begin{equation}
		\label{eq:rec_fidelity_fail_opt_pPBT}
		F^\mathrm{rec}_\mathrm{fail}= \frac{ 1}{  \sqrt{ d^{N+1}} } \Bigl( \sum_{ \mu\vdash_{d-1} N }  c_\mu m_{(\mu,\square)} d_\mu\Bigr)^{1/2}
	\end{equation}
\end{theorem}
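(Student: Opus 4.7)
The plan is to evaluate the ratio $F^\mathrm{rec}_\mathrm{fail} = \braket{\Psi_\mathrm{out}^{(0)}}{\Psi_\mathrm{id}^{(0)}} / \|\ket{\Psi_\mathrm{out}^{(0)}}\|_2$ directly, using the explicit forms of $O_A$ and $\Pi_0$. First I would apply the maximally entangled state identity $\bra{\psi^+}(X \otimes I)\ket{\psi^+} = \Tr(X)/d$ to each of the $N+1$ EPR pairs appearing in $\ket{\Psi^+}_{AB} \otimes \ket{\psi^+}_{\bar A_1 R_1}$, so that both the overlap and the squared norm reduce to traces of operators on Alice's system only. Concretely this yields $\braket{\Psi_\mathrm{out}^{(0)}}{\Psi_\mathrm{id}^{(0)}} = \frac{1}{d^{N+1}} \Tr\bigl(O_A O_A^\dagger \sqrt{\Pi_0}\bigr)$ and $\|\ket{\Psi_\mathrm{out}^{(0)}}\|^2 = \frac{1}{d^{N+1}} \Tr\bigl(O_A O_A^\dagger \Pi_0\bigr)$.

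The key structural input, imported from \cite{Grinko2024}, is that the PGM failure effect $\Pi_0$ is supported precisely on the irreps of type $(\mu,\square)$ with $\mu \pt_{d-1} N$ and acts as the identity there; hence $\sqrt{\Pi_0} = \Pi_0 = \sum_{\mu \pt_{d-1} N} M_{(\mu,\square)}$. Substituting the decomposition $O_A O_A^\dagger = \sum_{\mu \pt_d N} c_\mu P_\mu$ from \cref{eq:opt_pPBT_resource}, and using the irrep-dimension formula $\Tr(P_\mu M_{(\mu,\square)}) = m_{(\mu,\square)} d_\mu$, both the overlap and the squared norm collapse to the same scalar $S/d^{N+1}$ with $S \defeq \sum_{\mu \pt_{d-1} N} c_\mu\, m_{(\mu,\square)}\, d_\mu$. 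Note that only partitions $\mu$ of $N$ with at most $d-1$ rows survive, since those are the only ones for which adding a box can yield an irrep with the column shape $\square$ on the right.

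Taking the ratio then produces $F^\mathrm{rec}_\mathrm{fail} = (S/d^{N+1})/\sqrt{S/d^{N+1}} = \sqrt{S/d^{N+1}}$, which is exactly the claimed expression. The only nontrivial ingredient is the irrep support of $\Pi_0$; everything else is bookkeeping with matrix units and the standard partial trace identities. I therefore expect no real obstacle here beyond invoking the PGM description from \cite{Grinko2024} and carefully tracking the $P_\mu \leftrightarrow M_{(\mu,\square)}$ compatibility, since $P_\mu$ acts only on the $N$-qudit register $A$ while $M_{(\mu,\square)}$ acts on the $(N+1)$-qudit register $A\bar A_1$.
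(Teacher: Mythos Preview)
Your proposal is correct and follows essentially the same route as the paper: reduce both the overlap and the squared norm to traces on Alice's system via the EPR identity, use that $\Pi_0$ is the identity on the $(\mu,\square)$ irreps (hence $\sqrt{\Pi_0}=\Pi_0$), expand $O_AO_A^\dagger=\sum_\mu c_\mu P_\mu$, and evaluate $\Tr(P_\mu M_{(\mu,\square)})=m_{(\mu,\square)}d_\mu$. The only point you flag that the paper leaves implicit is the compatibility of $P_\mu$ (on $A$) with $M_{(\mu,\square)}$ (on $A\bar A_1$), which is indeed just the trivial tensoring $P_\mu\otimes I_{\bar A_1}$.
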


\subsubsection{Summary of optimal pPBT}

Our numerical results regarding conditional recycling fidelity both in the case of success and failure are depicted in \cref{fig:recycling_fidelity_opt_pPBT}.

\begin{figure}[!ht]
	\centering
	\begin{subfigure}{.45\textwidth}
		\begin{tikzpicture}
\begin{axis}[
    width=\linewidth,
    height=0.7\linewidth,
    xmin=0, xmax=100,
    ymin=0.83, ymax=0.96,
    xlabel={$N$},
    ylabel={$F_{\mathrm{succ}}^{\mathrm{rec}}$},
    grid=both,
    tick label style={/pgf/number format/fixed},
    yticklabel style={
        /pgf/number format/fixed,
        /pgf/number format/fixed zerofill,
        /pgf/number format/precision=2
    },
]

  \addplot[
    only marks,
    mark=*,
    mark size=1.5pt,
    color=blue,
    mark options={fill=blue}
  ] coordinates {
    (5,0.930696) (10,0.913664) (15,0.906837) (20,0.903080)
    (25,0.900677) (30,0.899000) (35,0.897757) (40,0.896797)
    (45,0.896032) (50,0.895407) (55,0.894887) (60,0.894446)
    (65,0.894068) (70,0.893740) (75,0.893453) (80,0.893198)
    (85,0.892972) (90,0.892769) (95,0.892586) (100,0.892420)
  };

  \addplot[
    only marks,
    mark=*,
    mark size=1.5pt,
    color=orange,
    mark options={fill=orange}
  ] coordinates {
    (5,0.949663) (10,0.913292) (15,0.892687) (20,0.879627)
    (25,0.870633) (30,0.864059) (35,0.859038) (40,0.855073)
    (45,0.851859) (50,0.849198) (55,0.846956) (60,0.845040)
    (65,0.843382) (70,0.841933) (75,0.840655) (80,0.839519)
    (85,0.838502) (90,0.837585) (95,0.836755) (100,0.835999)
  };

\end{axis}
\end{tikzpicture}
	\end{subfigure}
	\hspace{1mm}
	\begin{subfigure}{.45\textwidth}
		\begin{tikzpicture}
\begin{axis}[
    width=\linewidth,
    height=0.7\linewidth,
    xmin=0, xmax=100,
    ymin=0.16, ymax=0.8,
    xlabel={$N$},
    ylabel={$F_{\mathrm{fail}}^{\mathrm{rec}}$},
    grid=both,
    tick label style={/pgf/number format/fixed},
    yticklabel style={
        /pgf/number format/fixed,
        /pgf/number format/fixed zerofill,
        /pgf/number format/precision=1
    },
    legend style={at={(1.02,0.5)},anchor=west},
    legend cell align=left,
]

  \addplot[
    only marks,
    mark=*,
    mark size=1.5pt,
    color=blue,
    mark options={fill=blue}
  ] coordinates {
    (5,0.612372) (10,0.480384) (15,0.408248) (20,0.361158)
    (25,0.327327) (30,0.301511) (35,0.280976) (40,0.264135)
    (45,0.250000) (50,0.237915) (55,0.227429) (60,0.218218)
    (65,0.210042) (70,0.202721) (75,0.196116) (80,0.190117)
    (85,0.184637) (90,0.179605) (95,0.174964) (100,0.170664)
  };
  \addlegendentry{$d = 2$}

  \addplot[
    only marks,
    mark=*,
    mark size=1.5pt,
    color=orange,
    mark options={fill=orange}
  ] coordinates {
    (5,0.784465) (10,0.666667) (15,0.589768) (20,0.534522)
    (25,0.492366) (30,0.458831) (35,0.431331) (40,0.408248)
    (45,0.388514) (50,0.371391) (55,0.356348) (60,0.342997)
    (65,0.331042) (70,0.320256) (75,0.310460) (80,0.301511)
    (85,0.293294) (90,0.285714) (95,0.278693) (100,0.272166)
  };
  \addlegendentry{$d = 3$}

\end{axis}
\end{tikzpicture}
	\end{subfigure}
	\caption{Conditional recycling fidelity for success (left) and failure (right) in optimal pPBT scheme, for $d=2$ (blue dots) and $d=3$ (orange dots).}
	\label{fig:recycling_fidelity_opt_pPBT}
\end{figure}

One has to note that the results presened in \Cref{fig:recycling_fidelity_opt_pPBT} are not in disagreement with the \Cref{thm:two_step_fidelity}, since recycling fidelity $F_\mathrm{rec}$ gives only a sufficient condition.


\section{Conclusions and open questions}

Two main results are presented in this work. Firstly, a detailed analysis of the case of two subsequent successful applications of PBT protocols employing PGM is compared to the teleportation of two quantum states at once, via a multiport-based scheme. This result shows that although such an approach provides smaller entanglement fidelity, the numerics show that for large $N$ the difference vanishes. Thus not only can two states be teleported with the repetitive use of a simpler measurement than a bigger joint measurement employed in the MPBT scheme, but it also means that two teleportations can be performed with a time delay.

What is striking, the two-step deterministic teleportation is close in terms of entanglement fidelity to the MPBT scheme even for small $N$, suggesting that  optimising the second measurement together with keeping the whole resource after the first round, could match the performance of multiport scheme for mutual teleportation of two quantum states.

Moreover, the analysis of resource degradation is performed in the pPBT scheme. It is shown that in case of EPR resource, the degradation vanishes as the number of ports goes to infinity. This means that such a resource can be reused for further application of the pPBT protocol, meaning that it is an economical way to use a quantum resource, the preparation of which can be costly.

An open problem that arises from this work is developing an adapted measurement suited to the new distorted resource. This requires a reformulation of the problem in the form of a semidefinite program and is addressed by the ongoing research.

\section*{Acknowledgements}

MS is supported by the National Science Centre, Poland, Grant Sonata 16 no. 2020/39/D/ST2/01234.
PK is supported by the National Science Centre, Poland, Grant Preludium 20, no. 2021/41/N/ST2/03249.
MO and DG are supported by an NWO Vidi grant (No.VI.Vidi.192.109) and by a National Growth Fund grant (NGF.1623.23.025).
AB is supported by an NWO Vidi grant (Project No VI.Vidi.192.109) and by the European Union (ERC, ASC-Q,
101040624). Views and opinions expressed are however those of the authors
only and do not necessarily reflect those of the European Union or the
European Research Council. Neither the European Union nor the granting
authority can be held responsible for them.

\appendix
\section{Proof of \texorpdfstring{\cref{thm:two_step_fidelity}}{}}
\label{apx:proof}

By the cyclic property of trace,
\begin{equation}
	\label{eq:fidelity_2}
	F_e(\mathcal{N}) = \frac{N(N-1)}{d^{N+4}} \Tr \Big(\sqrt{\Pi_N} \; \widetilde{\Pi}_{N-1} \; \sqrt{\Pi_N} \; O \; \tau_2 \; O  \Big).
\end{equation}
As all POVM elements are unitary-equivariant, we may express all elements of \eqref{eq:fidelity_formula} in the Gelfand--Tsetlin basis. Therefore \eqref{eq:fidelity_2} reads
\begin{equation}
	\label{eq:fidelity_3}
	F_e(\mathcal{N})= \frac{N(N-1)}{d^{N+4}}  \sum_{\Lambda\in \Irr{\A_{N,2}^{d}}} m_\Lambda \Tr
	\sof*{\sqrt{\Pi_N^\Lambda} \; \widetilde{\Pi}_{N-1} \; \sqrt{\Pi_N^\Lambda} \; O^\Lambda \;  \tau_2^\Lambda \; O^\Lambda  }
\end{equation}
where $m_\Lambda$ is the dimension of the corresponding unitary irrep, and $\tau_2^{\Lambda}$ is an irrep $\Lambda$ of $\tau_2$ from \cref{eq:tau_2}.
Notice that, since $\Pi_N^\Lambda, \; \widetilde{\Pi}_{N-1}^\Lambda $ are supported only on irreps $ \Lambda = (\nu,\0)$, the only contributions to \eqref{eq:fidelity_3} are from those irreps. Therefore \eqref{eq:fidelity_3} is equivalent to
\begin{align}
	\label{eq:fidelity_4}
	F_e(\mathcal{N}) & = \frac{N(N-1)}{d^{N+4}}  \sum_{\nu\pt_d N-2} m_\nu
	\Tr \sof*{
	\widetilde{\Pi}_{N-1}^{(\nu,\0)} \; \sqrt{\Pi_N^{(\nu,\0)}} \;O^{(\nu,\0)}\;  \tau_2^{(\nu,\0)} \; O^{(\nu,\0)} \; \sqrt{\Pi_N^{(\nu,\0)}}
	}.
\end{align}
We shall derive formulas for each element in \eqref{eq:fidelity_4}.

Firstly, notice that
\begin{equation}
	\label{eq:sub0}
	( \sqrt{\Pi_N})^{(\nu,\0)} =\sum_{a\in \AC_d (\nu )} \sum_{r\in \RC (\nu\cup a )}  \sum_{S\in \Paths (\nu\cup a /r)} \frac{\ket{w^\nu_{S,r,a}}\bra{{w^\nu_{S,r,a}}}}{|| \ket{w^\nu_{S,r,a} }||}
\end{equation}
where
\begin{equation}
	\label{eq:sub1}
	\ket{w^\nu_{S,r,a}} =\sum_{b\in \AC_d (\nu \cup a)} \sqrt{\frac{d_{\nu\cup a\cup b}}{N d_{\nu\cup a}}} \ket{S \to \nu\cup a \to \nu \cup a \cup b \to (\nu \cup a,\0) \to (\nu,\0)}.
\end{equation}
Second, we can express $\widetilde{\Pi}_{N-1}$ using \eqref{eq:PI_N}, namely
\begin{equation}
	\label{eq:sub2}
	\widetilde{\Pi}_{N-1}^{\Lambda} =
	\sum_{\nu\pt_d N-2}
	\sum_{S\in \Paths (\nu )}
	\widetilde{G}_{\nu,S}^\Lambda \,(U^\Lambda\,\sigma_N^\Lambda \,U^\Lambda )\, \widetilde{G}_{\nu,S}^\Lambda
\end{equation}
where $U^\Lambda=\sigma_{N-1}^\Lambda\sigma_{N+1}^\Lambda$ and hence $U=U^\dagger$ and $U^\Lambda\sigma_N^\Lambda U^\Lambda$ is a contraction between $N-1$ and $N+2$, and
\begin{equation}
	\label{eq:sub3}
	\widetilde{G}_{\nu,S}
	= \sum_{a\in \AC_d (\nu )}
	\frac{1}{\sqrt{d+\cont{(a)}}} E_{S \to \nu \cup a,S \to \nu \cup a}
	\otimes I^{\otimes 3}.
\end{equation}
Notice that $\sigma_N^\Lambda$ for $\Lambda = (\nu,\0)$ has the following form:
\begin{equation}
	\label{eq:sub4}
	\sigma_N^{(\nu,\0)}
	=
	\sum_{a\in \AC_d (\nu)} \sum_{r\in \RC (\nu\cup a)} \sum_{S\in \Paths (\nu\cup a/ r)}
	\frac{\ket{\widetilde{v^\nu_{S,r,a}}}\bra{\widetilde{v^\nu_{S,r,a}}}}{|| \ket{\widetilde{v^\nu_{S,r,a}}}||}
\end{equation}
where
\begin{equation}
	\label{eq:sub5}
	\ket{\widetilde{v^\nu_{S,r,a}}} =
	\sum_{b\in \AC_d(\nu \cup a)}
	\sqrt{\frac{m_{\nu\cup a\cup b} }{m_{\nu\cup a}}}
	\ket{S \to \nu\cup a \to \nu \cup a \cup b \to (\nu \cup a,\0) \to (\nu,\0)}
\end{equation}
Third, we compute
\begin{equation}
	\label{eq:sub6}
	O^{(\nu,\0)}\;  \tau_2^{(\nu,\0)} \; O^{(\nu,\0)} =  \sum_{S\in \Paths_{N-2}(\nu)}
	\ket{v^\nu_{S}} \bra{v^\nu_{S}}
\end{equation}
where
\begin{equation}
	\label{eq:sub7}
	\ket{v^\nu_{S}}=\sum_{a\neq b\in \AC_d (\nu )} \sqrt{c_{\nu \cup a \cup b}} \sqrt{\frac{m_{\nu\cup a\cup b}}{m_{\nu}}} \ket{S \to \nu\cup a \to \nu \cup a \cup b \to (\nu \cup a,\0) \to (\nu,\0)}.
\end{equation}
Lastly, we denote by
\begin{equation}
	\label{eq:sub8}
	\ket{\widetilde{w_{S,a,a}^\nu}} \defeq G_{\nu,S} \ket{w_{S,a,a}^\nu}=
	\sum_{b\in \AC_d (\nu \cup a)} \sqrt{\frac{d_{\nu\cup a\cup b}}{N d_{\nu\cup a} (d+\cont(a))}} \ket{S \to \nu\cup a \to \nu \cup a \cup b \to (\nu \cup a,\0) \to (\nu,\0)}.
\end{equation}
Combining \cref{eq:sub0,eq:sub1,eq:sub2,eq:sub3,eq:sub4,eq:sub5,eq:sub6,eq:sub7,eq:sub8}, the entanglement fidelity \eqref{eq:fidelity_4} becomes
\begin{multline}
	\label{eq:fidelity_5}
	F_e(\mathcal{N})= \frac{N(N-1)}{d^{N+4}}
	\sum_{\nu\pt_d N-2}
	d_\nu m_\nu \\ \times
	\sum_{\substack{a, a', a'' \\ \in \AC_d (\nu)}}
	\sum_{\substack{r \in \RC(\nu \cup a) \\ r' \in \RC(\nu \cup a') \\ r'' \in \RC(\nu \cup a'')}}
	\frac{\braket{v_S^\nu }{w_{S,a,r}^\nu}}{||w_{S,a,r}^\nu||}
	\braket{\widetilde{w_{S,a,r}^\nu}}{U^{(\nu,\0)} | \widetilde{v_{S,a',r'}^\nu}}
	\braket{ \widetilde{v_{S,a',r'}^\nu}}{U^{(\nu,\0)}|\widetilde{w_{S,a'',r''}^\nu}}
	\frac{\braket{w_{S,a'',r''}^\nu}{v_S^\nu }}{||w_{S,a'',r''}^\nu||}
\end{multline}
Notice that $\braket{w_{S,a,r}^\nu}{v_S^\nu }=0$ for $a \neq r$, hence \eqref{eq:fidelity_5} becomes
\begin{equation}
	\label{eq:fidelity_6}
	F_e(\mathcal{N})= \frac{N(N-1)}{d^{N+4}}
	\sum_{\nu\pt_d N-2}
	d_\nu m_\nu
	\sum_{\substack{a,a',a'' \\ \in \AC_d (\nu )}}
	\frac{\braket{v_S^\nu }{w_{S,a,a}^\nu}}{||w_{S,a,a}^\nu||}
	\braket{\widetilde{w_{S,a,a}^\nu}}{U^{(\nu,\0)} | \widetilde{v_{S,a',a'}^\nu}}
	\braket{ \widetilde{v_{S,a',a'}^\nu}}{U^{(\nu,\0)}|\widetilde{w_{S,a'',a''}^\nu}}
	\frac{\braket{w_{S,a'',a''}^\nu}{v_S^\nu }}{||w_{S,a'',a''}^\nu||}.
\end{equation}
Now we compute the overlaps in \cref{eq:fidelity_6}. First, notice from \cref{eq:sub1,eq:sub3} that
\begin{equation}
	\label{eq:overlap_1}
	\braket{v_S^\nu }{w_{S,a,a}^\nu}=
	\sum_{b\in \AC_d (\nu \cup a)}
	\sqrt{\frac{d_{\nu\cup a\cup b}}{N d_{\nu\cup a}}} \sqrt{c_{\nu \cup a \cup b}} \sqrt{\frac{m_{\nu\cup a\cup b}}{m_{\nu}}}.
\end{equation}
And further using the definition of $c_{\nu\cup a \cup b}$, see \cref{eq:c_mu_pPBT}, we arrive at
\begin{equation}
	\label{eq:overlap_1b}
	\braket{v_S^\nu }{w_{S,a,a}^\nu} = \sqrt{\frac{d^N}{N d_{\nu\cup a}m_\nu}}
	\of[\Big]{\sum_{b\in \AC_d (\nu \cup a)} \sqrt{f_{\nu \cup a\cup b}}}.
\end{equation}
For optimal pPBT state we have, using \cref{eq:c_mu_pPBT}, we have
\begin{equation}
	\braket{v_S^\nu }{w_{S,a,a}^\nu}
	=  \of[\Big]{\sum_{b\in \AC_d (\nu \cup a)} m_{\nu \cup a\cup b}} \sqrt{\frac{d^N g(N)}{N d_{\nu\cup a}m_\nu}}
	=  d\, m_{\nu \cup a} \sqrt{\frac{d^N g(N)}{N d_{\nu\cup a}m_\nu}}.
\end{equation}

Secondly, we have that
\begin{equation}
	\label{eq:overlap_2}
	\braket{\widetilde{w_{S,a,a}^\nu}}{U^{(\nu,\0)} | \widetilde{v_{S,a',a'}^\nu}}
	=\sum_{\substack{b\in \AC (\nu\cup a )\\b'\in \AC (\nu \cup a')}}
	\frac{1}{\sqrt{d+\cont{(a)}}}
	\sqrt{\frac{d_{\nu\cup a\cup b}m_{\nu\cup a'\cup b'}}{d_{\nu\cup a}m_{\nu\cup a'}}}
	\bra{\substack{S \to \nu\cup a \to \nu \cup a \cup b \to \\ \to (\nu \cup a,\0) \to (\nu,\0)}}
	U
	\ket{\substack{S \to \nu\cup a' \to \nu \cup a' \cup b' \to \\ \to (\nu \cup a',\0) \to (\nu,\0)}}.
\end{equation}
Notice that non-vanishing contributions to such expressions are if and only if $\{a,b\}=\{a',b'\}$, hence we can distinguish two cases: either $a=a'$ and $b=b'$, or $b=a'\neq a=b'$.
Utilizing Young--Yamanouchi basis formulas for $U=\sigma_{N-1}^\Lambda\sigma_{N+1}^\Lambda$, we arrive at:
\begin{multline}
	\label{eq:overlap_2b}
	\braket{\widetilde{w_{S,a,a}^\nu}}{U^{(\nu,\0)} | \widetilde{v_{S,a',a'}^\nu}}
	= \\[5pt] =
	\begin{cases}
		\displaystyle \sum_{b\in \AC_d(\nu \cup a)}  \frac{m_{\nu\cup a\cup b}}{m_{\nu\cup a}}
		\frac{1}{\sqrt{(d+\cont{(a)})(d+\cont{(b)})}} \frac{1}{(\cont{(a)-\cont{(b)}})^2}                                                                                             &
		\text{if $a=a'\in \AC_d(\nu)$},                                                                                                                                                 \\
		\displaystyle \frac{m_{\nu\cup a\cup a'}}{\sqrt{m_{\nu\cup a}m_{\nu\cup a'}}} \frac{1}{\sqrt{(d+\cont{(a)})(d+\cont{(a')})}} \Big(1-\frac{1}{(\cont{(a)-\cont{(a')}})^2}\Big) &
		\text{if $a\neq a'\in \AC_d(\nu)$}.
	\end{cases}
\end{multline}
After rewriting of \cref{eq:fidelity_5}, we get the expression:
\begin{align}
	F_e(\mathcal{N}) & = \frac{1}{d^4} \sum_{\nu\pt_d N-2} \sum_{\substack{a,a',a'' \in \AC_d (\nu )}} \sum_{\mu,\mu'' \pt_d N} v_\mu S^{\nu}_{a,\mu} H^\nu_{a,a'} H^\nu_{a',a''} S^{\nu}_{a'',\mu''} v_{\mu''} = \frac{1}{d^4} v\tp M v,
\end{align}
where the vector $v$ has components labeled every partition $\mu \pt_d N$:
\begin{align}
	v_\mu & \defeq \sqrt{f_\mu},
\end{align}
and the matrices $H^\nu, S^{\nu}$ and $M$ are defined as
\begin{align}
	H^\nu_{a,a'}    & \defeq
	\begin{dcases}
		\sum_{b\in \AC_d(\nu \cup a)} \sqrt{\frac{d + \cont(b)}{d + \cont(a)}} \frac{q(b|\nu \cup a)}{(\cont(a)-\cont(b))^2} &
		\text{if $a=a'\in \AC_d(\nu)$},                                                                                        \\
		\sqrt{q(a|\nu \cup a')q(a'|\nu \cup a)} \of*{1- \frac{1}{(\cont(a)-\cont(a'))^2}}                                    &
		\text{if $a\neq a'\in \AC_d(\nu)$},
	\end{dcases}                              \\
	S^{\nu}_{a,\mu} & \defeq \delta_{\mu,\AC_d(\nu \cup a)} \sqrt{\frac{1}{q(a|\nu)}} \sqrt{\frac{1}{\sum_{b \in \AC_d(\nu \cup a)} q(b|\nu \cup a)}} , \\
	M               & \defeq \sum_{\nu \pt_d N-2} M^\nu, \quad M^\nu \defeq X^\nu{\tp} X^\nu , \quad X^\nu \defeq H^\nu S^\nu ,                         \\
	q(a | \lambda)  & \defeq \frac{d_{\lambda\cup a}}{N \, d_\lambda},
\end{align}
where the delta function is defined as $\delta_{\mu,\AC_d(\nu \cup a)} \defeq 1$ iff there exists $b \in \AC_d(\nu \cup a)$ such that $\mu = \nu \cup a \cup b$, and $\delta_{\mu,\AC_d(\nu \cup a)} \defeq 0$ otherwise. Note that for every $\lambda$ $q(a | \lambda)$ is a measure over all addable cells $\AC(\lambda)$, i.e.\ $\sum_{a \in \AC(\lambda)} q(a | \lambda) = 1$.

\section{Proof of \texorpdfstring{\cref{thm:two_step_psucc}}{}}
\label{apx:proof_psuc}
\begin{proof}
	Recall that the average probability of success is defined as
	\begin{equation}
		p_{\mathrm{succ}} \defeq \Tr \sof*{\mathcal{N}(I/d \otimes I/d)},
	\end{equation}
	which, using tensor network representation, can be rewritten as
	\begin{equation}
		p_{\mathrm{succ}} = \frac{1}{d^{N+2}} \sum_{i=1}^{N} \sum_{\substack{j=1 \\ j \neq i}}^{N-1} \Tr \sof{\Tr_{\bar{A}_1,A_i} \sof{\sqrt{\Pi_i} O O^\dagger \sqrt{\Pi_i}}
			\Tr_{\bar{A}_2}\sof{ \widetilde{\Pi}_{j}}}
	\end{equation}
	Using equivariance properties, we conclude that
	\begin{equation}
		\label{eq:psuc_tr}
		p_{\mathrm{succ}} = \frac{N}{d^{N+2}}  \Tr \sof[\bigg]{\Tr_{\bar{A}_1,A_N} \sof{\sqrt{\Pi_N} O O^\dagger \sqrt{\Pi_N}}
			\Tr_{\bar{A}_2}\sof[\bigg]{\sum_{j=1}^{N-1} \widetilde{\Pi}_{j}}}
	\end{equation}
	Recall, that our two-step pPBT protocol has the following conditions on the effect operators in the first measurement round for every irrep $\Lambda \in \hat \A_{N,1}^d$:
	\begin{align}\label{def:G_lambda_ppbt}
		\Pi_{0}^\Lambda =
		\begin{cases}
			I &
			\text{if $\Lambda = (\mu,\square)$}, \\
			0 &
			\text{if $\Lambda = (\lambda,\0)$},
		\end{cases}
		\quad\quad\quad
		\sum_{j=1}^N \Pi_{j}^{\Lambda} =
		\begin{cases}
			0 &
			\text{if $\Lambda = (\mu,\square)$}, \\
			I &
			\text{if $\Lambda = (\lambda,\0)$},
		\end{cases}
	\end{align}
	and for the second round for every irrep $\tilde \Lambda \in \hat \A_{N-1,1}^d$ we have
	\begin{align}\label{def:G_lambda_ppbt2}
		\widetilde{\Pi}_{0}^{\tilde \Lambda} =
		\begin{cases}
			I &
			\text{if $\tilde \Lambda = (\lambda,\square)$}, \\
			0 &
			\text{if $\tilde \Lambda = (\nu,\0)$},
		\end{cases}
		\quad\quad\quad
		\sum_{j=1}^{N-1} \widetilde{\Pi}_{j}^{\tilde \Lambda} =
		\begin{cases}
			0 &
			\text{if $\tilde \Lambda = (\lambda,\square)$}, \\
			I &
			\text{if $\tilde \Lambda = (\nu,\0)$}.
		\end{cases}
	\end{align}
	So we can write the second term inside the trace of \cref{eq:psuc_tr} via \cref{lem:partial_trace_E} in the Gelfand--Tsetlin basis for every irrep $\lambda \pt_d N-1$:
	\begin{equation}
		\of[\bigg]{\Tr_{\bar{A}_2}\sof[\bigg]{\sum_{j=1}^{N-1} \widetilde{\Pi}_{j}}}^{(\lambda)} = \sum_{S \in \Paths_{N-1}(\lambda)} \of[\bigg]{\sum_{r \in \RC(\lambda)}\frac{m_{\lambda \setminus r}}{m_\lambda}} \proj{S}
	\end{equation}
	Consider now the first term
	\begin{equation*}
		\frac{1}{d^N} \Tr_{\bar{A}_1,A_N} \sof{\sqrt{\Pi_N} O O^\dagger \sqrt{\Pi_N}},
	\end{equation*}
	and recall how $\Pi_N$ and $O$ are written in the Gelfand--Tsetlin basis:
	\begin{align}
		( \sqrt{\Pi_N})^{(\lambda,\0)}      & = \sum_{S \in \Paths_{N-1} (\lambda)} \frac{\proj{w^\lambda_{S}}}{|| \ket{w^\lambda_{S} }||},                                                             \\ \ket{w^\lambda_{S}} &= \sum_{a \in \AC_d(\lambda)} \sqrt{\frac{d_{\lambda \cup a}}{N d_{\lambda}}} \ket{S \to \lambda \cup a \to (\lambda, \varnothing)}, \\
		|| \ket{w^\lambda_{S} }||^2         & = \sum_{a \in \AC_d(\lambda)} \frac{d_{\lambda \cup a}}{N d_{\lambda}},                                                                                   \\
		\frac{O^{(\lambda,\0)}}{\sqrt{d^N}} & = \sum_{a \in \AC_d(\lambda)} \sqrt{\frac{f_{\lambda \cup a}}{d_{\lambda \cup a} m_{\lambda \cup a}}} \sum_{T \in \Paths_{N} (\lambda \cup a) } \proj{T}.
	\end{align}
	Note that
	\begin{equation}
		\frac{1}{d^N}|| O^{(\lambda,\0)} \ket{w^\lambda_{S} }||^2 = \sum_{a \in \AC_d(\lambda)} \frac{d_{\lambda \cup a}}{N d_{\lambda}} \frac{f_{\lambda \cup a}}{d_{\lambda \cup a} m_{\lambda \cup a}} = \frac{1}{N d_{\lambda}} \sum_{a \in \AC_d(\lambda)} \frac{f_{\lambda \cup a}}{m_{\lambda \cup a}},
	\end{equation}
	so we can easily calculate
	\begin{align}
		\frac{1}{d^N}(\sqrt{\Pi_N} O O^\dagger \sqrt{\Pi_N})^{(\lambda,\0)}
		 & = \frac{1}{N d_{\lambda}} \of[\bigg]{\sum_{a \in \AC_d(\lambda)} \frac{f_{\lambda \cup a}}{m_{\lambda \cup a}}} \sum_{S \in \Paths_{N-1} (\lambda)} \frac{\proj{w^\lambda_{S}}}{|| \ket{w^\lambda_{S} }||^2},
	\end{align}
	where we slightly abused the notation since $|| \ket{w^\lambda_{S} }||^2$ and $|| \ket{O w^\lambda_{S} }||^2$ do not depend on $S$. Then using \cref{lem:partial_trace_E} it is easy to compute:
	\begin{align}
		\of*{\frac{1}{d^N}\Tr_{\bar{A}_1,A_N} \sof{\sqrt{\Pi_N} O O^\dagger \sqrt{\Pi_N}}}^{(\lambda)} & = \frac{1}{N d_{\lambda}} \of[\bigg]{\sum_{a \in \AC_d(\lambda)} \frac{f_{\lambda \cup a}}{m_{\lambda \cup a}}}
		\sum_{S \in \Paths_{N-1} (\lambda)} \of[\bigg]{\sum_{a \in \AC_d(\lambda)} \frac{ d_{\lambda \cup a}}{N d_{\lambda}}} \frac{\proj{S}}{|| \ket{w^\lambda_{S} }||^2}                                               \\
		                                                                                               & =
		\frac{1}{N d_{\lambda}} \of[\bigg]{\sum_{a \in \AC_d(\lambda)} \frac{f_{\lambda \cup a}}{m_{\lambda \cup a}}}
		\sum_{S \in \Paths_{N-1}(\lambda)} \proj{S}.
	\end{align}
	Now, we can combine everything to get the claimed formula for the probability of success:
	\begin{align}
		p_{\mathrm{succ}} & = \frac{N}{d^2} \sum_{\lambda \pt_d N-1}
		\frac{1}{N d_{\lambda}} \of[\bigg]{\sum_{a \in \AC_d(\lambda)} \frac{f_{\lambda \cup a}}{m_{\lambda \cup a}}}  \of[\bigg]{\sum_{r \in \RC(\lambda)}\frac{m_{\lambda \setminus r}}{m_\lambda}} m_\lambda d_\lambda \\
		                  & = \frac{1}{d^2} \sum_{\lambda \pt_d N-1}
		\of[\bigg]{\sum_{a \in \AC_d(\lambda)} \frac{f_{\lambda \cup a}}{m_{\lambda \cup a}}}
		\of[\bigg]{\sum_{r \in \RC(\lambda)} m_{\lambda \setminus r}}.
	\end{align}
	Now we consider the specific choice of all $f_\mu$ coefficients which corresponds to the optimal resource in one round pPBT. Namely, for every $\mu \pt_d N$ we define
	\begin{equation}
		f_\mu = \frac{ m_\mu^2}{\sum_{\mu' \pt_d N} m_{\mu'}^2}.
	\end{equation}
	Using the fact
	\begin{equation}
		d \cdot m_\lambda = \sum_{a \in \AC_d(\lambda)} m_{\lambda \cup a,}
	\end{equation}
	we get
	\begin{align}
		p_{\mathrm{succ}} & = \frac{1}{d} \frac{1}{\sum_{\mu \pt_d N} m_{\mu}^2} \sum_{\lambda \pt_d N-1}  m_\lambda \of[\bigg]{\sum_{r \in \RC(\lambda)} m_{\lambda \setminus r}} \\
		                  & = \frac{1}{d} \frac{1}{\sum_{\mu \pt_d N} m_{\mu}^2} \of[\bigg]{ \sum_{\nu \pt_d N-2} \sum_{a \in \AC_d(\nu)} m_{\nu \cup a} m_{\nu}}                  \\
		                  & = \frac{1}{d} \frac{1}{\sum_{\mu \pt_d N} m_{\mu}^2} \of[\bigg]{d \sum_{\nu \pt_d N-2} m_{\nu}^2}                                                      \\
		                  & = \frac{\sum_{\nu \pt_d N-2} m_{\nu}^2}{\sum_{\mu \pt_d N} m_{\mu}^2} = \frac{\binom{N-2+d^2-1}{d^2-1}}{\binom{N+d^2-1}{d^2-1}}                        \\
		                  & = \frac{N (N-1)}{(N+d^2-1)(N+d^2-2)}.
	\end{align}

\end{proof}

\printbibliography

\end{document}